\documentclass[11pt]{article}
\usepackage{psfrag}
\usepackage{graphicx}
\usepackage{mathrsfs}
\usepackage{amsfonts}
\usepackage{amssymb}
\usepackage{amsmath}

\usepackage{ifthen}
\newcommand{\withnotes}{1}

\newcommand{\notes}[2]{%
\ifthenelse{\equal{\withnotes}{0}}
{}
{\ifthenelse{\equal{#1}{0}}
 {}
 {#2}
}
}


\setlength{\topmargin}{-.75in}   
\setlength{\oddsidemargin}{0in}  
\setlength{\evensidemargin}{0in} 
\setlength{\textwidth}{6.5in}
\setlength{\textheight}{9.3in}




%

\newcommand{\I}{\ensuremath{\mathtt{i}}}

\newcommand{\MM}{\ensuremath{\mathtt{m}}}
\newcommand{\PP}{\ensuremath{\mathtt{p}}}
\newcommand{\SSS}{\ensuremath{\mathtt{s}}}

 \newtheorem{theorem}{Theorem}
 \newtheorem{lemma}{Lemma}
 \newtheorem{corollary}{Corollary}
 
 \newtheorem{property}{Property}

\def\boxit#1{\vbox{\hrule\hbox{\vrule\kern3pt
  \vbox{\kern3pt#1\kern3pt}\kern3pt\vrule}\hrule}} 
\def\Box{\rule{2mm}{3mm}} 

 \newenvironment{proof}{\trivlist\item[]\emph{Proof}:}%
 {\unskip\nobreak\hskip 1em plus 1fil\nobreak$\Box$
 \parfillskip=0pt%
 \endtrivlist}
{\unskip\nobreak\hskip 1em plus 1fil\nobreak$\Box$
\parfillskip=0pt%
\endtrivlist}
{\unskip\nobreak\hskip 2em plus 1fil\nobreak$\Box$
\parfillskip=0pt%
\endtrivlist}
{\unskip\nobreak\hskip 2em plus 1fil\nobreak$\Box$
\parfillskip=0pt%
\endtrivlist}

\newcommand{\iparagraph}[1]{\vspace*{0ex}\paragraph{\normalfont\textit{#1}}}
\newcommand{\sparagraph}[1]{\vspace*{-2.5ex}\paragraph*{#1}}


\newcommand{\RAP}{{\normalfont\textsc{rap}}}

\newcommand{\suf}[1]{T_{#1}}


\newcommand{\junk}[1]{}

\newcommand{\M}{\ensuremath{\mbox{\tt\char`\$}}}

\newcommand{\collis}[1]{C}

\newcommand{\card}[1]{\!\left\vert#1\right\vert}
\newcommand{\acard}[1]{\!\left\|#1\right\|}



\newcommand{\set}[1]{\!\left\{#1\right\}}
\newcommand{\seq}[1]{\!\left\langle#1\right\rangle}

\newcommand{\lcp}[1]{\mathit{lcp}\!\left({#1}\right)}

\newcommand{\Oh}[1]{O \!\left ( #1 \right )}

\newcommand{\Thetah}[1]{\Theta \!\left ( #1 \right )}

\newcommand{\ceil}[1]{\!\left\lceil #1 \right\rceil}

\newcommand{\subpro}[1]{\mathcal{P}_{#1}}

\newcommand{\slab}[1]{{\ell}_{#1}}

\newcommand{\subprorank}[1]{\mathcal{R}\!\left(\subpro{#1}\right)}

\newcommand{\qlexord}{\triangleleft}
\newcommand{\qequiv}{\simeq}

\newcommand{\subdept}{\sqsubseteq^+}
\newcommand{\arrsub}{\mbox{\texttt{Suff}}}

\newcommand{\sublist}{\mbox{\texttt{SubList}}}

\newcommand{\rankstrut}{\mbox{\texttt{Ranks}}}
\newcommand{\mselmset}{\textsc{MselMset}}

\newcommand{\gsolved}{\texttt{sol\_g}}
\newcommand{\gunsolved}{\texttt{unsolved\_g}}
\newcommand{\gexhausted}{\texttt{exhausted\_g}}
\newcommand{\gundecided}{\texttt{undecided\_g}}
\newcommand{\gjoinable}{\texttt{joinable\_g}}
\newcommand{\gsljoinable}{\texttt{slicejoinable\_g}}

\newcommand{\gpairs}{\texttt{RkSuff}}
\newcommand{\visitlist}{\texttt{Con}}
\newcommand{\visitsum}{\texttt{SCon}}
\newcommand{\visitflag}{\texttt{F}}
\newcommand{\visitlead}{\texttt{Lead}}
\newcommand{\grouping}{\textsc{Group}}
\newcommand{\skipvisit}{\textsc{SkipVisit}}
\newcommand{\guidevisit}{\textsc{GuideVisit}}
\newcommand{\leadvisit}{\textsc{LeadVisit}}
\newcommand{\suffixvisit}{\textsc{SuffixVisit}}
\newcommand{\pruned}{\texttt{Pruned}}
\newcommand{\slicerec}{\textsc{SliceRec}}
\newcommand{\sliceop}{\textsc{Slice}}
\newcommand{\joinop}{\textsc{Join}}
\newcommand{\slicejoinop}{\textsc{SliceJoin}}
\newcommand{\rapbpart}{\mbox{\it{}rap}}
\newcommand{\tend}{\ensuremath{\mbox{\tt\char`\$}}}
\newcommand{\lcpsublist}{\texttt{LcpList}}
\newcommand{\neighsublist}{\texttt{NeighList}}

\newcommand{\termone}{\Gamma}
\newcommand{\termtwo}{\Lambda}
\newcommand{\termthree}{\Phi}

\newcommand{\ssubpro}[1]{\mathcal{P}_{#1}}
\newcommand{\ssubprorank}[1]{\mathcal{R}\!\left(\subpro{#1}\right)}
\newcommand{\soffset}[1]{\mbox{\it{}off}_{#1}}
\newcommand{\sless}[1]{\mbox{\it{}less}_{#1}}

\newcommand{\ppair}[1]{\left\langle{}#1\right\rangle}

\newcommand{\neigh}[1]{\mathcal{N}_{#1}}

\newcommand{\figtextfont}{\scriptsize}

\newenvironment{enumroman}
{

\begin{enumerate}}
{\end{enumerate}}

\newenvironment{enumalpha}
{

\begin{enumerate}}
{\end{enumerate}}



%

\title{Partial Data Compression and Text Indexing\\via Optimal Suffix Multi-Selection}
\author{
G.\ Franceschini\thanks{Dipartimento di Informatica,
Universit\`a ``La Sapienza'' di Roma, {\tt francesc@di.uniroma1.it}}
\and R. Grossi\thanks{Dipartimento di Informatica,
Universit\`a di Pisa, {\tt grossi@di.unipi.it}}
\and S. Muthukrishnan\thanks{Rutgers University, {\tt muthu@cs.rutgers.edu}}
}

\begin{document}
\date{\today}
 \maketitle
\thispagestyle{empty}

\begin{abstract}
  Consider an input text string $T \equiv T[1,N]$ drawn from an
  unbounded alphabet, so text positions can be accessed using comparisons. 
  We study
  \emph{partial} computation in suffix-based problems for Data Compression and Text
  Indexing such as 
  \begin{itemize}
  \item retrieve any segment of $K \leq N$ consecutive symbols from
    the Burrows-Wheeler transform of $T$, which is at the heart of the
    \texttt{bzip2} family of text compressors, and
  \item retrieve any chunk of $K\leq N$ consecutive entries of the
    Suffix Array or the Suffix Tree, two popular Text Indexing data
    structures for~$T$.
  \end{itemize}

Prior literature would take $O(N\log N)$ comparisons (and time) to solve these problems
by solving the \emph{total} problem of building the entire 
Burrows-Wheeler transform or Text Index for~$T$, and
performing a post-processing to single out the wanted portion. 
The technical challenge is that the suffixes of interest are
potentially of size $O(KN)$ and overlap in intricate ways: we have to
use structural properties of these overlaps to avoid rescanning them
repeatedly.

We introduce a novel adaptive approach to partial computational
problems above, and solve both the partial problems in 
\[
 O(K \log K + N)
\]
comparisons and time, improving the best known running times of $O(N
  \log N)$ for $K=o(N)$. 

These partial-computation problems are intimately
  related since they share a common bottleneck: the
  \emph{suffix multi-selection} problem, which is to output the
  suffixes of rank $r_1, r_2, \ldots, r_K$ under the lexicographic
  order, where $r_1 < r_2 < \cdots < r_K$, $r_i \in [1,N]$.  Special
  cases of this problem are well known: $K=N$ is the suffix sorting
  problem that is the workhorse in Stringology with hundreds of
  applications, and $K=1$ is the recently studied suffix selection.

  We show that suffix multi-selection can be solved in 
  \[
  \Thetah{N\log N - \sum_{j=0}^K \Delta_j \log \Delta_j+N}
  \]
  time and comparisons, where $r_0 = 0$, $r_{K+1} = N+1$, and
  $\Delta_j = r_{j+1} - r_j$ for $0 \leq j \leq K$.  This is
  asymptotically optimal, and also matches the
   bound in \cite{dobkin:munro} for multi-selection on
  atomic elements (not suffixes). Matching the bound known for atomic elements
  for strings is a long running theme and challenge from $70$'s, which 
  we achieve for the suffix multi-selection problem. 
 The partial suffix problems as well as the suffix multi-selection problem have many 
 applications.  
\end{abstract}

\newpage
\setcounter{page}{1}
\pagestyle{plain}

\section{Introduction}
\label{sec:introduction}

Consider an input text string $T \equiv T[1,N]$ and the set $S$ of its
suffixes $\suf{i} \equiv T[i,N]$ ($1 \leq i \leq N$) under the
lexicographic order, where $T[N]$ is an endmarker symbol $\tend$
smaller than any other symbol in $T$.  The alphabet $\Sigma$ from
which the symbols in $T$ are drawn is unbounded: as is standard in 
Stringology,  we assume that any two symbols in $\Sigma$ can only be compared and
this takes $O(1)$ time. Hence, comparing symbolwise any two suffixes
in~$S$ may require $O(N)$ time in the worst case.\footnote{Number of comparisons is asymptotically the same as the time for all the algorithms discussed throughout this paper, and hence we will use them interchangeably.}

We study \emph{partial computation} problems in Data Compression and
Text Indexing for $T$ where we want to quickly get a sense of the
lexicographic distribution of the text suffixes.

\sparagraph{Partial Data Compression. }
The Burrows-Wheeler transform $L$ (a.k.a.~\textsc{bwt}) \cite{Burrows:1994:BSL} of
text string~$T$ is at the heart of the \texttt{bzip2} family of text
compressors, and has many applications. The $r$th symbol in $L$ is
$T[j-1]$ if and only if $\suf{j}$ is the $r$th suffix in the sorting
(except the borderline case $j=1$, for which we take $T[N]$).
There are now efficient methods that convert $T$ to $L$ and 
vice versa, taking $O(N \log N)$ time for unbounded alphabets 
in the worst case.

A partial compression problem is to consider a range $L' \equiv
L[i..i+K-1]$ of $K$ consecutive symbols in~$L$.  Can we compute $L'$
efficiently? More precisely, can we compute $L'$ without computing 
the entire $L$? This is an interesting building block for partial estimation of data
compression ratio.

There is prior work that studies partial compression problems 
where a range $T[i,i+K-1]$ needs to be compressed (by Lempel-Ziv or Burrows-Wheeler or 
one of the other compression methods).  This can be accomplished in 
$O(K\log K + N)$ time using off-the-shelf tools. Instead, what is 
interesting in our question above is that we seek a range in
the \emph{compressed}
string $L$, and computing $L[i..i+K-1]$ amounts to sorting a
set of irregular, arbitrarily scattered suffixes $T_{j_1}, T_{j_2}, \ldots,
T_{j_K}$ for which we do not know  the positions $j_1, j_2,
\ldots, j_K$ {\em a priori}!

\sparagraph{Partial Text Indexing.}
Several text indexes, such as suffix arrays~\cite{Manber93} and suffix
trees~\cite{McC,Wei}, are based on the lexicographic order of the
suffixes in the text $T$ and the longest common prefix (\emph{lcp})
information among them. 
These can be computed in $O(N\log N)$ time using well-known 
 algorithms that exploit properties of suffixes,\footnote{ It
has an $O(N)$ time solution since 70's for constant-size
alphabet~$\Sigma$~\cite{McC,Wei} and, more recently, for (bounded
universe) integer alphabet~$\Sigma$~\cite{F}; otherwise, it uses
$O(N\log |\Sigma|)= O(N\log N)$ comparisons in unbounded
alphabets~$\Sigma$.}
while the rest of the indexes can be easily built in $O(N)$ time.

We define a
\emph{$K$-partial text index} as a range $[i,i+K-1]$ of the index (consecutive
entries of the suffix array or leaves from the suffix tree):
this corresponds to  a sorted set of irregularly scattered
suffixes $T_{j_1}, T_{j_2}, \ldots, T_{j_K}$ for which we do not know
their positions $j_1, j_2, \ldots, j_K$ {\em a priori}, together with the
length of their longest common prefix (\emph{lcp}).  
The technical challenge here is similar to partial compression above, but 
additionally, we need to compute (\emph{lcp}) information. 

We refer the reader to
Sections~\ref{sub:partial-data-compression:full}
and~\ref{sub:partial-text-indexing:full} for a more detailed
discussion.

\junk{
The
computational task is significantly different from that of sorting $K$
suffixes $T_{j}, T_{j+1}, \ldots, T_{j+K-1}$, for which we can reuse
the state-of-the-art techniques with a cost of $O(K \log K +
N)$.  As in partial compression, here we have to sort a set of irregularly scattered
suffixes $T_{j_1}, T_{j_2}, \ldots, T_{j_K}$ for which we do not know
a priori their positions $j_1, j_2, \ldots, j_K$.  The best we can do
is extreme also in this case, and consists in performing a \emph{full
  index construction} in $O(N \log N)$ time and a post-processing to
single out only the portion of the text index corresponding to the
suffixes $T_j$ whose lexicographical rank is $r \in [i..i+K-1]$,
namely, $j \in \{j_1, j_2, \ldots, j_K\}$.
}

\junk{
\medskip

Summing up, the techniques in the previous literature are \emph{suboptimal}
when dealing with the above partial computations. They heavily rely on
the fundamental problem of \emph{suffix sorting}, which is the
algorithmic workhorse in Stringology with hundreds of applications. It
has an $O(N)$ time solution since 70's for constant-size
alphabet~$\Sigma$~\cite{McC,Wei} and, more recently, for (bounded
universe) integer alphabet~$\Sigma$~\cite{F}; otherwise, it uses
$O(N\log |\Sigma|)= O(N\log N)$ comparisons in general
alphabets~$\Sigma$. As in standard comparison-based sorting, we focus
on the worst-case complexity for unbounded (universe) alphabet which
is $\Theta(N\log N)$.  \footnote{With Unicode alphabets of the order
  of thousands, it is quite often the case that $N \leq |\Sigma|^c$
  for a small constant $c > 0$, which means that $\log |\Sigma|
  \approx \log N$ when counting comparisons. In general, suffix
  sorting takes $\Theta(N \log N)$ in our comparison model.}
}

\sparagraph{Basic questions. }
Can \emph{partial suffix-based} computations like compression and indexing above,
be solved more efficiently than solving the \emph{total} problems? Besides the inherent 
interest in such problems and their structure that will let us parameterize their complexity
in terms of $K \in [1,N]$, the main applied interest is that
these partial problems give us a way to look at spots of a long string and get a sense for
the complexity of data, be it for compressibility or performance of a full-text index.

The central technical challenge is the following.  Given $K$ 
symbols, sorting them is trivial. However, if we have to sort 
$K$ arbitrary suffixes, they  are of size $O(KN)$
in the worst case, and we can not afford to compare them symbolwise. 
In the worst case, it is better to perform a  total  sorting in $O(N \log N)$ time when $K = \Omega(\log N)$, as the
suffixes overlap in 
arbitrary ways and we have to avoid rescanning the symbols repeatedly. 
Can we better this $O(N \log N)$ bound?

\paragraph*{Our results. }
We introduce a new adaptive approach to suffix sorting and order statistics. 
%
\begin{theorem}
  \label{the:bwt2}
  \label{the:bwt2:full}
  \label{the:index2:full}
  Given a text $T$ of length $N$, partial compression and partial text indexing problems can be 
  solved in 
   $O(K \log K + N)$ time.
\end{theorem}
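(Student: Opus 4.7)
The plan is to reduce both partial problems to a single call to the suffix multi-selection primitive (whose complexity was announced in the abstract), choose the rank set cleverly, and then estimate the resulting gap entropy.

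First, I would unify the two tasks. For the partial compression problem on $L[i..i+K-1]$ and for the $K$-partial text index on entries $[i..i+K-1]$, the required output is determined by the suffixes $\suf{j_1},\ldots,\suf{j_K}$ whose lexicographic ranks are exactly $i,i+1,\ldots,i+K-1$. Hence I would invoke suffix multi-selection with the consecutive rank set $r_s = i+s-1$ for $s=1,\ldots,K$. Given the returned starting positions $j_1,\ldots,j_K$, the BWT chunk is read off in $O(K)$ as $L[i+s-1] = T[j_s-1]$, using the borderline convention $T[N]$ when $j_s=1$. For the partial text index, the suffix-array chunk is $(j_1,\ldots,j_K)$ itself, and the $K-1$ internal lcp values between lexicographically consecutive outputs are either produced as a by-product of the multi-selection (since any comparison-based suffix sorter naturally tracks matching prefixes of adjacent suffixes that it certifies as neighbors), or, failing that, can be derived by charging each matched symbol to a comparison already performed.

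Next I would plug the consecutive ranks into the bound
\[
\Thetah{N\log N - \sum_{j=0}^{K}\Delta_j\log\Delta_j + N},
\]
with $r_0=0$, $r_{K+1}=N+1$, $\Delta_j = r_{j+1}-r_j$. For our choice, $\Delta_0 = i$, $\Delta_j=1$ for $1 \leq j \leq K-1$, and $\Delta_K = N-i-K+2$, and the interior terms vanish (since $1\cdot\log 1=0$). Let $a = \Delta_0$, $b = \Delta_K$; note $a+b = N-K+2$ is fixed. By convexity of $x\log x$,
\[
a\log a + b\log b \;\geq\; (a+b)\log\tfrac{a+b}{2} \;=\; (a+b)\log(a+b) - (a+b).
\]
Hence $N\log N - a\log a - b\log b \leq N\log N - (N-K+2)\log(N-K+2) + (N-K+2)$. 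A direct expansion of $(N-K+2)\log(N-K+2)$ yields $N\log N - (N-K+2)\log(N-K+2) = O(K\log N + N)$, and using the standard inequality $K\log(N/K) \leq N/(e\ln 2)$ for $1\leq K\leq N$ we conclude $K\log N = K\log K + K\log(N/K) = O(K\log K + N)$. Combining gives the claimed $O(K\log K + N)$ bound. (When $K = \Omega(N)$, the bound $K\log K$ already absorbs $N\log N$, so this regime is trivial.)

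The main obstacle I anticipate is the partial-index case: certifying that the suffix multi-selection result can be augmented, without asymptotic overhead, to report the $K-1$ neighbor-lcp values of the extracted window. The natural route is to argue that any algorithm realizing the optimal multi-selection bound necessarily resolves, for each pair of consecutive output ranks, the longest common prefix up to the first distinguishing symbol (otherwise the pair's order is not certified); recording this matched length during the run costs only $O(1)$ per comparison charged, preserving the bound. Once this is justified, both problems inherit the $O(K\log K + N)$ running time directly from the multi-selection primitive.
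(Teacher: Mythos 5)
Your route is genuinely different from the paper's, and the arithmetic at its core is sound. The paper does \emph{not} obtain Theorem~\ref{the:bwt2} by plugging consecutive ranks into equation~\eqref{eq:multi-bound}: it explicitly dismisses that as giving only $O(K\log N+N)$ and instead proves Lemma~\ref{lem:multisel:contiguous:full} via a modified algorithm --- a first multi-selection run with only the two ranks $\set{r_1,r_K}$ (cost $O(N)$), an intermediate stage that retrieves and merges the $K-2$ suffixes strictly between them into a single subproblem and pre-splits it with \mselmset{} (cost $O(K\log K+N)$), and a second run with all $K$ ranks that skips initialization and whose virtual-string analysis is redone so that only $K-2$ virtual strings exist. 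Your observation that $K\log N = K\log K + K\log(N/K) \le K\log K + N\log_2(e)/e$, so that $O(K\log N+N)$ and $O(K\log K+N)$ coincide as two-parameter classes for $1\le K\le N$, is correct, and the convexity step bounding $\Delta_0\log\Delta_0+\Delta_K\log\Delta_K$ from below is also fine. So for the time bound alone your one-line reduction to Theorem~\ref{the:bwt1} works and is considerably simpler than the paper's; it in effect shows that the intermediate stage is not needed for the asymptotic statement as written (the authors treat $O(K\log N+N)$ as strictly weaker, which it is term-by-term but not once the additive $+N$ is present). What the paper's longer route buys is a second phase that genuinely performs only $O(K\log K)$ comparisons beyond the linear term, rather than relying on the $+N$ to absorb the $K\log(N/K)$ slack.

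The one genuine soft spot is your treatment of the lcp output for partial indexing. Your justification --- that any comparison-based multi-selection ``naturally tracks matching prefixes of adjacent suffixes it certifies as neighbors,'' so the $K-1$ neighbor lcp's can be read off comparisons already performed --- does not apply to this algorithm. The entire point of the agglomerate machinery is that consecutive output suffixes are ordered \emph{without} ever being compared symbol by symbol: orders are inferred transitively through integer labels and keys of shifted suffixes, so no run of matched symbols between two lexicographic neighbors is ever materialized, and there is nothing to ``record.'' The paper needs a separate device (Lemma~\ref{lem:lcp:full}: a Cartesian tree over the lcp's of consecutive neighborhoods in \sublist{}, maintained with dynamic LCA queries at each \sliceop{}) to recover the lcp's within the same bound. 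Replace your heuristic with a citation to Theorem~\ref{theo:main:full} and Lemma~\ref{lem:lcp:full}, which already assert that the neighbor lcp's are delivered as a by-product; with that substitution your proof is complete.
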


Hence for $K=o(N)$, partial compressing and text indexing problems can 
be solved asymptotically faster than their total counterparts. In particular,
for $K=O(N/\log N)$, these partial problems can be solved in 
$O(N)$ time, which is quite rare in the comparison model. 

We can also provide a bound for an arbitrary choice of $K$ ranks $r_1, r_2, \ldots, r_K$ in the suffix order.
%
\begin{theorem}
  \label{the:bwt1}
  \label{the:bwt1:full} 
  \label{the:index1:full}
 Given a text $T$ of length $N$, partial compression and indexing can be solved using
  \begin{equation}
    \label{eq:multi-bound}
    \Theta\bigg(N\log N - \sum_{j=0}^K \Delta_j \log \Delta_j + N\bigg)
  \end{equation}
  time and comparisons, where $r_0 = 0$, $r_{K+1} = N+1$, and
  $\Delta_j = r_{j+1} - r_j$ for $0 \leq j \leq K$; here, $1 \leq
  r_1 < r_2 < \cdots < r_K \leq N$ are the ranks of the suffixes
  involved in the output.
\end{theorem}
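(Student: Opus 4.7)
The plan is to reduce both partial problems to the suffix multi-selection problem of outputting the suffixes $\suf{j_1}, \ldots, \suf{j_K}$ with lexicographic ranks $r_1 < \cdots < r_K$. For partial compression, once these suffixes are identified, the BWT segment $L[r_1..r_K]$ is read off as $T[j_i-1]$ (with the borderline case handled for $j_i=1$). For partial text indexing, we additionally need the longest common prefix of consecutive output suffixes, which will be produced as a by-product of the multi-selection procedure sketched below. Hence it suffices to prove the bound in \eqref{eq:multi-bound} for suffix multi-selection, with the additive $+N$ absorbing the initial scan of the text.

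For the upper bound, I would adapt the classical Dobkin--Munro multi-selection recursion to suffixes. Let $n$ denote the size of the current contiguous range of suffixes under consideration and $k$ the number of requested ranks inside it. Pick the \emph{middle} requested rank $r_m$ with $m \approx k/2$; invoke the $K=1$ suffix selection routine (i.e., the \emph{suffix selection} primitive cited in the introduction) to locate the suffix $\suf{j}$ of that rank; partition the remaining suffixes in the range into those lexicographically smaller and those larger than $\suf{j}$; and recurse on both halves with the corresponding split of the requested ranks. If each recursive call can be carried out in $\Oh{n}$ comparisons, a standard accounting along the recursion tree telescopes to $N\log N - \sum_{j=0}^{K}\Delta_j\log\Delta_j$, exactly matching Dobkin--Munro on atomic elements.

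The main technical obstacle is precisely that each recursive call must cost $\Oh{n}$ comparisons in the \emph{current} sub-range size rather than in the global text length $N$. This is not automatic, because comparing the pivot $\suf{j}$ against another suffix can involve up to $N$ symbol comparisons even inside a small sub-range. I would overcome this by propagating, across recursion levels, the longest common prefix of each suffix in the range with the two boundary pivots that define the range: every symbol comparison charged to a recursive step extends the known LCP of some suffix by at least one position, and suffix LCPs only grow along the recursion. The total LCP growth across the entire recursion is bounded by the sum of LCPs between consecutive selected suffixes in the final output, which in turn is bounded by $N$ because these LCPs correspond to disjoint or nested prefixes of $T$. This bookkeeping yields the additive $+N$ term, and combined with the per-call Dobkin--Munro overhead produces the stated upper bound. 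As a by-product the algorithm outputs the required LCPs between consecutive selected suffixes, covering partial text indexing as well.

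For the matching lower bound, I would reduce multi-selection on $N$ distinct atomic comparison-based elements to suffix multi-selection by encoding each element as a short string ending in a unique sentinel, producing a text of length $\Theta(N)$ whose suffixes inherit the order of the atomic items. The classical Dobkin--Munro lower bound $\Omega\!\left(N\log N - \sum_{j=0}^{K}\Delta_j\log\Delta_j\right)$ for atomic multi-selection then transfers verbatim, while the $+N$ summand is forced by the need to read the input. Combining the two directions yields the $\Theta(\cdot)$ statement of the theorem.
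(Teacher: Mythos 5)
Your reduction of both partial problems to suffix multi-selection (with the \emph{lcp}'s as a by-product) is exactly the paper's route, and the Dobkin--Munro recursion skeleton is the right starting point. The fatal gap is in the charging argument you use to bound the symbol comparisons. You claim the total LCP growth over the whole recursion ``is bounded by the sum of LCPs between consecutive selected suffixes in the final output, which in turn is bounded by $N$ because these LCPs correspond to disjoint or nested prefixes of $T$.'' Neither half of this is true. What your scheme actually accumulates is, for each of the $N$ suffixes, its \emph{lcp} with the pivots bounding its final group; these prefixes start at $N$ different positions of $T$, overlap massively, and their lengths can sum to $\Theta(N^2)$. Concretely, take $T=\mathtt{a}^{N-1}\tend$ and the single rank $r_1=\lceil N/2\rceil$: the pivot is $\suf{j}$ for some $j\approx N/2$ and $\sum_{i}\lcp{\suf{i},\suf{j}}=\Theta(N^2)$, so your accounting already fails for $K=1$. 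This is precisely the rescanning obstacle the paper singles out as the central difficulty (see its discussion of why the ``golden rule'' on \emph{lcp}'s cannot be applied blindly): one cannot afford to learn these \emph{lcp}'s symbol by symbol, and the $O(N)$-time suffix-selection primitives you invoke were themselves nontrivial results for exactly this reason.

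The paper's actual proof never compares suffixes symbolwise inside a subrange. It maintains a partition into subproblems carrying integer labels, groups subproblems sharing a common prefix into tree-shaped \emph{agglomerates}, and in each refinement step runs a multiset multi-selection only on the $\acard{A}$ columns of one agglomerate, using $\Oh{1}$-comparable keys (labels of other subproblems) as \emph{virtual symbols}; the refinement of all the other subproblems of the agglomerate is induced for free by slicing its tree. The additive $+N$ then comes not from an LCP sum but from an amortized count of $\Oh{N}$ one-time events (subproblem creations, column fusions, suffix discoveries and exhaustions, inner collisions), and the $N\log N-\sum_{j}\Delta_j\log\Delta_j$ term comes from analyzing the virtual strings as independent strings (Lemmas~\ref{lem:msel:strings:comp:full}, \ref{lem:msel:suffixes:comp:full} and~\ref{lem:rapN:full}). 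Your proposal is missing all of this machinery, and without a replacement for it the claimed $\Oh{n}$ cost per recursive call does not hold. (Your lower-bound sketch via encoding atomic elements is fine in spirit and matches the paper's appeal to Dobkin--Munro; it is the upper bound that carries the theorem.)
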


The algorithms behind Theorem~\ref{the:bwt2} use an 
intermediate stage before applying the algorithms behind the more
general Theorem~\ref{the:bwt1}, as otherwise the cost would be $O(K
\log N + N)$ by choosing consecutive ranks $r_1, r_2, \ldots, r_K$
from the given range of values (i.e.\mbox{} $\Delta_j = 1$ for $1 \leq
j < K$).

The above partial-computation problems share a common bottleneck: the
\emph{suffix multi-selection} problem, which is to output the suffixes
of rank $r_1, r_2, \ldots, r_K$ under the lexicographic order, where
$r_1 < r_2 < \cdots < r_K$, $r_i \in [1,N]$.  Special cases of this
problem are well known: $K=N$ is the standard suffix sorting problem,
and $K=1$ is the recently studied suffix selection for which
$O(N)$-time comparison-based solutions are now
known~\cite{Franceschini_et_al09,FranceschiniMuthu07}.
%
We refer the reader to Section~\ref{sec:our-problem:full} for a
more detailed discussion.

\begin{theorem}
  \label{theo:main}
  \label{theo:main:full}
  Given a text $T$ of length $N$, the $K$ text suffixes with ranks
  $r_1<r_2<\cdots<r_K$ (and the \textit{lcp}'s between consecutive
  suffixes) can be found within the bound stated in
  equation~\eqref{eq:multi-bound}. This is optimal.
\end{theorem}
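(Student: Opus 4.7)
The plan is to argue the upper and lower bounds separately. For the lower bound, I would observe that suffix multi-selection generalizes atomic multi-selection: specializing $T$ so that each length-one prefix of $T[i,N]$ is distinct reduces ordinary multi-selection on $N$ atomic keys to our problem. Thus the Dobkin--Munro lower bound \cite{dobkin:munro} of $\Omega\bigl(N\log N-\sum_{j=0}^{K}\Delta_j\log\Delta_j\bigr)$ comparisons carries over verbatim. Reading the input forces an additional $\Omega(N)$ term, so the sum of the two lower bounds matches~\eqref{eq:multi-bound}.

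For the upper bound I would follow a Dobkin--Munro-style recursive partitioning framework, but redesigned so that the \emph{string cost} never blows up beyond an additive $O(N)$. Call a subproblem a contiguous rank range $[a,b]$ together with the subset of input suffixes known to have ranks in $[a,b]$ and a distinguished "ancestor pivot" $\suf{\pivot}$ plus, for every suffix~$\suf{i}$ in the subset, the length $\ell_i$ of the longest common prefix $\lcp{\suf{i},\suf{\pivot}}$ and the sign of their comparison. To solve such a subproblem I would (i) pick a new internal pivot among the $r_j$'s that fall in $[a,b]$ using linear-time suffix selection~\cite{FranceschiniMuthu07,Franceschini_et_al09} on the subset, (ii) partition the subset around that pivot, and (iii) recurse on the two halves, each with its own updated ancestor-pivot information. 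If there are no requested ranks in $[a,b]$ the recursion stops, exactly as in the atomic case; this is what produces the $N\log N-\sum_j\Delta_j\log\Delta_j$ shape of the bound.

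The technical heart is step~(ii): comparing each suffix $\suf{i}$ in the subset against the newly chosen pivot $\suf{q}$ using only $O(1)$ amortized character probes plus symbol scans that can be charged globally to $O(N)$. Here I would exploit the ancestor LCP information: given $\ell_i=\lcp{\suf{i},\suf{\pivot}}$ and, precomputed once, $\ell_q=\lcp{\suf{q},\suf{\pivot}}$, one can decide the outcome of the $\suf{i}$ vs $\suf{q}$ comparison for free whenever $\ell_i\ne\ell_q$, and otherwise continue the scan from position $\ell_i{+}1$, updating $\suf{i}$'s ancestor information to point at~$\suf{q}$. Each character probe that is actually executed either strictly increases some suffix's stored LCP against its current ancestor or resolves a pair once and for all; using a potential function equal to $\sum_i \ell_i$ over all live suffixes (which is $\le N$ because the live suffixes remain disjoint as a set at every level of the recursion after partitioning), I would show that the \emph{total} number of character probes across the entire recursion is $O(N)$. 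The "bookkeeping" comparisons---one per suffix per partition step---account for exactly the Dobkin--Munro term, so together the two contributions match~\eqref{eq:multi-bound}. Finally, the consecutive lcp's demanded in the output can be read off from the ancestor-pivot information at the leaves of the recursion, augmented with the standard trick of computing $\lcp{\suf{i},\suf{i'}}$ for neighbours via $\min$ of ancestor lcp's along their common path, adding no asymptotic cost.

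The hard part I expect is the amortized analysis of step~(ii): the naive potential $\sum_i \ell_i$ must be shown not to grow when the subset is partitioned and both halves keep ancestor-LCP information with \emph{different} ancestors, so one has to be careful to charge any "LCP transfer" from an old ancestor to the new pivot against the single bookkeeping comparison already paid, and to argue that pivots themselves only contribute $O(\log K)$ layers of extra work per requested rank, which gets absorbed into the main term. Once this accounting is settled, combining it with the recursion structure yields the claimed $\Theta(N\log N-\sum_{j=0}^{K}\Delta_j\log\Delta_j+N)$ bound, completing both directions.
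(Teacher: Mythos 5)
Your lower bound is fine, but the upper bound has a fatal gap, and it is exactly the difficulty the paper identifies as the central obstacle. Your amortization rests on the claim that the potential $\sum_i \ell_i$ is at most $N$ ``because the live suffixes remain disjoint as a set.'' Disjointness as a set of indices only bounds the \emph{number} of live suffixes by $N$; it says nothing about the lcp values, since the suffixes overlap as substrings of the single text $T$. For a highly periodic text (e.g.\ $T=\A^{N-1}\tend$ or $T=(\A\B)^{(N-1)/2}\tend$) the distinguishing prefixes of the suffixes have total length $\Theta(N^2)$, so any scheme in which each executed character probe is charged to ``strictly increasing some suffix's stored LCP against its current ancestor'' can perform $\Theta(N^2)$ probes. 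Your ancestor-pivot caching only reuses comparisons already made \emph{by the same suffix} against pivots on its root-to-leaf path in the recursion; it never transfers information from the pair $(\suf{i},\suf{j})$ to the shifted pair $(\suf{i+d},\suf{j+d})$, which is the only way to beat quadratic rescanning. In effect you have rederived multi-selection for \emph{independent} strings, which is the paper's Lemma~\ref{lem:msel:strings:comp} and carries an unavoidable additive term $L$ equal to the number of symbols probed --- harmless for independent strings, but $\Theta(N^2)$ when the ``strings'' are overlapping suffixes.

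The paper's proof is structured precisely to circumvent this: subproblems whose suffixes share the same $\alpha$-string are grouped into agglomerates, and each \RAP{} refines \emph{all} subproblems of an agglomerate simultaneously at a cost charged only to the leading subproblem ($\acard{A}$ columns plus newly created subproblems, Lemmas~\ref{lem:slice:comp}--\ref{lem:slicejoin:comp}). Crucially, the ``symbols'' compared by \mselmset{} in the third step of \RAP{} are not text characters but integer labels of the subproblems containing $\suf{r+1}$ for each column $\ppair{c,r}$ --- this is where the golden rule $\lcp{\suf{i+d},\suf{j+d}}\ge\lcp{\suf{i},\suf{j}}-d$ is exploited without rescanning. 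The total non-\mselmset{} work is then bounded by $O(N)$ via a global event-counting argument (Lemma~\ref{lem:rapN}: column fusions, subproblem creations, suffix discoveries/exhaustions, inner collisions, each happening at most once), and the \mselmset{} work is bounded by viewing the labels as virtual symbols of genuinely independent virtual strings, to which the independent-string lemma legitimately applies with $L=O(N)$. Without some mechanism of this kind --- i.e., without ever replacing character comparisons by comparisons of previously computed order information at shifted positions --- your recursion cannot meet the stated bound.
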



\sparagraph{Related work.}
A long running theme in string matching has been 
matching for suffixes of a string, what one can do for atomic elements.
The earliest suffix tree algorithms of 70's~\cite{McC,Wei} were interesting because they 
sort suffixes of a string over constant-sized alphabet in $O(N)$ time, matching the 
bucket sorting bound for $N$ elements. However, it took lot longer to
match the $O(N)$ time of radix sorting for strings over an integer alphabet in 90's~\cite{F}, and in other computing models~\cite{FarachFerraginaMuthu,KarkkainenSB06}. 
For selection, the classic $O(N)$ time bound for atomic elements from 70's was 
matched only recently for string suffix selection~\cite{FranceschiniMuthu07,Franceschini_et_al09}. 

Similarly, it has been a technical challenge as we show here to match the 
multi-selection bound of atomic elements for string suffixes.
Multi-selection for suffixes is not only interesting for reasons multi-selection 
problem in general is interesting, i.e., 
for statistical analysis of string suffixes, but also because it
emerges naturally as the
computational bottleneck of several problems like the partial compression and 
indexing problems described above, which have no natural counterpart 
in study of atomic elements.

\junk{
Recall that the length of the
\emph{longest common prefix} of any two suffixes $\suf{i}$ and
$\suf{j}$ is defined as the smallest $\ell \geq 0$ such that
$T[i+\ell] \neq T[j+\ell]$. We also can obtain the following result,
which is useful for partial text indexing. 

\begin{corollary}
    \label{cor:main}
    As a byproduct, the length of the longest common prefix
    between the suffix of rank $r_j$ and that of rank $r_{j+1}$ is
    obtained, for $1 \leq j < K$.
\end{corollary}
}

For the sake of completeness, we recall that multi-selection of atomic
elements includes basic problems such as sorting ($K=N$) and selection
($K=1$) as special cases \cite{knuth3}.  Selection algorithms that
work in expected linear time~\cite{Floyd:1975:ETB,Hoare:1961:AF} or
worst-case linear time~\cite{JCSS::BlumFPRT1973} are now in textbooks.
Multi-selection can also model intermediate problems between sorting
and selection: for example, setting equally spaced $r_i$'s, it
corresponds to the quantile problem in Statistics.  
The asymptotically optimal number of comparison (and running time)
is that in equation~\eqref{eq:multi-bound} as proved  in~\cite{dobkin:munro}.\footnote{The bound in~(\ref{eq:multi-bound}) can be
refined by studying the actual constant factors hidden in the
$\Theta$~notation \cite{Kanela}. Several papers have studied other
variations
\cite{HwangT02,Kuba06,MahmoudMS95,MartinezPV04,Panholzer:2003:AMQ,Prodinger:1995:MQH}.}
Our suffix multi-selection algorithm is optimal, matching the lower bound
even for atomic elements!

\paragraph*{Paper organization.}
We first give more details on partial data compression, partial text
indexing and suffix multiselection in
Section~\ref{sec:our-applications:full}, so as to relate the former
two problems to the latter one.  Then, the rest of the paper is
devoted to suffix multi-selection (Theorem~\ref{theo:main:full}) with the following
organization. We give the main ideas in and introduce the main
concepts of subproblems and agglomerates, their data structures and
algorithms, in Section~\ref{sec:def}.  The top-level description of
our multi-selection of suffixes is given in
Section~\ref{sec:algorithm}, and then all the implementation details
are given in Section~\ref{sec:details-optimal-algorithm}. Finally, we
focus on the correctness and the analysis of the costs in
Sections~\ref{sec:correctness-analysis} and~\ref{sec:proofs}



\junk{
Some of these problems
represent the immediate generalization of order statistics to the
suffixes.

Some other problems add a new flavor to the multi-selection
problem and have no natural counterpart in the classical setting, thus
emphasizing the fact that our problem is not artificially posed and
our results are \emph{not} a mere extension of the multi-selection
bound in \cite{dobkin:munro} to the suffixes of a text string.
}

\section{Partial Data Compression,  Partial Text Indexing, and Suffix Multi-Selection}
\label{sec:our-applications:full}

We discuss here some of the new features that make the multi-selection
problem on suffixes interesting and challenging, and focus on its
applications to Data Compression and Text Indexing. To evaluate the
benefits of our findings, we present some examples illustrating which
tasks can be done optimally with known techniques and which cannot. At
the same time we show that using our novel techniques, we can now
perform optimally the latter tasks, which only had suboptimal
algorithms so far.

\begin{table}[t]
\begin{center}
\small
\begin{tabular}{c|cc|cl}
\multicolumn{1}{c}{\sl original}&\multicolumn{2}{c}{\sl
  sorted}&\multicolumn{2}{c}{\sl suffixes}\\
\hline
&&$L$&$i$&$\suf{i}$\\
\hline
\I\SSS\SSS\I\SSS\SSS\I\PP\PP\I\M\MM&\M\MM\I\SSS\SSS\I\SSS\SSS\I\PP\PP&\I&\textbf{12}&\M\\
\PP\I\M\MM\I\SSS\SSS\I\SSS\SSS\I\PP&\I\M\MM\I\SSS\SSS\I\SSS\SSS\I\PP&\PP&11&\I\M\\
\MM\I\SSS\SSS\I\SSS\SSS\I\PP\PP\I\M&\I\PP\PP\I\M\MM\I\SSS\SSS\I\SSS&\SSS&8&\I\PP\PP\I\M\\
\M\MM\I\SSS\SSS\I\SSS\SSS\I\PP\PP\I&\I\SSS\SSS\I\PP\PP\I\M\MM\I\SSS&\SSS&\textbf{5}&\I\SSS\SSS\I\PP\PP\I\M\\
\I\M\MM\I\SSS\SSS\I\SSS\SSS\I\PP\PP&\I\SSS\SSS\I\SSS\SSS\I\PP\PP\I\M&\MM&2&\I\SSS\SSS\I\SSS\SSS\I\PP\PP\I\M\\
\PP\PP\I\M\MM\I\SSS\SSS\I\SSS\SSS\I&\MM\I\SSS\SSS\I\SSS\SSS\I\PP\PP\I&\M&1&\MM\I\SSS\SSS\I\SSS\SSS\I\PP\PP\I\M\\
\I\PP\PP\I\M\MM\I\SSS\SSS\I\SSS\SSS&\PP\I\M\MM\I\SSS\SSS\I\SSS\SSS\I&\PP&\textbf{10}&\PP\I\M\\
\SSS\I\PP\PP\I\M\MM\I\SSS\SSS\I\SSS&\PP\PP\I\M\MM\I\SSS\SSS\I\SSS\SSS&\I&9&\PP\PP\I\M\\
\SSS\SSS\I\PP\PP\I\M\MM\I\SSS\SSS\I&\SSS\I\PP\PP\I\M\MM\I\SSS\SSS\I&\SSS&7&\SSS\I\PP\PP\I\M\\
\I\SSS\SSS\I\PP\PP\I\M\MM\I\SSS\SSS&\SSS\I\SSS\SSS\I\PP\PP\I\M\MM\I&\SSS&\textbf{4}&\SSS\I\SSS\SSS\I\PP\PP\I\M\\
\SSS\I\SSS\SSS\I\PP\PP\I\M\MM\I\SSS&\SSS\SSS\I\PP\PP\I\M\MM\I\SSS\SSS&\I&6&\SSS\SSS\I\PP\PP\I\M\\
\SSS\SSS\I\SSS\SSS\I\PP\PP\I\M\MM\I&\SSS\SSS\I\SSS\SSS\I\PP\PP\I\M\MM&\I&3&\SSS\SSS\I\SSS\SSS\I\PP\PP\I\M\\
\hline
\end{tabular}
\end{center}
\caption{\textsc{bwt} $L$ for the text
  $T = \MM\I\SSS\SSS\I\SSS\SSS\I\PP\PP\I\M$ and its relation with the sorted suffixes.}
\label{tab:bwt:full}
\end{table}

\subsection{Partial Data Compression}
\label{sub:partial-data-compression:full}
The Burrows-Wheeler transform (\textsc{bwt}) \cite{Burrows:1994:BSL}
is at the heart of the \texttt{bzip2} family of text
compressors. Consider all the $N$ circular shifts of the text $T =
\MM\I\SSS\SSS\I\SSS\SSS\I\PP\PP\I\M$ as shown in the first column
({\sl original}) of Table~\ref{tab:bwt:full}.  Perform a lexicographic
sorting of these shifts, as shown in the second column ({\sl sorted}):
if we single out the last symbol from each of the circular shifts in
this order, we obtain a sequence $L$ of $N$ symbols that is called the
\textsc{bwt} of $T$. Interestingly, not only we can recover $T$ from
$L$ alone, but typically $L$ is more compressible than $T$ itself
using 0th-order compressors (e.g.~\cite{Manzini01}). 
Its relation with suffix sorting is well known: the $r$th symbol in $L$ is
$T[j-1]$ if and only if $\suf{j}$ is the $r$th suffix in the sorting
(except the borderline case $j=1$, for which we take $T[N]$), as shown
in the third column (\textsl{suffixes}).

Data compression ratio can be partially estimated by choosing a
suitable sample $L'$ of $L$ for statistical purposes. There are
several ways to make this choice, some are easy and some others are
not, as we show next. It is easy to build $L'$ if we take every other
$q$th suffix in $T$. For example, $q=3$ gives $L' = \M\PP\SSS\SSS$
since we pick $\suf{1}, \suf{4}, \suf{7}, \suf{10}$ and then perform
their lexicographic sorting, namely, $\suf{1} < \suf{10} < \suf{7} <
\suf{4}$. The latter is a simple variant of the standard suffix
sorting and takes $O(N \log (N/q))$ time: the text $T$
is conceptually partitioned as a sequence of $N/q$ macro-symbols,
where each macro-symbol is a segment of $q$ actual symbols in $T$
(could be less in the last one). Suffix sorting requires $O(N/q \log
(N/q))$ macro-comparison, each involving $q$ symbolwise comparisons,
thus giving the above bound.

What if $L'$ is chosen by taking every other $q$th symbol
\emph{directly} in $L$?  Contrarily to the previous situation, here we
guarantee a uniform sampling from $L$. For example, $q=3$ gives $L' =
\I\SSS\PP\SSS$, which corresponds to selecting the suffixes $\suf{12}
< \suf{5} < \suf{10} < \suf{4}$ as shown in boldface in
Table~\ref{tab:bwt:full}. Here comes a crucial observation: even though we
sample from $L$ with regularity, the starting positions of the chosen
suffixes from the input text $T$ form an \emph{irregular} pattern and
are difficult to predict without suffix sorting. We are not aware of
any better approach other than performing a full execution of suffix
sorting and, then, making a post-processing to single out every other
$q$th sorted item.  This yields a suboptimal cost of $O(N \log N)$
which should be compared to the $O(N \log (N/q)) = O(N \log K)$ cost
in equation~\eqref{eq:multi-bound} by setting $\Delta_j = q$ for $j<K$
and $\Delta_K \leq q+1$.
In general, specifying ranks $r_1, r_2, \ldots, r_K$ gives the sample
made up of the $r_1$th, $r_2$th,~\dots, $r_K$th symbols of $L$: using
the algorithm giving the cost in~\eqref{eq:multi-bound}, we can look at
specific
parts of the compressed string, without paying the full suffix sorting
cost. We refer the reader to Theorem~\ref{the:bwt1:full}.


An intriguing situation arises when we consider just a segment
of $K$ consecutive symbols. Let us first consider a text segment
$T[a,b]$, where $K = b-a+1$ and $1 \leq a \leq b \leq N$. It takes
$O(K \log K)$ time to perform a suffix sorting and compute the
\textsc{bwt} of that segment alone; or $O(K \log K + N)$ time to sort
the consecutive suffixes $\suf{a}, \suf{a+1}, \ldots, \suf{b}$ whose
starting positions lie in that segment, and then find their induced
symbols inside $L$. Once again, these are simple variations of suffix
sorting.

What if we want to compute only a segment $L[a,b]$ of $K = b-a+1$
consecutive symbols instead of the whole $L$?  For example, $L[3,5] =
\SSS\SSS\MM$ corresponds to suffixes $\suf{8} < \suf{5} < \suf{2}$ in
Table~\ref{tab:bwt:full}.  In general, the starting positions of these
suffixes form an irregular pattern and, as far as we know, the best
that we can do is performing a \emph{full} suffix sorting in $O(N \log
N)$ time. Instead, setting $r_1=a, r_2=a+1, \ldots, r_K=b$, we obtain
that $\Delta_0=a$, $\Delta_K=N+1-b$, and $\Delta_j = 1$ for $1 < j <
K$. Hence, the cost implied by equation~\eqref{eq:multi-bound} is $O(K \log
N
+ N)$, and the computed longest common prefixes in
Theorem~\ref{theo:main:full} will also provide the contexts for estimating
the empirical entropy of~$L$ restricted to the segment $L[a,b]$. Even
better, we can obtain $O(K \log K + N)$ using the same algorithm
behind equation~\eqref{eq:multi-bound} and an analysis focussed on this
special case (i.e.\mbox{} the wanted ranks form an interval of
consecutive values, see Lemma~\ref{lem:multisel:contiguous:full}).  When $K
= o(N)$, this compares favorably with the suboptimal $O(N \log N)$
cost of building~$L$ explicitly. We refer the reader to Theorem~\ref{the:bwt2:full}.
%

\begin{figure}[t]
  \begin{center}
    \begin{psfrags}
      \psfrag{\$}{$\tend$}
      \psfrag{I}{\I}
      \psfrag{M}{\MM}
      \psfrag{P}{\PP}
      \psfrag{S}{\SSS}
      \includegraphics*[scale=1]{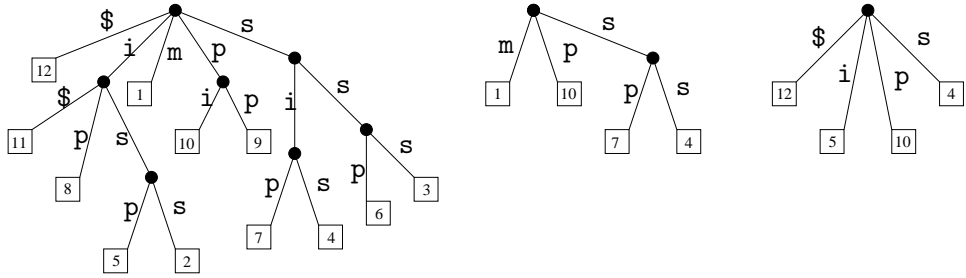}
    \end{psfrags}
  \end{center}
  \caption{A suffix tree and two ways to sample it. Only the first
    symbol is shown on each labelled edge.}
  \label{fig:sampled:full}
\end{figure}

\subsection{Partial Text Indexing}
\label{sub:partial-text-indexing:full}
Several text indexes, such as suffix arrays~\cite{Manber93} and
suffix trees~\cite{McC,Wei}, are based on the lexicographic order of
the suffixes in the text $T$ and the longest common prefix (\emph{lcp})
information among them. Ultimately, the suffix sorting and the \emph{lcp}
information constitute the kernel upon which the rest of the index can
be easily built in $O(N)$ time. An instance of suffix tree for the text $T =
\MM\I\SSS\SSS\I\SSS\SSS\I\PP\PP\I\M$ is shown in
Figure~\ref{fig:sampled:full}(left), and consists of a compacted trie
storing all the suffixes of $T$. 

Based on the above observation, we define a \emph{$K$-partial text
  index} as a subset of $K$ suffixes plus their \emph{lcp}
information. Having this, we can build the suffix array or the suffix
tree restricted to these $K$ suffixes in $O(K)$ time. Hence, the
problem of building a $K$-partial text index tantamounts to performing
a sorting of these $K$ suffixes and finding their \emph{lcp} information. We
discuss two ways of choosing these $K$ suffixes.

One possibility is sampling every other $q$th \emph{suffix} in $T$, as
shown in Figure~\ref{fig:sampled:full}(center) with $q=3$. This is the {\em
  sampled} suffix tree introduced in \cite{Esko}, and its construction
is a simple variant of the standard one and takes $O(N \log (N/q)) =
O(N \log K)$ time: as previously mentioned, the text $T$ is
conceptually partitioned as a sequence of $K=N/q$ macro-symbols.

Another possibility is sampling every other $q$th \emph{leaf} directly
from the suffix tree, as shown in Figure~\ref{fig:sampled:full}(right) with
$q=3$. However, with the current techniques we have a suboptimal
solution: build first the whole suffix tree in $O(N \log N)$ time and
then perform a post-processing to select the wanted leaves and their
ancestors (removing the possible unary nodes thus created). Using the
algorithm behind equation~\eqref{eq:multi-bound} and the related longest
common prefix information as a byproduct, we are able to build this
$K$-partial index in $O(N \log (N/q)) = O(N \log K)$ time by fixing
$\Delta_j = q$ for $j<K$ and $\Delta_K \leq q+1$. Contrarily to the
sampled suffix tree above, here the starting positions of the chosen
suffixes form an \emph{irregular} pattern even though we sample from
the suffix tree with regularity.

In general, given a text $T$ and its ranks $r_1, r_2, \ldots, r_K$, we
want to build the $K$-partial text index for the suffixes having those
ranks. For example, when employing a suffix tree, the $K$-partial text
index gives the subtrie made up of the $r_1$th, $r_2$th,~\dots,
$r_K$th leaves of the suffix tree for $T$. But we do not want to build
that full suffix tree explictly in $O(N \log N)$ time. Using the
algorithm behind equation~\eqref{eq:multi-bound}, we can attain this goal.
We refer the reader to Theorem~\ref{the:index1:full}.


A somewhat surprising situation arises when considering just a
segment of $K$ consecutive symbols. Consider first the $K$-partial
index build on the text segment $T[a,b]$, where $K = b-a+1$ and $1
\leq a \leq b \leq N$. It takes $O(K \log K)$ time to build the suffix
array or the suffix tree for that segment alone, or $O(K \log K + N)$
time to build them for the consecutive suffixes $\suf{a}, \suf{a+1},
\ldots, \suf{b}$ whose starting positions lie in that segment. Once
again, these are simple variations of known algorithms.

This is not the case when computing just a segment of $K$ consecutive
entries in the suffix array, or just a chunk of $K$ consecutive leaves
from the suffix tree. Clearly, we do not want the suboptimal solution
that builds the entire suffix array or suffix tree in $O(N \log N)$
time. Ours is a special case of $K$-partial text index, since the
wanted suffixes have \emph{consecutive ranks} $r_1=a, r_2=a+1, \ldots,
r_K=b$.  The $O(K \log N + N)$ cost implied by
equation~\eqref{eq:multi-bound}
can be refined (using Lemma~\ref{lem:multisel:contiguous:full}) so that we
obtain $O(K \log K + N)$, comparing favorably with the suboptimal cost
of building the suffix array/tree entirely. In general, the task of
computing only a chunk of $K$ consective entries from the suffix array
or the suffix tree falls within the following result.
We refer the reader to Theorem~\ref{the:index2:full}.


\subsection{Suffix Multi-Selection Problem}
\label{sec:our-problem:full}

Given a set $S$ of $N$ elements from
a total order $<$, the rank of $x \in S$ is $r \in [1,N]$ if $x$ is
the $r$th smallest in~$S$, namely, $r = 1+\card{\set{y \in S \vert y <
    x}}.$ For integers $r_1 < r_2 < \cdots < r_K$, where each $r_i \in
[1,N]$, the \emph{multi-selection} problem is to select the elements
of rank $r_1, r_2, \ldots, r_K$ from $S$.

Multi-selection includes basic problems such as sorting and selection as special
cases. When $K=N$, it finds the ranks for all the
elements, making it straightforward to arrange them in sorted order
\cite{knuth3}.  When $K=1$, it corresponds to the standard selection
problem: given an integer $r \in [1,N]$, the goal is to return
the element of rank $r$ in~$S$.
Selection algorithms that work in expected linear
time~\cite{Floyd:1975:ETB,Hoare:1961:AF} or worst-case linear
time~\cite{JCSS::BlumFPRT1973} are now in textbooks.  It can also
model intermediate problems bewteen sorting and selection: for
example, setting equally spaced $r_i$'s, it corresponds to the
quantile problem in Statistics.  Multi-selection can thus arise in
applications for partitioning the input, say for a recursive approach.

%
%
%
%
%

To the best of our knowledge the first algorithm for multi-selection was 
given in \cite{dobkin:munro}, thus establishing that the \emph{asymptotically}
optimal number of comparison (and running time) is that in
equation~\eqref{eq:multi-bound}, 
where $r_0 = 0$ and $r_{K+1} = N+1$ and $\Delta_j \equiv r_{j+1} - r_j$ for $0 \leq j \leq K$. 
The formula in~(\ref{eq:multi-bound}) can be intuitively read as
follows. Find the $\Delta_0$ smallest elements, then the $\Delta_1$ next
smallest elements, and so on, up to the last $\Delta_K$ ones. The
resulting arrangement is almost sorted, and can be fully sorted by
ordering each individual group of $\Delta_j$ elements independently in
$\Theta(\Delta_j \log \Delta_j)$ time.  Hence, take the total sorting
cost of $\Theta(N \log N)$, subtract the cost of sorting each group,
i.e.\mbox{} $\Theta(\sum_{j=0}^K \Delta_j \log \Delta_j)$, and add
$\Theta(N)$ to read all the elements as a baseline.
Note that rewriting~(\ref{eq:multi-bound}) as $\Theta(N \sum_{j=0}^K
(\Delta_j/N) \log (N/\Delta_j) + N)$, where $\sum_{j=0}^K \Delta_j =
N+1$, we can reformulate the bound in~(\ref{eq:multi-bound}) as
$\Theta(N (H_0 + 1))$ where $H_0 = - \sum_{j=0}^K p_j \log_2 p_j$ is
the empirical 0th-order entropy where $p_j = \Delta_j/N$ is the
empirical probability of having the $j$th group of size $\Delta_j$.

The asymptotical optimality of~(\ref{eq:multi-bound}) can be further
refined by studying the actual constant factors hidden in the $\Theta$
notation. Any comparison-based algorithm must perform at
least $B = N\log N - \sum_{j=0}^K \Delta_j \log \Delta_j - O(N)$
comparisons to solve multi-selection,
and the algorithm in~\cite{Kanela} is nearly optimal in this sense. It
attains $B + O(N)$ expected comparisons and $B+ o(B) +O(N)$
comparisons in the worst case, taking $O(B+N)$ running time. For the
interested reader, other papers have studied further features 
in \cite{HwangT02,Kuba06,MahmoudMS95,MartinezPV04,Panholzer:2003:AMQ,Prodinger:1995:MQH}.

We focus on the asymptotically optimality of
the bound in~(\ref{eq:multi-bound}) and call \emph{optimal} an algorithm
that
asymptotically meets that bound, namely, $\Theta(B+N)$ in the worst
case. Unless specified, the running time is always proportional to the
number of pairwise comparisons between elements.

\junk{ Recently, 
  e.g.\mbox{} the exact formula for the expected number of
  comparisons.  A new algorithm has been presented in \cite{Kanela}.
  Multi-selection is a partial version of the standard problem of
  selecting quantiles in statistical analysis. Also, it is employed in
  parallel sorting \cite{Blelloch:1996:CSA}.  }

\junk{
Interestingly, one of the peculiar characteristics of the above
solutions for selection and order statistics is that they are all
based on a purely \emph{distributive approach}. That is, they single
out some of the elements and use these to partition and distribute all
the elements into subsets. These subsets are recursively
processed in the same way, until subsets with zero or one element are
obtained.  Classic examples of a distributive approach in sorting are
quicksort (and its optimal variant based on linear time selection),
top-down radix sorting, string sorting algorithms based on multiset
sorting \cite{MS}.}

In this paper, we study the analog of the multi-selection problem in
Stringology.  Consider an input text string $T \equiv T[1,N]$ and the
set $S$ of its suffixes $\suf{i} \equiv T[i,N]$ ($1 \leq i \leq N$)
under the lexicographic order.  Let $T[N]$ be an endmarker symbol,
denoted by $\tend$, which is smaller than any other symbol in $T$.  The
alphabet $\Sigma$ from which the symbols in $T$ are drawn is unbounded
and the comparison model is adopted: any two symbols in $\Sigma$ can
only be compared and this takes constant time. Hence, comparing
symbolwise any two suffixes in~$S$ may requires $O(N)$ time in the
worst case.  Given ranks $r_1 < r_2 < \cdots < r_K$, the \emph{suffix
  multi-selection} problem is to output the suffixes of rank $r_1,
r_2, \ldots, r_K$ in $S$.
Our main contribution is that we extend the
asymptotically optimal bound in~(\ref{eq:multi-bound}) to the suffixes
in $S$. We refer the reader to Theorem~\ref{theo:main:full}.


Recall that the length of the \emph{longest common prefix} of any two
suffixes $\suf{i}$ and $\suf{j}$ is defined as the smallest $\ell \geq
0$ such that $T[i+\ell] \neq T[j+\ell]$. We also can obtain this info
as a byproduct, which is useful for the problems mentioned in
Sections~\ref{sub:partial-data-compression:full}
and~\ref{sub:partial-text-indexing:full}.


\section{Concepts, Definitions, and Main Ideas}
\label{sec:def}

For the given string $T \equiv T[1,N]$, let $T[N]$ be an endmarker
symbol, denoted by $\tend$, that is smaller than any symbol of the
alphabet and does not appear elsewhere in $T$.  Let $\suf{i} \equiv
T[i,N]$ be the $i$-th suffix of $T$, and $S =\set{\suf{i} \mid 1 \leq
  i \leq N}$ be the set of all these suffixes. Given the set
$\mathcal{R}$ of ranks $r_1 <r_2< \cdots <r_K$, we want to select the
suffixes $\suf{i_1} < \suf{i_2} < \cdots < \suf{i_K}$ such that
$\suf{i_j}$ has rank $r_j$ in~$S$, for $1 \leq j \leq K$.  Since we
already know that $\suf{N}$ is the smallest suffix of $T$, we can
assume wlog that $r_1>1$. Also, we use $<$ and $\le$ to denote string
comparison according to the lexicographical order.  For any two sets
$X, Y$, notation
$X < Y$ indicates that $x<y$ for
any pair $x \in X, y \in Y$.

Our main goal is to prove Theorem~\ref{theo:main}, as this is the
major obstacle when proving Theorems~\ref{the:bwt2}--~\ref{the:bwt1}. The tricks of the
trade all rely on the following
``golden rule'' on the \emph{lcp} information for the suffixes:
For any integer $d>0$, if $\lcp{\suf{i}, \suf{j}} > \ell$ for
  some integer $\ell \geq d$, then $\lcp{\suf{i+d}, \suf{j+d}} \geq
  \ell-d$. Thus, if $\suf{i}$ and $\suf{j}$ have been compared, the
direct comparison of the first $\ell-d$ symbols of $\suf{i+d}$ and
$\suf{j+d}$ can be avoided. Conversely, if $\suf{i+d}$ and $\suf{j+d}$
have been compared, the comparison of all but the first $d$ symbols in
$\suf{i}$ and $\suf{j}$ can be avoided.

Unfortunately, we cannot always rely on the golden rule here.
When comparing $\suf{i}$ and~$\suf{j}$, we do not yet know whether or
not both $\suf{i+d}$ and $\suf{j+d}$ will have to be compared
(directly or indirectly by transitivity) for a choice of $d > 0$, or
vice versa: simply put, we cannot predict at each stage of the
computation whether the comparison between $\suf{i+d}$ and $\suf{j+d}$
will occur or not in the future.  

\subsection{Subproblems}
\label{subsec:subproblems}

At the beginning, all the suffixes form a single problem $S \equiv
\set{\suf{1}, \suf{2}, \ldots, \suf{n}}$. At the end, we want to
obtain a partition of $S$ into \emph{subproblems}, namely, $S = S_1
\cup \set{\suf{i_1}} \cup S_2 \cup \set{\suf{i_2}} \cup \cdots \cup
S_K \cup \set{\suf{i_K}} \cup S_{K+1}$, such that $S_j$ contains all
the suffixes in $S$ of rank $r$ with $r_{j-1} < r < r_j$ for $1 \leq j
\leq K+1$. Although each $S_j$ is not internally sorted, still $S_1
< \set{\suf{i_1}} < S_2 < \set{\suf{i_2}} < \cdots < S_K <
\set{\suf{i_K}} < S_{K+1}$.

At an arbitrary stage of the computation, the suffixes in $S$ and the
ranks in $\mathcal{R}$ are partitioned amongst the
\emph{subproblems}. Let us call $\subpro{1} < \subpro{2} < \cdots <
\subpro{z}$ the subproblems in the current stage, where $S =
\subpro{1} \cup \subpro{2} \cup \cdots \cup \subpro{z}$, and let
$\subprorank{i}$ be the set of ranks associated with subproblem
$\subpro{i}$, for $1 \leq i \leq z$. Namely, $r\in \subprorank{i}$ iff
$\slab{i} < r \leq \slab{i}+\card{\subpro{i}}$, where $\slab{i} =
\sum_{j < i} \,\card{\subpro{j}}$ is the number of
lexicographically smaller suffixes.

\sparagraph{Status.}
A subproblem $\subpro{i}$ is $\emph{solved}$ if
$\card{\subpro{i}}=\card{\subprorank{i}}=1$, and $\emph{unsolved}$ if
$\card{\subpro{i}}>1$ and $\card{\subprorank{i}}\ge 1$.  A subproblem
$\subpro{i}$ and its suffixes are \emph{exhausted} if
$\card{\subprorank{i}}=0$.  A subproblem $\subpro{i}$ and its suffixes
are \emph{degenerate} if each of these suffixes share an \emph{empty}
prefix with any of the wanted suffixes $\suf{i_1} < \suf{i_2} < \cdots <
\suf{i_K}$. (Note that $(i)$ a degenerate subproblem is also exhausted
and $(ii)$ $\suf{N}$ is always degenerate since we assume $r_{1}>1$.)

A subproblem $\subpro{i}$ is never merged with others and can
  only be refined into smaller subproblems. A partition of
$\subpro{i}$ into $\subpro{i_{1}},\ldots,\subpro{i_{p}}$ is called
a \emph{refinement} if for any two $\subpro{i_{j}},\subpro{i_{j'}}$ either
$\subpro{i_{j}}<\subpro{i_{j'}}$ or $\subpro{i_{j'}}<\subpro{i_{j}}$:
note that the refinement is a stronger notion than the partition, since it
also takes into account of the lexicographic order among the suffixes.

During the computation a subproblem can be either \emph{active} or
\emph{inactive}. If $\card{\subprorank{i}}\ge 1$ then $\subpro{i}$ is
active. An \emph{active} subproblem is subjected to the \emph{refinement},
whereas this does not hold anymore once it becomes \emph{inactive}. If
$\subpro{i}$ is inactive then $\card{\subprorank{i}}=0$ and thus it is
exhausted. (The latter is a necessary condition: not all the exhausted
subproblems are inactive.)  Once a subproblem becomes inactive it
will stay inactive until the end. Degenerate subproblems are
inactive since the start. We give a complete characterization in
Section~\ref{subsec:agglomerates}.

\sparagraph{Integer labels and neighborhood.}
Ideally, we would like to maintain the integer label $\slab{i} =
\sum_{j < i} \,\card{\subpro{j}}$ for each subproblem $\subpro{i}$
during the refining process. Using the integer labels, we can define a
total order relation $\qlexord$ and an equivalence relation $\qequiv$
on the suffixes of $T$.  Consider any two suffixes $\suf{i'},\suf{j'}$
and their subproblems' labels $\slab{i},\slab{j}$.  We have that
$\suf{i'}\qlexord{}\suf{j'}$ iff \emph{(1)} $T[i']\le T[j']$ when both
$\suf{i'}$ and $\suf{j'}$ are degenerate or \emph{(2)} $\slab{i}\le
\slab{j}$ otherwise.  Similarly, $\suf{i'}\qequiv\suf{j'}$ iff \emph{(1)}
$T[i']=T[j']$ when both $\suf{i}$ and $\suf{j}$ are degenerate or \emph{(2)}
$\slab{i}=\slab{j}$ otherwise.  The total order $\qlexord$ is
consistent with the lexicographical order: if $\suf{i'}\qlexord{}\suf{j'}$
and $\suf{i'}\not\qequiv\suf{j'}$ then $\suf{i'}<\suf{j'}$.  If the labels
for two suffixes are known then comparing them according to $\qlexord$
and $\qequiv$ takes $\Oh{1}$ time.

After an active subproblem $\subpro{i}$ becomes inactive, it
is possible that some of its suffixes are moved to form other
inactive subproblems $\subpro{i_{j}}$'s but we still need the value of
$\slab{i}$. For this reason, we need to introduce a more general
notion, that of \emph{neighborhood}
$\neigh{i}=\subpro{i}\cup\bigcup_{i=1}^{{l}}\subpro{i_{j}}$, to
preserve what was once an individual active subproblem $\subpro{i}$.
Summing up: if $\subpro{i}$ is \emph{active} then
$\neigh{i}=\subpro{i}$; else, $\neigh{i} \supseteq \subpro{i}$.  In
any case, the reference label is
$\slab{i}=\sum_{\neigh{j}<\neigh{i}}\card{\neigh{j}}
=\card{\set{\suf{j}\vert \suf{j}<\suf{i}
\mbox{ for any }\suf{i}\in\neigh{i}}}$.

For the sake of description, each subproblem $\subpro{i}$ or
neighborhood $\neigh{i}$, and by extension each suffix in them, is
conceptually associated with an \emph{$\alpha$-string} $\alpha_i$.  If
$\subpro{i}$ is \emph{non-degenerate} then $\suf{j}$ has $\alpha_i$ as
prefix iff $\suf{j}\in \neigh{i}$.  If $\subpro{i}$ is
\emph{degenerate}, a weaker property holds since
$\card{\alpha_{i}}=0$: for any $\subpro{j}$ not in $\neigh{i}$, either
$\subpro{j}<\neigh{i}$ or $\neigh{i}<\subpro{j}$.  Observing that only the
integer labels are used to compare suffixes according to $\qlexord$
and $\qequiv$, the $\alpha$-strings will not be maintained
during the computation. For the presentation in the paper, we will
focus on subproblems rather than neighbors, keeping in mind that the
label of an inactive subproblem is that defined for its neighborhood.

\sparagraph{Rationale. } We refine the subproblems as follows. We pick an unsolved subproblem $\subpro{i}$
and refine it into smaller ones: We find the closest ranks
$r_j,r_{j+1}$ for $\subpro{i}$ partitioning it ``evenly,'' namely,
$r_j\leq \slab{i}+\card{\subpro{i}}/2\leq r_{j+1}$. Then, we select
the suffixes $\suf{i_j},\suf{i_{j+1}}$ with ranks $r_j,r_{j+1}$ and
partition $\subpro{i}$ into three new subproblems according to
$\suf{i_j}$ and $\suf{i_{j+1}}$. The new subproblem with the middle
suffixes is exhausted since it has no ranks associated (and will form
$S_{j+1}$), while the other two subproblems are still unsolved.  The
goal is to reach a situation in which each subproblem~$\subpro{i}$ is
either exhausted ($\subpro{i} \equiv S_j$ for some $j$) or
solved ($\subpro{i} \equiv \{\suf{i_j}\}$ for rank $r_j$
and some $j$): namely, $S = \subpro{1} \cup \subpro{2} \cup
\cdots \cup \subpro{z} \equiv S_1 \cup \set{\suf{i_1}} \cup S_2 \cup
\set{\suf{i_2}} \cup \cdots \cup S_K \cup \set{\suf{i_K}} \cup
S_{K+1}$ is the resulting refinement of the initial problem~$S$. This
scheme works if we suppose that $S$ contains independent
strings.  Unfortunately, $S$ contains the suffixes of $T$ and so the
 rescanning cost is the main obstacle. 

\subsection{Agglomerates of subproblems}
\label{subsec:agglomerates}

We group subproblems into agglomerates to model the interplay among
suffixes that share the \emph{same} $\alpha$-string.  We represent
each agglomerate as a threaded dynamic tree where each node represents
a subproblem (a subset of suffixes), as illustrated in the example of
Fig.~\ref{fig:big:full}. 

\sparagraph{Dependency.}
For any two subproblems $\subpro{i},\subpro{j}$, we say that 
$\subpro{i}$ depends directly on $\subpro{j}$
if the 
following hold: $(i)$ $\subpro{i}=\subpro{j}$ or 
$(ii)$ $\subpro{i}\not=\subpro{j}$ and for each 
$\suf{x}\in\subpro{i}$ we have that $\suf{x+1}\in\subpro{j}$.
We extend this relation by transitivity, denoted  $\subdept$, which is
a partial order.
When  $\subpro{i}\subdept\subpro{j}$ we say that 
$\subpro{i}$ \emph{depends on} $\subpro{j}$.
We extend this terminology to single suffixes: a suffix 
$\suf{i}\in\subpro{i'}$ \emph{depends on} $\suf{j}\in\subpro{j'}$ (denoted by 
$\suf{i}\subdept\suf{j}$)  if $i\le j$, $\subpro{i'}\subdept\subpro{j'}$ 
and $\suf{x}\not\in\subpro{i'}\cup\subpro{j'}$, for $i<x<j$. 

\sparagraph{Partial order and tree representation.}
A set of subproblems is an \emph{agglomerate} $A$ if there exists
$\subpro{i}\in A$ such that $\subpro{j}\subdept\subpro{i}$, for each
$\subpro{j}\in A$ (i.e. $A$ has a maximum according to the partial
order $\subdept$).  The Hasse diagram according to $\subdept$ for an
agglomerate $A$ is a tree whose root is the \emph{maximum} subproblem
and the leaves are the \emph{minimal} subproblems of $A$.  Moreover,
the children of an internal node are subproblems directly
  depending on it (see Fig.~\ref{fig:big:full}).  We denote by $\card{A}$
the number of subproblems in $A$, and apply tree terminology to
agglomerates.  A suffix $\suf{x}\in\subpro{i}\in A$ ($x>1$) is a
\emph{contact suffix} if $\suf{x-1}\in\subpro{j}$ and
$\subpro{j}\not\in A$.  A subproblem $\subpro{i}\in A$ is a
\emph{contact subproblem} if it contains at least one contact suffix.
Each leaf of $A$ is a contact node. Also, some internal
  nodes may be contact ones.  When we consider the contact nodes in
preorder (see grey nodes in Fig.~\ref{fig:big:full}), we call this the
\emph{contact visiting order}. Contacts nodes are useful for the refinement.

\sparagraph{Status.}
An agglomerate $A$ is 
\emph{exhausted} if all its subproblems are either exhausted or solved.
An unsolved subproblem  of an agglomerate $A$ 
is a \emph{leading subproblem} if none of its ancestors in $A$ are unsolved.
An agglomerate $A$ is \emph{unsolved} 
if  the following holds:
\begin{property}[Leading Subproblem Property]
  \label{prop:leading-subproblem}
There exists a leading subproblem $\subpro{w}$ in~$A$ s.t.\\
$(i)$~Each ancestor of $\subpro{w}$ has only one child, and
$(ii)$~none of the ancestors of $\subpro{w}$ is a contact node.
\end{property}
Note that Property~\ref{prop:leading-subproblem}.$(i)$ implies that
an unsolved agglomerate has only one leading subproblem.  At
any time, a subproblem $\subpro{i}\in A$ is \emph{active} iff $A$ is
unsolved and either $(i)$ $\subpro{i}$ is unsolved or $(ii)$
$\subpro{i}$ is not solved and is a contact node of $A$ (in this case
$\subpro{i}$ can be exhausted but still active).  For any agglomerate
$A$ and two active subproblems $\subpro{i},\subpro{j}\in A$,
the following properties hold: $(a)$~$\alpha_{i}$ and~$\alpha_{j}$
have a non-void common suffix; $(b)$~if $\subpro{i}$ is a descendant
of $\subpro{j}$ in $A$ then $\alpha_{j}$ is a proper suffix of
$\alpha_{i}$.

\sparagraph{Columns.}
The agglomerates induce columns: a \emph{column} $C$ of $A$ is the
pair $\ppair{c_{i},r_{i}}$ such that $\suf{c_{i}}\subdept\suf{r_{i}}$
where $\suf{c_{i}}$ is a contact suffix (i.e.\mbox{} it belongs to one
of $A$'s contact nodes) and $\suf{r_{i}}$ is a root suffix of $A$
(i.e.\mbox{} it belongs to $A$'s root).  Each column $T[c_{i},
  r_{i}]$ is a contiguous substring of $T$ and, at any time, all the
  columns of all the agglomerates form a non-overlapping partition
  of~$T$. A column $\ppair{c_{i},r_{i}}$ is associated with the
contact subproblem $\subpro{j}$ such that
$\suf{c_{i}}\in\subpro{j}$. We will denote by $\acard{\subpro{j}}$
the number of columns associated with a contact subproblem
$\subpro{j}$.  In Fig.~\ref{fig:big:full}, each agglomerate has its columns
depicted beside its tree. The columns are shown as contiguous
substrings of $T$ (and $T$ as a non-overlapping partitioning of the
columns). For any agglomerate $A$, the number of its root suffixes,
the number of its contact suffixes, the number of its columns and, if
$A$ is unsolved, the number of suffixes in its leading subproblem are
all the same: we denote this quantity by~$\acard{A}$.

\sparagraph{Data structures.}
The algorithmic challenge behind agglomerates is that the total cost
of refining each individual subproblem could be prohibitive: there is
a hidden rescanning cost that we must avoid. 
As we will describe late, the refining algorithm will pick an unsolved agglomerate $A$, and
process it by refining \emph{simultaneously} all of its subproblems
according to the ranks of its leading subproblem $\subpro{w}$. This is
a non-trivial task.
For example, to achieve optimality
(Section~\ref{sec:correctness-analysis}) the refinement of $A$
must be executed in time proportional to the number
$\card{\subpro{w}}$ (i.e\mbox{} $\acard{A}$) of suffixes in
  $\subpro{w}$ plus the number of newly created subproblems,
whereas $A$ can contain many more subproblems and suffixes.

The main structure for an agglomerate $A$ is its Hasse diagram
according to $\subdept$, which is a tree with $\card{A}$ nodes, each
one representing a subproblem of $A$.  Double links between children
and parent are maintained.  If $A$ is unsolved, we also maintain
$(a)$~a pointer to its leading subproblem $\subpro{w}$ and
$(b)$~the number of nodes in the path between $A$'s root and
$\subpro{w}$. 
Only $2\acard{A}$ suffixes are stored for $A$, i.e. its 
$\acard{A}$ columns. They are divided into lists, 
one for each contact subproblem of $A$: the one for  
$\subpro{j}$ contains all the $\acard{\subpro{j}}$
columns $\ppair{c_{i},r_{i}}$ such that $\suf{c_{i}}\in\subpro{j}$ (see 
Fig.~\ref{fig:big:full}). 

We call \emph{skip node} one that is a  
branching node (i.e. one with two or more children) or a 
contact node or the root of $A$ (the conditions are not 
mutually exclusive).
For each skip node $\subpro{j}$ we maintain a \emph{skip link} which is a double 
link between $\subpro{j}$ and its lowest ancestor that is also a skip 
node. 
Clearly, the graph induced by the skip links is also 
a tree. We refer to it as $A$'s \emph{skip tree}.
For each  internal skip node $\subpro{j}$ and each child 
$\subpro{j_{i}}$ of $\subpro{j}$ in the skip tree, we maintain a 
\emph{guide link} that goes from $\subpro{j}$
to the highest node of $A$ that is $(i)$ \emph{unsolved} and 
$(ii)$ both a descendant of $\subpro{j}$ and an ancestor of 
$\subpro{j_{i}}$ (if any such node exists). The skip tree of $A$ with 
its guide links requires $\Oh{\acard{A}}$ space.
\notes{1}{%
In Fig.~\ref{fig:big:full}, skip and guide links are 
depicted as dashed arcs and dotted arrows.
}

Besides the pointers needed by the linked structures containing it, 
each subproblem $\subpro{i}$ carries $\Oh{1}$ words of information: 
$(a)$ its integer label $\slab{i}$, $(b)$ $\card{\subpro{i}}$ 
and, when $\subpro{i}$ is a contact node, the following additional information: $(c)$ $\acard{\subpro{i}}$
$(d)$ a pointer to the list of its 
columns and $(e)$ a pointer to the root of $A$.
For a generic subproblem $\subpro{i}$,  we do 
not explicitly maintain (e.g. in a list associated with 
$\subpro{i}$) either its suffixes or its ranks in $\subprorank{i}$.  
The space required to store an agglomerate $A$ is just
$\Oh{\card{A}+\acard{A}}$ memory words.

We also maintain some \emph{global} bookkeeping structures that are
shared among the agglomerates. $(a)$~For each contact or root
suffix $\suf{i}$, a pointer $\arrsub[i]$ to the subproblem to which it
belongs. $(b)$~A sorted linked list $\sublist$ of all the
subproblems. $(c)$~An array $\rankstrut$ to find the nearest rank in
$\mathcal{R}$ in constant time, so that the set of ranks
$\subprorank{i}$ of any unsolved subproblem $\subpro{i}$ can be
retrieved in $\Oh{\card{\subprorank{i}}}$ time.

\subsection{Basic refining operations: slicing and joining agglomerates}
\label{subsec:agg:oper}

Before describing the general refining scheme in
Section~\ref{sec:algorithm}, we discuss some useful operations that
operate on agglomerates. The \sliceop{} operation takes
in input an agglomerate $A$ and assumes that each column
$\ppair{c_{j},r_{j}}$ of $A$ has been tagged with an integer in
$\set{1,\ldots,d}$, for some $d\le\acard{A}$.  For any suffix
$\suf{i}$ of a subproblem in $A$, let us denote by $\tau(\suf{i})$
the tag of the column $\ppair{c_{j},r_{j}}$ to which it belongs
($c_{j}\le i\le r_{j}$). \sliceop{}$(A)$ obtains the following \emph{without}
changing the columns of $A$:
\begin{enumalpha}
\item Each subproblem $\subpro{x}\in A$ is partitioned (not
  necessarily this gives a refinement) into new subproblems
  $\subpro{x_{1}},\ldots,\subpro{x_{d'}}$, where $d'\le d$ and all the
  suffixes $\suf{i}\in\subpro{x_{j}}$ have the same tag
  $\tau(\suf{i})$, for each $1\le j\le d'$ (after the partitioning
  $\subpro{x}$ ceases to exist).  If all the suffixes of some
  $\subpro{x}\in A$ already have the same tag, $\subpro{x}$ remains
  unmodified.
 
\item All the subproblems in $A$, both the new ones and the unmodified ones, are
  distributed into new agglomerates $A_{1},\ldots,A_{d}$ such that,
  for each $1\le t\le d$ and for each suffix
  $\suf{i}\in\subpro{x_{y}}\in A_{t}$, we have that
  $\tau(\suf{i})=t$. Hence $\sum_{i=1}^{d}\acard{A_{i}}=\acard{A}$, with
  $\card{\bigcup_{i=1}^{d}A_{i}-A}$ newly created subproblems.
\end{enumalpha}

\begin{lemma}\label{lem:slice:comp}
Slicing $A$ into $A_{1},\ldots,A_{d}$ takes 
$\Oh{\acard{A}+\card{\bigcup_{i=1}^{d}A_{i}-A}}$ time.
\end{lemma}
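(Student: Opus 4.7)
The plan is to avoid visiting all $\card{A}$ subproblems of $A$ explicitly, since in general $\card{A}$ can dwarf $\acard{A}$. The strategy is to exploit the skip tree of $A$, which has only $\Oh{\acard{A}}$ nodes, and to represent chains of non-skip interior nodes implicitly, materializing them only when the split actually produces new subproblems that can absorb the cost.

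First, I would handle the contact subproblems. For each contact subproblem $\subpro{j} \in A$, I scan its list of $\acard{\subpro{j}}$ columns and partition $\subpro{j}$ according to the tags carried by these columns. Since $\sum_{\subpro{j} \text{ contact}} \acard{\subpro{j}} = \acard{A}$, this step costs $\Oh{\acard{A}}$ plus $\Oh{1}$ per newly created subproblem. Then I would propagate tag information upward through the skip tree by a post-order traversal, computing for each skip node $\subpro{j}$ the sorted set $T_{\subpro{j}}$ of distinct tags present in its subtree by merging the tag sets of its skip children via the guide links. The aggregate work equals $\sum_{\subpro{j} \text{ skip}} \card{T_{\subpro{j}}}$; every skip node with $\card{T_{\subpro{j}}}=1$ contributes only $\Oh{1}$ and there are $\Oh{\acard{A}}$ skip nodes, while every skip node with $\card{T_{\subpro{j}}}\geq 2$ contributes $\card{T_{\subpro{j}}}$, which matches exactly the number of new subproblems produced when $\subpro{j}$ itself is split.

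With the tag sets in hand, I split each skip node $\subpro{j}$ having $\card{T_{\subpro{j}}}\geq 2$ into $\card{T_{\subpro{j}}}$ pieces, rewiring the skip children so that each piece is linked only to the matching pieces of its children. For every skip-tree edge from a parent skip node $S_t$ to a child skip node $S_b$ corresponding to a chain of $k$ non-skip interior nodes of $A$, I do nothing when $\card{T_{S_b}}=1$; otherwise I explicitly materialize $\card{T_{S_b}}$ parallel chains of length $k$. The latter costs $\Oh{k \cdot \card{T_{S_b}}}$ and is paid for by the $k \cdot \card{T_{S_b}}$ new subproblems produced. Finally, I distribute the surviving and newly created subproblems into the $d$ agglomerates $A_{1},\ldots,A_{d}$ by following their tag assignments, rebuilding skip and guide links, column lists, and the global bookkeeping structures in $\Oh{1}$ per subproblem or column touched.

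The main obstacle is ensuring that the chains of non-skip interior nodes never cost us more than what the output can pay for. A single-tag chain must be ignored entirely, whereas a multi-tag chain of length $k$ must be expanded into $k \cdot \card{T_{S_b}}$ new subproblems whose creation cost itself serves as the charge. This accounting, combined with the fact that single-tag skip nodes contribute only $\Oh{\acard{A}}$ in total to the tag-propagation phase, yields the claimed $\Oh{\acard{A} + \card{\bigcup_{i=1}^{d} A_i - A}}$ running time.
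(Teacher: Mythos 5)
Your proposal follows essentially the same route as the paper's implementation of \sliceop{}: never descend into a single-tag (homogeneous) subtree or chain, use the $\Oh{\acard{A}}$-node skip tree to locate the non-homogeneous nodes, and charge the cost of touching each such node---including every node of a multi-tag chain of degree-one non-skip nodes---to the at least two new subproblems it is split into, which is exactly the paper's pruning-plus-slicing argument. The one point you leave under-specified is the finishing work: the new pieces of each old active subproblem must end up in tag order so that \sublist{} and the integer labels remain correct, which does not follow for free from tag sets collected with a timestamp array; the paper resolves this with a single radix sort of all new subproblems on $\Oh{\log\card{A}}$-bit (fingerprint, tag) keys, still within the $\Oh{\card{\bigcup_{i=1}^{d}A_{i}-A}}$ budget.
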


The \emph{join operation} \joinop{}$(A',A)$ is much simpler. An
agglomerate $A'$ is \emph{joinable} with $A$ if there exists a contact
subproblem $\subpro{x}\in A$ such that, for each suffix $\suf{i}$ of
the root subproblem $\subpro{r'}$ of $A'$, it is $\suf{i+1}\in
\subpro{x}$.  Hence $A'\cup A$ is an agglomerate and, by joining with
$A$, we have that $A'$ disappears and only $A$ remains (their trees
are fused with $\subpro{r'}$ child of $\subpro{x}$), thus some columns are \emph{modified}.

\begin{lemma}\label{lem:join:comp} 
If $A'$ is joinable with $A$ then the join operation
requires $\Oh{\acard{A'}}$ time. 
\end{lemma}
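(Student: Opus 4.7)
The plan is to perform the join via localized pointer updates around the contact subproblem $\subpro{x}$ of $A$ and the root subproblem $\subpro{r'}$ of $A'$, and to merge the $\acard{A'}$ pairs of columns one at a time.

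First I would use the global $\arrsub$ array to enumerate the $\acard{A'}$ root suffixes of $\subpro{r'}$: for each such suffix $\suf{i}$, the companion contact suffix $\suf{i+1}$ of $\subpro{x}$ is reached via $\arrsub[i+1]$, and the two columns to be merged, namely $\ppair{i+1, R_i}$ stored with $\subpro{x}$ and $\ppair{c_i, i}$ stored with the relevant contact subproblem of $A'$, are both located in $\Oh{1}$ through the pointers linking each column's endpoint-suffixes to the column record. The two columns are then fused into a single new column $\ppair{c_i, R_i}$, retained in the list of the $A'$-contact subproblem that previously held $\ppair{c_i, i}$, while $\ppair{i+1, R_i}$ is excised from $\subpro{x}$'s list by a doubly-linked removal. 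Since each of these steps is constant time and there are exactly $\acard{A'}$ such root suffixes, the total column-merging cost is $\Oh{\acard{A'}}$.

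Next I would fuse the tree structures: set $\subpro{r'}$'s parent pointer to $\subpro{x}$, append $\subpro{r'}$ to $\subpro{x}$'s child list, and decrease $\acard{\subpro{x}}$ by $\acard{A'}$ (demoting $\subpro{x}$ from contact status when this count hits zero). For the skip tree and guide links, the crucial observation is that only the pointers incident to $\subpro{r'}$ and $\subpro{x}$ need adjustment: the skip/guide structure strictly below $\subpro{r'}$ inside $A'$ is oblivious to $A'$'s former root status, and the skip/guide structure strictly above $\subpro{x}$ inside $A$ is unchanged. If $\subpro{r'}$ remains a skip node in the merged agglomerate (which happens whenever $\subpro{r'}$ was branching or a contact node), its skip-parent pointer (previously undefined at the skip-root of $A'$) is set to $\subpro{x}$; otherwise $\subpro{r'}$'s unique skip-child is relinked directly to $\subpro{x}$. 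The new guide link from $\subpro{x}$ to the skip-child it inherits from $A'$ is read off in $\Oh{1}$ from $A'$'s stored pointer to its leading subproblem (or set to \nil{} when $A'$ is exhausted). All of this is $\Oh{1}$ additional work.

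The main obstacle in making the analysis airtight is ruling out cascading updates further up the ancestor chain of $\subpro{x}$. One might fear that every ancestor's guide link needs recomputation now that the subtree rooted at $\subpro{x}$ has grown. This is dismissed by observing that such guide links continue to point to their former targets, and those targets remain the highest unsolved descendant within their respective skip-subtrees: the set of unsolved descendants of any ancestor of $\subpro{x}$ can only gain new nodes that lie \emph{strictly below} $\subpro{x}$, never higher up, so any previously recorded ``highest unsolved'' witness is still the highest. Combining the constant-time tree/skip/guide rewiring with the $\Oh{\acard{A'}}$ column-merging yields the claimed $\Oh{\acard{A'}}$ bound.
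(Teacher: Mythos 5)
Your proposal is correct and follows essentially the same route as the paper's implementation: fuse the trees by one pointer update, rewire only the skip/guide links incident to $\subpro{r'}$ and $\subpro{x}$, and fuse each of the $\acard{A'}$ column pairs in $\Oh{1}$ time, for a total of $\Oh{\acard{A'}}$. Two small details the paper handles that you gloss over: if \emph{all} of $\subpro{x}$'s columns are consumed by the fusion it may cease to be a contact (hence skip) node, in which case the inherited skip and guide links must be attached to $\subpro{x}$'s lowest skip-node ancestor $v$ rather than to $\subpro{x}$ itself, and the root pointers of all contact nodes of $A'$ must be redirected to $A$'s root --- both of which still cost only $\Oh{1}$ and $\Oh{\acard{A'}}$ respectively, so the bound is unaffected.
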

 
During the computation we may need to combine the above two
operations. Let agglomerates $A$ and $A_{*}$ be unsolved and
exhausted, respectively.  Also let us assume that $A$ is joinable with
$A_{*}$ and let $\subpro{r}$ be $A$'s root.  The
\slicejoinop{}$(A,A_{*})$ operation does the following: $(i)$ it
slices $A_{*}$ into $A_{*1}$ and $A_{*2}$, where $\ppair{c_{j},r_{j}}$
is a column of $A_{*1}$ iff $\suf{c_{j}-1}\in\subpro{r}\in A$; $(ii)$
it joins $A$ with $A_{*1}$.

\begin{lemma}\label{lem:slicejoin:comp}
\slicejoinop{}$(A,A_{*})$ requires $\Oh{\acard{A}+\card{A_{*1}}}$ time.
\end{lemma}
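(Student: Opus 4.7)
The plan is to handle the slice and join halves of \slicejoinop{}$(A,A_{*})$ separately, exploiting the fact that one half of the sliced agglomerate is pinned down by $A$'s root.

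A first key observation is that, since $A$ is joinable with $A_{*}$, there is a single contact subproblem $\subpro{x}\in A_{*}$ with $\suf{i+1}\in\subpro{x}$ for every suffix $\suf{i}$ of $A$'s root $\subpro{r}$. Hence the columns of $A_{*1}$ are exactly those of $A_{*}$ whose contact suffix is one of the $\acard{A}$ suffixes $\suf{i+1}$ with $\suf{i}\in\subpro{r}$, giving $\acard{A_{*1}}=\acard{A}$. Consequently \joinop{}$(A,A_{*1})$, executed after the slice, costs $\Oh{\acard{A_{*1}}}=\Oh{\acard{A}}$ by Lemma~\ref{lem:join:comp}.

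For the slice I would \emph{not} invoke Lemma~\ref{lem:slice:comp} directly, since tagging every column of $A_{*}$ would charge $\Oh{\acard{A_{*}}}$, potentially far more than the target. Instead, drive the slice purely from the $A_{*1}$-side. Using the global pointer $\arrsub$ we recover $\subpro{x}$ from any suffix of $\subpro{r}$, partition the list of columns of $\subpro{x}$ according to whether each column's contact suffix has its predecessor in $\subpro{r}$ (this is $\Oh{\acard{A}}$ work, since $\subpro{x}$ only needs to see its own columns, not those associated with other contact subproblems), and then walk upward from $\subpro{x}$ toward the root of $A_{*}$ along $A_{*}$'s skip tree. At each visited skip-tree ancestor the subproblem is either transferred wholesale into $A_{*1}$ (when every column below it is tagged for $A_{*1}$) or split into an $A_{*1}$-part and an $A_{*2}$-part (when both tags occur); the decision is made in $\Oh{1}$ from the child-counts of skip nodes and the column-counts maintained at contact nodes, and guide links are used to patch in any intervening unsolved node.

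The hard part is arguing that this targeted traversal never touches a subproblem of $A_{*}$ that is destined to remain entirely inside $A_{*2}$, and that the number of skip-tree nodes actually visited is $\Oh{\card{A_{*1}}}$. The first point rests on a structural invariant: any chain of non-skip ancestors (non-branching, non-contact, non-root) has a unique child, so the columns under such a chain are either all in $A_{*1}$ or all in $A_{*2}$, and the chain can therefore be collapsed by a single skip link without being inspected. The second point follows because each skip-tree node that we do visit either contributes a fresh subproblem to $A_{*1}$ (wholesale or as the $A_{*1}$-half of a split) or is a branching ancestor whose visit can be charged to one of its $A_{*1}$-descendants; since $A_{*1}$ has $\card{A_{*1}}$ subproblems in total, the visits amount to $\Oh{\card{A_{*1}}}$ and their local rewirings of parent/child/skip/guide pointers to $\Oh{1}$ each. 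Adding the $\Oh{\acard{A}}$ spent at $\subpro{x}$ and in the final join then gives the claimed $\Oh{\acard{A}+\card{A_{*1}}}$ bound.
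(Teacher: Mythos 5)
Your overall skeleton is the right one (only the root-to-$\subpro{x}$ path of $A_{*}$ is affected, $\acard{A_{*1}}=\acard{A}$, and the final join costs $\Oh{\acard{A}}$), but there are two genuine gaps. The critical one is the step where you ``partition the list of columns of $\subpro{x}$'': that list has $\acard{\subpro{x}}$ entries --- all columns of $A_{*}$ whose contact suffix lies in $\subpro{x}$ --- and only $\acard{A}$ of them are destined for $A_{*1}$. The quantity $\acard{\subpro{x}}$ can be arbitrarily larger than $\acard{A}+\card{A_{*1}}$, and no maintained structure (in particular not \arrsub{}, which maps a suffix to its subproblem, not to its column-list node) lets you extract the relevant $\acard{A}$ columns without scanning the whole list. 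This is not cosmetic: \slicejoinop{} is invoked precisely when $\acard{A_{*}}>\acard{A}$, an exhausted agglomerate can be the target of many successive slice-joins, and a per-call overhead of $\Theta(\acard{\subpro{x}})$ would also destroy the amortized $\Oh{N}$ bound on $\rapbpart$ in Lemma~\ref{lem:rapN}. The paper's proof avoids the problem by \emph{never touching $\subpro{x}$'s column list}: it creates $\subpro{x_1}$ and $\subpro{x_2}$ but leaves the entire list attached to $\subpro{x_2}\in A_{*2}$ (harmless, since $A_{*2}$ is exhausted and the columns that ``should'' have migrated to $\subpro{x_1}$ are destroyed by the subsequent join anyway), and it performs the join not by looking up columns of $A_{*1}$ but by rewriting every column $\ppair{c_j,r_j}$ of $A$ as $\ppair{c_j,r_j+l}$, where $l$ is the number of nodes of the path $A_{*1}$.

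The second gap is the ``structural invariant'' you invoke to skip chains of non-skip ancestors: it is false. Every ancestor of $\subpro{x}$ on the path to $A_{*}$'s root contains one suffix of each tag-$1$ column and, in general, also suffixes of tag-$2$ columns (for instance those of the remaining columns of $\subpro{x}$ itself), so every node on that path is non-homogeneous and must be split; collapsing the non-skip ones via skip links without inspection would produce a wrong refinement. Fortunately this does not threaten the bound, because $\card{A_{*1}}$ equals exactly the number of nodes on that path: one can afford to visit and split each of them in $\Oh{1}$ time, which is what the paper does, and this is the actual reason the $\card{A_{*1}}$ term in the statement suffices.
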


\section{Optimal Algorithm for Multi-Selection of Suffixes}
\label{sec:algorithm}

\medskip
\sparagraph{Initialization stage.} 
We begin by initializing the bookkeeping data structures described in
Section~\ref{subsec:agglomerates}. Then, let \mselmset{} be an optimal
multi-selection algorithm for a \emph{multiset} of items, thus
generalizing the result in \cite{dobkin:munro} to multisets (see
Section~\ref{subsec:mselmset:algo:full}).  We call
\mselmset{}$\left(\set{T[1],\ldots,T[N]},\mathcal{R}\right)$ using the
alphabet order: this refines the suffixes in $S$ according to their
first symbols into the pivotal multisets $\mathcal{M}_0$,
$\mathcal{F}_1$,$\mathcal{M}_1$,~\dots,$\mathcal{F}_t$,
$\mathcal{M}_{t}$.  Each $\mathcal{M}_{i}\not=\emptyset$ forms a
degenerate subproblem $\subpro{i}$ since no ranks fall within it,
whereas each $\mathcal{F}_{j}$ forms either a solved or unsolved
subproblem $\subpro{j}$.  The agglomerates are initialized as
singletons: each $\subpro{x}$ forms an agglomerate $A_{x}$ that is
either exhausted (i.e.\mbox{} $\subpro{x}$ is either solved or
exhausted) or unsolved (i.e.\mbox{} it satisfies
Property~\ref{prop:leading-subproblem}). Each $A_{x}$ is moved into
one of the groups $\gunsolved$ and $\gexhausted$, which will be used
in the rest of the computation.

\sparagraph{Refinement stage.}
We proceed with the \emph{refine and aggregate stage}.  We execute a
loop until the group $\gunsolved$ is empty. In each iteration of the
loop we pick an agglomerate from $\gunsolved$ and we apply to it the
\emph{Refine and Aggregate Process}, shortly \RAP{}, described in
Section~\ref{subsec:rap}.  During each \RAP{}, some \emph{temporary
  agglomerates} will be created and they may not fulfill the
characterization given in Section~\ref{subsec:agglomerates}. Thus,
any agglomerate created during a \RAP{} is to be considered as
\emph{temporary} until it disappears (since it is joined with another)
or it is moved into either one of the global groups $\gunsolved$ and
$\gexhausted$, at the end of \RAP.

\sparagraph{Finalization stage.}
Once $\gunsolved$ is empty, we run a finalization stage that
returns the $K$ wanted suffixes $\suf{i_1} < \suf{i_2} < \cdots <
\suf{i_K}$ by suitably scanning $\sublist$.

\subsection{The Refine and Aggregate Process (RAP)}
\label{subsec:rap}

\medskip

\sparagraph{First step: establishing which kind of agglomerate.}  
The first kind is when $A$ is \emph{core cyclic}: there is one contact
subproblem $\subpro{x_{*}} \in A$, called the \emph{core} of $A$, such
that $(a)$~there exists at least one column $\ppair{c_{j},r_{j}}$ of
$A$ with $\suf{r_{j}+1}\in\subpro{x_{*}}$, and $(b)$~for each column
$\ppair{c_{i},r_{i}}$ of $A$, either $\suf{r_{i}+1}\in\subpro{x_{*}}$
or $\suf{r_{i}+1}\in\subpro{y_{i}}\not\in A$.  For example, consider
Fig.~\ref{fig:big:full}, where $A \equiv A_{2}$ is core cyclic with core
$\subpro{x_{*}} \equiv \subpro{19}$: the only two columns
$\ppair{c_{i},r_{i}}$ of $A_{2}$ with $\suf{r_{i}+1}\in\subpro{19}$
are $C_{21}=\ppair{132,133}$ and $C_{22}=\ppair{134,135}$, while the
others have $\suf{r_{i}+1}\in\subpro{y_{i}}\not\in A_{2}$.  The
columns in the core $\subpro{x_{*}}$ are all \emph{equal} and
\emph{consecutive} inside~$T$.
The second kind is when $A$ is \emph{generic} (i.e.\mbox{} not core
cyclic) as illustrated by the example in Fig.~\ref{fig:big:full}:
agglomerate $A_{3}$ is generic and ``acyclic'', as each column
$\ppair{c_{i},r_{i}}$ of $A_{3}$ has
$\suf{r_{i}+1}\in\subpro{y_{i}}\not\in A_{3}$; instead, agglomerate
$A_{1}$ is generic and ``cyclic'', as $C_{1}=\ppair{53,60}$ and
$C_{16}=\ppair{118,120}$ have $\suf{61}\in\subpro{17}\in A_{1}$ and
$\suf{121}\in\subpro{16}\in A_{1}$.  More formally, a generic and
``acyclic'' agglomerate~$A$ has each of its columns
$\ppair{c_{i},r_{i}}$ with $\suf{r_{i}+1}\in\subpro{y_{i}} \not \in
A$; a generic and ``cyclic'' agglomerate~$A$ is one where there exist
columns $\ppair{c_{j},r_{j}}$ and $\ppair{c_{j'},r_{j'}}$ such that
$\suf{r_{j}+1}\in\subpro{x}\in A$ and $\suf{r_{j'}+1}\in\subpro{x'}\in
A$ but $\subpro{x}\not=\subpro{x'}$. Note that cyclic
agglomerates of the second kind are treated differently from those of
the first kind.

\sparagraph{Second step: building the keys for the agglomerate.}  
We assign a \emph{key} to each column of agglomerate $A$.  If $A$ is
core cyclic (i.e.\mbox{} first kind), the key for each column
$\ppair{c,r}$ is as follows. Let
$\ppair{c_{1},r_{1}},\ppair{c_{2},r_{2}},\ldots,\ppair{c_{f},r_{f}}$
be the maximal sequence of (equal) columns of $A$ such that
$r+1=c_{1},r_{1}+1=c_{2},\ldots,r_{f-1}+1=c_{f}$.  The key for
$\ppair{c,r}$ is the sequence
$\suf{c_{1}}\suf{c_{2}}\ldots\suf{c_{f}}\suf{r_{f}+1}$.  If $A$ is
generic (i.e.\mbox{} second kind, acyclic or not), each column
$\ppair{c,r}$ simply gets $\suf{r+1}$ as key. What about individual
suffixes?  The key of each $\suf{i}$ is \emph{implicitly} the same as
that assigned to \emph{its} column $\ppair{c_{j},r_{j}}$, where
$c_{j}\le i\le r_{j}$.  However, unlike the columns of $A$, we
  do not access each suffix of $A$ to actually assign it its key,
which is retrieved on the fly only when needed.  Recall that at any
time the total order $\qlexord$ and the equivalence relation $\qequiv$
are defined on the suffixes (Section~\ref{subsec:subproblems}).
Hence, any two keys can be compared in $\Oh{1}$, since these suffixes
are either contact or root, so we can use \arrsub{} to retrieve
their labels: if $A$ is core cyclic, its key becomes $\slab{i}
\slab{i} \cdots \slab{i} \slab{j}$ for labels $\slab{i}$ (repeated $f$
times for $\suf{c_{1}}$, $\suf{c_{2}}$,~\dots, $\suf{c_{f}}$) and
$\slab{j}$ (for $\suf{r_{f}+1}$); if $A$ is generic, its key becomes a
single label (for $\suf{r+1}$).

\sparagraph{Third step: multi-selection on the leading subproblem seen
  as a multiset.}  
We retrieve the ranks $r^{w}_{1},\ldots,r^{w}_{\card{\subprorank{w}}}$
of the leading subproblem $\subpro{w}$ of $A$, using \rankstrut\
(Section~\ref{subsec:agglomerates}). We also retrieve all the suffixes
of $\subpro{w}$ (not explicitly stored for $\subpro{w}$) and their
keys computed in the second step, in
$\Oh{\card{\subpro{w}}}=\Oh{\acard{A}}$ time. Then, we call
\mselmset{}$\left(\subpro{w},\set{\rho_{1},\ldots,\rho_{\card{\subprorank{w}}}}\right)$,
where the ranks are $\rho_{j}=r^{w}_{j}-\slab{w}$ for the label
$\slab{w}$ of $\subpro{w}$, and the order among keys is the one given
by $\qlexord$ and $\qequiv$. Let the resulting pivotal multisets be
$\mathcal{M}_0,\mathcal{F}_1,\mathcal{M}_1,\ldots,\mathcal{F}_t,\mathcal{M}_{t}$,
where $\mathcal{F}_j$ contains the suffixes that are candidates for
some ranks and $t \leq \card{\subprorank{w}}$.  We assign \emph{tags}
to the columns in~$A$ that are consistent with the order among these
pivotal multisets, so that suffixes in the same multiset receive the
same tag. (This is useful for the refinement in the next step.) For
each $0\le j\le t$, let $h_{j}$ be the number of
$\mathcal{M}_{x}\not=\emptyset$ with $0\le x\le j$ (some of the
$\mathcal{M}_{j}$'s may be empty).  We use them to tag each column $C$
of $A$: the tag is $h_{0}$ if the suffix of $\subpro{w}$ that
corresponds to $C$ belongs to $\mathcal{M}_{0}$; the tag is
$j+h_{j-1}$ if that suffix belongs to $\mathcal{F}_{j}$, or $j+h_{j}$
if that suffix belongs to $\mathcal{M}_{j}$, for some $1\le j\le t$.

\sparagraph{Fourth step: slicing the agglomerate.}  
We perform the slicing of $A$ (Section~\ref{subsec:agg:oper}) using
the $d$ tags computed in the third step. We call \sliceop{}$(A)$ and
obtain agglomerates $A_{1},\ldots,A_{d}$, where the columns in each
$A_{i}$ have the same tag and, for $i \neq j$, the tag of $A_{i}$ is
different from that of $A_{j}$. Then $A_{1},\ldots,A_{d}$ are moved
into four groups: the two global ones previously mentioned,
$\gexhausted$ and $\gunsolved$, and two local ones for temporary
agglomerates, called $\gundecided$ and $\gjoinable$, according to the
following rule for the given $A_{i}$, where $1 \leq i \leq d$:

\vspace*{-1ex}

\begin{quote}
$\bullet$~\texttt{if} $A_{i}$'s tag corresponds to a multiset $\mathcal{M}_{j}$
  (see the third step), move it to $\gundecided$;\\
$\bullet$~\texttt{else} $A_{i}$'s tag corresponds to a multiset $\mathcal{F}_{j}$:\\
\hspace*{3em} $-$~\texttt{if} $\acard{A_{i}} = 1$, move $A_{i}$ to $\gexhausted$;\\
\hspace*{3em} $-$~\texttt{else if} $A$ was generic and there is a column
    $\ppair{c_{i},r_{i}}$ of $A_{i}$ with $\suf{r_{i}+1}\in\subpro{x}\in A$,\\
\hspace*{6em}move $A_{i}$ to $\gunsolved$ (since $A_{i}$ satisfies Property~\ref{prop:leading-subproblem});\\
\hspace*{3em} $-$~\texttt{else} move $A_{i}$ to $\gjoinable$.
\end{quote}

\vspace*{-1ex}

This step has a subtle point, since it implicitly induces the
  refinement of many subproblems simultaneously without paying the
  rescanning cost.  Not only the leading subproblem $\subpro{w}$ is
refined into its pivotal multisets using its ranks in
$\subprorank{w}$, but each active subproblem $\subpro{i}$ of $A$ is
refined as well.  However, the refinement of $\subpro{i} \neq
\subpro{w}$ could be coarser than what obtained by refining $\subpro{i}$ 
directly using its ranks in $\subprorank{i}$: indeed, as pointed out
in Section~\ref{subsec:agglomerates}, the $\alpha$-string $\alpha_{w}$
is a proper suffix of $\alpha_{i}$, the
$\alpha$-string of $\subpro{i}$ (descendant of $\subpro{w}$). But $\subpro{i}$'s induced
refinement is for free since the cost is charged to $\subpro{w}$,
and any subsequent refinement for $\subpro{i}$ will surely create new
subproblems, for which we can pay (see what claimed for data structures
in Section~\ref{subsec:agglomerates} and Lemma~\ref{lem:slice:comp}).

\sparagraph{Fifth step: processing \rm$\gundecided$.}
The group $\gundecided$ collects the temporary agglomerates $A_{i}$'s
for which there are no ranks from $\subprorank{w}$ falling within
$A_{i}$. However, there could be other ranks in $\mathcal{R} -
\subprorank{w}$ that could involve $A_{i}$. Hence, $(i)$~$A_{i}$ may
or may not contain unsolved subproblems, and $(ii)$~even if $A_{i}$
contains unsolved subproblems, it may not satisfy
Property~\ref{prop:leading-subproblem}. (Here we may create a
neighborhood from a subproblem of $A_{i}$ as discussed in
Section~\ref{subsec:subproblems}.) For each $A_{i}$ in $\gundecided$,
we retrieve the topmost unsolved subproblems $\subpro{i_1} <
\subpro{i_2} < \cdots < \subpro{i_{d_i-1}}$ from $A_{i}$ in preorder,
in $O(\acard{A_{i}})$ time using its skip tree
(Section~\ref{subsec:agglomerates}).  Since no ancestor of
$\subpro{i_t}$ is unsolved, we assign tag $t$ to its columns, $1 \leq
t \leq d_i-1$. We assign tag $d_i$ to the remaining columns of
$A_{i}$, which are not associated with any $\subpro{i_t}$. We call
\sliceop{}$(A_{i})$ with these $d_i$ tags: for $1 \leq t \leq d_i-1$,
we create a new agglomerate $A_{i_t}$ with leading subproblem
$P_{i_t}$ and put it into $\gunsolved$. We create a new agglomerate
$A_{i_{d_i}}$, exhausted by construction, and put it into
$\gexhausted$.

\sparagraph{Sixth step: processing \rm$\gjoinable$.}  
This step represents the ``aggregate'' part of \RAP.  For each
agglomerate $A_{i}$ in $\gjoinable$, let $A_{i*}$ be the agglomerate
with which $A_{i}$ is joinable (Section~\ref{subsec:agg:oper}). Note
that $A_{i*}$ is either in $\gunsolved$ or $\gexhausted$ (but not in
$\gjoinable$, see the fourth step).

If $A_{i*}$ is in $\gunsolved$, or if $\acard{A_{i*}}=\acard{A_{i}}$,
we call \joinop{}$(A_{i},A_{i*})$ and move the resulting agglomerate
in $\gunsolved$.  In this way, we maintain
Property~\ref{prop:leading-subproblem}: if $A_{i*}$ is unsolved, then
it satisfies the property; if $A_{i*}$ is exhausted, the leading
subproblem of $A_{i}$ is also viable for the agglomerate obtained from
\joinop{}$(A_{i},A_{i*})$ (since $\acard{A_{i*}}=\acard{A_{i}}$).

If $A_{i*}$ is in $\gexhausted$ and $\acard{A_{i*}}>\acard{A_{i}}$:
we call \slicejoinop{}$(A_{i},A_{i*})$ producing $A_{i*1}$ and
$A_{i*2}$.  We move $A_{i*1}$ to $\gunsolved$ as it satisfies
Property~\ref{prop:leading-subproblem}. We move $A_{i*2}$ to
$\gexhausted$: since $A_{i*}$ was exhausted, it did not have a leading
subproblem and $A_{i}$'s leading subproblem is not viable for the
resulting $A_{i*2}$ (since $\acard{A_{i*}}>\acard{A_{i}}$, at least
one of the conditions of Property~\ref{prop:leading-subproblem} is
violated).

\section{Details of the Optimal Algorithm}
\label{sec:details-optimal-algorithm}

\subsection{Multi-selection on multisets}
\label{subsec:mselmset:algo:full}

Consider the multi-selection problem on \emph{multisets} of elements
that are comparable in $\Oh{1}$ time.  The multi-selection algorithm
in \cite{dobkin:munro} does not exploit the presence of equal
elements. Let us describe our variant.  Given a multiset $\mathcal{M}$
and $K$ ranks $r_1<\cdots<r_K$, we want to partition $\mathcal{M}$
into its \emph{pivotal multisets}
$\mathcal{M}_0,\mathcal{F}_1,\mathcal{M}_1,\ldots,\mathcal{F}_t,\mathcal{M}_{t}$
such that the following holds:
\begin{enumroman}
\item For any $0 <  i<j\le t$, for any 
$p_i\in\mathcal{F}_i,e_i\in\mathcal{M}_i,p_j\in\mathcal{F}_j,e_j\in\mathcal{M}_j$, 
we have that $p_i<e_i<p_j<e_j$; moreover, for any 
$e_0\in\mathcal{M}_0,p_1\in\mathcal{F}_1$, we have that $e_0<p_1$.
\item The elements in $\mathcal{F}_i$ are equal and $\card{\mathcal{F}_i}$ is the 
multiplicity of $p_i\in\mathcal{F}_{i}$.
\item For each rank $r_i$ there exists a $\mathcal{F}_j$ such 
that each $p_j$ in $\mathcal{F}_j$ has rank $r_i$.
\item For each $\mathcal{F}_j$ there exists a rank $r_i$ such 
that each $p_j$ in $\mathcal{F}_j$ has rank $r_i$.
\end{enumroman}
Notice that $t\le K$ and $\card{\mathcal{F}_i}\ge 1$ whereas there may 
be some $\mathcal{M}_{i}=\emptyset$.
Let us now describe our algorithm.
\begin{enumerate}
\item [\ ] \hspace*{-3ex}\mselmset{}$\left(\mathcal{M},r_{1},\ldots,r_{K}\right)$
\item If $K=0$ exit. If $K=1$ select and output the element of
rank $r_1$.
\item Find the largest $r_l\le\ceil{\frac{N}{2}}$. Then select the element $p'$ ($p''$) with rank 
$r_l$ ($r_{l+1}$) and all the elements of $\mathcal{M}$ that are equal 
to $p'$ ($p''$).
\item\label{alg:mm:step:part:full} Partition $\mathcal{M}$ into 
$\mathcal{M}',\mathcal{F}',\mathcal{M}'',\mathcal{F}'',\mathcal{M}'''$ 
such that $(a)$ $\mathcal{F}'$ ($\mathcal{F}''$) contains all the elements 
in $\mathcal{M}$ equal to $p'$ ($p''$) and $(b)$ for any 
$e'\in\mathcal{M}',e''\in\mathcal{M}'', e'''\in\mathcal{M}'''$ we have 
that $e'<p'<e''<p''<e'''$.
\item Find the largest $r_a\le\card{\mathcal{M}'}$. Call 
\mselmset{}$\left(\mathcal{M}',r_1,\ldots,r_a\right)$.

\item Let 
$q=\card{\mathcal{M}'}+\card{\mathcal{F}'}+\card{\mathcal{M}''}$.
Find the smallest $r_{b}$ s.t. 
$r_b>q$.\\
Call \mselmset{}$\left(\mathcal{F}''\cup\mathcal{M}''',r_b-q,\ldots,r_K-q\right)$.
\end{enumerate}

The complexity of \mselmset{} can be expressed in terms of the sizes of 
the pivotal multisets of $\mathcal{M}$ w.r.t. $r_{1},\ldots,r_{K}$.
For the sake of description, we will assume that $\log 0=0$.

\begin{lemma}\label{lem:multisel:multiset:full}
The running time of the algorithm \mselmset{} on a multiset $\mathcal{M}$ is upper bounded 
by $c\card{\mathcal{M}}\log\card{\mathcal{M}} 
-c\card{\mathcal{M}_{0}}\log \card{\mathcal{M}_{0}}
-c\sum_{i=1}^{t} \left(\card{\mathcal{F}_i}+\card{\mathcal{M}_i}\right)\log 
\left(\card{\mathcal{F}_i}+\card{\mathcal{M}_i}\right)
+c\card{\mathcal{M}}$
for a suitable integer constant $c$.
\end{lemma}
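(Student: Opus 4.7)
My plan is to prove the bound by strong induction on $|\mathcal{M}|$. The two base cases are immediate: for $K=0$ the algorithm returns at once (cost $0$), and for $K=1$ a linear-time selection suffices (cost at most $c|\mathcal{M}|$ for $c$ large enough), both absorbed by the $+c|\mathcal{M}|$ term in the target formula. For the inductive step with $K\ge 2$, Steps~2 and~3 of \mselmset{} take $c_0|\mathcal{M}|$ time for some constant $c_0$ (one selection plus a 5-way partition), so it remains to bound the two recursive invocations.

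The key structural observation is that the pivotal multisets $\mathcal{M}_0,\mathcal{F}_1,\mathcal{M}_1,\ldots,\mathcal{F}_t,\mathcal{M}_t$ of the original problem split cleanly across the two recursive calls. Let $i^*$ be the index with $p' \in \mathcal{F}_{i^*}$; since the definition of rank forces $p'\ne p''$, we also have $p'' \in \mathcal{F}_{i^*+1}$. One checks that $\mathcal{F}'=\mathcal{F}_{i^*}$, $\mathcal{M}''=\mathcal{M}_{i^*}$, $\mathcal{F}''=\mathcal{F}_{i^*+1}$, while $\mathcal{M}' = \mathcal{M}_0\cup\mathcal{F}_1\cup\cdots\cup\mathcal{F}_{i^*-1}\cup\mathcal{M}_{i^*-1}$ and $\mathcal{M}''' = \mathcal{M}_{i^*+1}\cup\mathcal{F}_{i^*+2}\cup\cdots\cup\mathcal{M}_t$. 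Consequently, the pivotal multisets of the left subproblem on $\mathcal{M}'$ are exactly $\mathcal{M}_0,\mathcal{F}_1,\mathcal{M}_1,\ldots,\mathcal{F}_{i^*-1},\mathcal{M}_{i^*-1}$, and those of the right subproblem on $\mathcal{F}''\cup\mathcal{M}'''$ are $\emptyset,\mathcal{F}_{i^*+1},\mathcal{M}_{i^*+1},\ldots,\mathcal{F}_t,\mathcal{M}_t$, where the initial empty $\mathcal{M}_0$-analogue arises because $\mathcal{F}''$ sits at the bottom of the new multiset.

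Write $n_L=|\mathcal{M}'|$, $n_R=|\mathcal{F}''\cup\mathcal{M}'''|$, and $n_M=|\mathcal{F}_{i^*}|+|\mathcal{M}_{i^*}|$, so that $n_L+n_M+n_R=N$. Feeding the inductive hypothesis into both recursive calls and adding the $c_0 N$ partitioning cost, the claimed bound reduces to
$$
c_0 N + c\bigl(n_L\log n_L + n_R\log n_R + n_M\log n_M\bigr) \;\le\; cN\log N + c\,n_M.
$$
Since Step~2 picks $r_l$ as the largest rank with $r_l\le\lceil N/2\rceil$ and hence $r_{l+1}>\lceil N/2\rceil$, a short computation gives $n_L\le\lceil N/2\rceil -1 < N/2$ and $n_R\le\lfloor N/2\rfloor\le N/2$; therefore
$$
n_L\log n_L + n_R\log n_R \;\le\; (n_L+n_R)(\log N - 1) \;=\; (N-n_M)(\log N - 1),
$$
and adding $n_M\log n_M$ yields $n_L\log n_L + n_R\log n_R + n_M\log n_M \le N\log N - N + n_M + n_M\log(n_M/N) \le N\log N - N + n_M$. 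Taking $c\ge c_0$ absorbs the leading $c_0 N$ into the $-cN$ slack and closes the induction.

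The main obstacle I expect is the bookkeeping of the pivotal-multiset decomposition rather than the analytic concavity step. In particular, one must verify that the distinguished $-c|\mathcal{M}_0|\log|\mathcal{M}_0|$ term of the target bound is produced exactly once (by the left subproblem, whose own $\mathcal{M}_0$ coincides with the original, while the right subproblem contributes a vacuous term since its leading $\mathcal{M}_0$-analogue is empty), and that the "lost" middle block $\mathcal{F}_{i^*}\cup\mathcal{M}_{i^*}$, which is never recursed into, is accounted for exactly by the $-cn_M\log n_M$ slack supplied above. A minor corner case is when $r_l$ is the last rank in the call (no $r_{l+1}$), where the algorithm falls back to a single-pivot split and the same argument goes through with $\mathcal{F}''$ and $\mathcal{M}'''$ suitably empty.
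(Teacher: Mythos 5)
Your proof is correct and follows essentially the same route as the paper's: both hinge on the observation that the pivotal multisets are never broken and split cleanly across the two recursive calls (with $\mathcal{F}'=\mathcal{F}_{i^*}$, $\mathcal{M}''=\mathcal{M}_{i^*}$, $\mathcal{F}''=\mathcal{F}_{i^*+1}$ identified at the current level), and both close the induction using the fact that each recursive subproblem has size at most $|\mathcal{M}|/2$, so that the $\log(|\mathcal{M}|/2)=\log|\mathcal{M}|-1$ savings pays for the linear partitioning cost while the middle block's term is absorbed via $\log(|\mathcal{F}_{i^*}|+|\mathcal{M}_{i^*}|)\le\log|\mathcal{M}|$. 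The only difference is presentational: the paper routes the cancellation through an auxiliary function $f$ (the running time plus the subtracted entropy terms) and two separate inductions, whereas you perform the same cancellation directly inside a single strong induction.
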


\begin{proof}
  From step~\ref{alg:mm:step:part:full} and because of the choice of $p'$ and 
$p''$, it is clear that $\mathcal{F}'$,$\mathcal{M}''$ and 
$\mathcal{F}''$ are three of the pivotal multisets. Let $\mathcal{F}'$ 
and $\mathcal{M}''$  
be $\mathcal{F}_s$ and $\mathcal{M}_s$, respectively (thus 
$\mathcal{F}''$ is $\mathcal{F}_{s+1}$). Also, the algorithm never breaks 
any pivotal multiset and the partitioning of the input elements for 
the recursive calls is also a partitioning of the (yet unknown) 
pivotal multisets. 
All the non-recursive steps of the algorithm require $\le c\card{\mathcal{M}}$ time, 
for a suitable integer constant $c$. 
Therefore we have that the running time is upper bounded by the function 
$g\left(\card{\mathcal{M}}\right)=c\card{\mathcal{M}} + 
g\left(\card{\mathcal{M}'}\right)+ 
g\left(\card{\mathcal{F}''\cup\mathcal{M}'''}\right)$.
To prove the upper bound for $g$ we use the function  
$f\left(\card{\mathcal{M}}\right)=
c\card{\mathcal{M}} +c\left(\card{\mathcal{F}_s}+\card{\mathcal{M}_s}\right)
\log\left(\card{\mathcal{F}_s}+
\card{\mathcal{M}_s}\right) +f\left(\card{\mathcal{M}'}\right)+ 
f\left(\card{\mathcal{F}''\cup\mathcal{M}'''}\right)$.
Let us show that
$f\left(\card{\mathcal{M}}\right)=
g\left(\card{\mathcal{M}}\right)+c\sum_{i=1}^{t} 
\left(\card{\mathcal{F}_i}+\card{\mathcal{M}_i}\right)\log 
\left(\card{\mathcal{F}_i}+\card{\mathcal{M}_i}\right)
+c\card{\mathcal{M}_{0}}\log \card{\mathcal{M}_{0}}
$. Since the algorithm does not break $\mathcal{M}$'s pivotal 
multisets, we know that the pivotal multisets of $\mathcal{M}'$ and 
$\mathcal{F}''\cup\mathcal{M}'''$ are 
$\mathcal{M}_0,\mathcal{F}_1,\mathcal{M}_1,\ldots,\mathcal{F}_{s-1},\mathcal{M}_{s-1}$
and 
$\mathcal{F}_{s+1},\mathcal{M}_{s+1},\ldots,\mathcal{F}_{t},\mathcal{M}_{t}$, 
respectively. Hence, by induction, we have that 
$$f\left(\card{\mathcal{M}}\right)=
c\card{\mathcal{M}} +c\left(\card{\mathcal{F}_s}+\card{\mathcal{M}_s}\right)
\log\left(\card{\mathcal{F}_s}+
\card{\mathcal{M}_s}\right)+$$ 
$$+g\left(\card{\mathcal{M}'}\right)+c\sum_{i=1}^{s-1} 
\left(\card{\mathcal{F}_i}+\card{\mathcal{M}_i}\right)\log 
\left(\card{\mathcal{F}_i}+\card{\mathcal{M}_i}\right)
+c\card{\mathcal{M}_{0}}\log \card{\mathcal{M}_{0}}+$$
$$+g\left(\card{\mathcal{F}''\cup\mathcal{M}'''}\right)+
c\sum_{i=s+1}^{t} 
\left(\card{\mathcal{F}_i}+\card{\mathcal{M}_i}\right)\log 
\left(\card{\mathcal{F}_i}+\card{\mathcal{M}_i}\right).
$$
Thus, by the definition of $g$, the relation between $f$ and $g$ 
is proven.
All we have to do now is to prove that 
$f\left(\card{\mathcal{M}}\right)\le c\card{\mathcal{M}}\log \card{\mathcal{M}} +c\card{\mathcal{M}}$, 
and the wanted upper 
bound for $g$ will follow by subtraction.
By the definition of $\mathcal{F}'$ we know that both
$\card{\mathcal{M}'}$ and $\card{\mathcal{F}''\cup\mathcal{M}'''}$ 
are less than $\frac{\card{\mathcal{M}}}{2}$. Thus, by induction we have the following:
$f\left(\card{\mathcal{M}}\right)\le c\card{\mathcal{M}}+ c\left(\card{\mathcal{F}_s}+\card{\mathcal{M}_s}\right)
\log\left(\card{\mathcal{F}_s}+\card{\mathcal{M}_s}\right)+
c\card{\mathcal{M}'}+c\card{\mathcal{M}'}\log\card{\mathcal{M}'}+
c\card{\mathcal{F}''\cup\mathcal{M}'''}+c\card{\mathcal{F}''\cup\mathcal{M}'''}\log\card{\mathcal{F}''\cup\mathcal{M}'''}
\le c\card{\mathcal{M}}+ c\left(\card{\mathcal{F}_s}+\card{\mathcal{M}_s}\right)
\log\card{\mathcal{M}}+
c\card{\mathcal{M}'}+c\card{\mathcal{M}'}\log\frac{\card{\mathcal{M}}}{2}+
c\card{\mathcal{F}''\cup\mathcal{M}'''}+c\card{\mathcal{F}''\cup\mathcal{M}'''}\log\frac{\card{\mathcal{M}}}{2}=
c\card{\mathcal{M}}\log \card{\mathcal{M}} +c\card{\mathcal{M}}$.
\end{proof}

\subsection{Dealing with the agglomerates in \gundecided}
\label{sub:undecided-details:full}

\iparagraph{First:} For each agglomerate $A_{i}$ in $\gundecided$, we
find its highest (i.e.\mbox{} closest to the tree root of $A_{i}$)
unsolved subproblems (if any) and we collect them in
$\visitlead_{i}$. Since visiting the whole tree of $A_{i}$ would cost
too much ($\Oh{\card{A_{i}}}$ time), we use $A_{i}$'s skip tree and
its guide links. In this way the visit takes $\Oh{\acard{A_{i}}}$
time.  Specifically, we use two other lists $\visitlist_{i}$ and
$\visitflag_{i}$ (initially empty) besides $\visitlead_{i}$, starting
from the root with procedure \leadvisit{}$(\subpro{r})$ defined as
follows for a generic subproblem $\subpro{r}$:
\begin{enumerate}
\item If $\subpro{r}$ is unsolved, we append $\subpro{r}$, $1$ and
  $\card{\subpro{r}}$ at the end of $\visitlead_{i}$, $\visitflag_{i}$
  and $\visitlist_{i}$, respectively, and return.
\item Otherwise, if
$\subpro{r}$ is a contact node, we append $0$ and $\acard{\subpro{r}}$
to $\visitflag_{i}$ and $\visitlist_{i}$, respectively (but we do not
return yet). 
\item For each child $\subpro{r_{j}}$ of
$\subpro{r}$ (in $A_{i}$'s skip tree), from the leftmost to the
rightmost one, we do the following. If $\subpro{r}$ has a guide link
to an ancestor $\subpro{r_{x}}$ of $\subpro{r_{j}}$, we append
$\subpro{r_{x}}$, $1$ and $\card{\subpro{r_{x}}}$ at the end of
$\visitlead_{i}$, $\visitflag_{i}$ and $\visitlist_{i}$, respectively,
and return. Otherwise, if no such guide link exists, we call
\leadvisit{}$(\subpro{r_{j}}$).  
\end{enumerate}

\iparagraph{Second:} For each agglomerate $A_{i}$ in $\gundecided$, we
tag its columns so that we are able to either $(a)$~classify $A_{i}$
as unsolved or exhausted or $(b)$~partition $A_{i}$ into some smaller
unsolved or exhausted agglomerates.  Basically, a column $C$ of
$A_{i}$ has the tag $l$, $1\le l<\card{\visitlead_{i}}+1$, if the
subtree rooted at the node $\subpro{j}$ in position $l$ in
$\visitlead_{i}$ contains the contact node with which $C$ is
associated. Otherwise, if no such node exists in $\visitlead_{i}$, $C$
has the tag $\card{\visitlead_{i}}+1$.  Specifically, we compute the
(inclusive) prefix sum $\visitsum_{i}$ of $\visitlist_{i}$ and set
$\visitsum_{i}[0]=0$.  We scan the columns of $A_{i}$ in contact
visiting order (i.e.\mbox{} by using the preorder of the contact nodes
in $A_{i}$). Let us consider the $j$-th one of them and let $l$, $1\le
l\le\card{\visitsum_{i}}$, be the index such that $\visitsum_{i}[l-1]<
j\le\visitsum_{i}[l]$. The $j$-th column is tagged with $l$
($\card{\visitlead_{i}}+1$) if $\visitflag_{i}[l]=1$ ($=0$).

\iparagraph{Third:} For each agglomerate $A_{i}$ in $\gundecided$, we
perform the slicing with the above tags. Let
$t(i)=\card{\visitlead_{i}}$.  If the tags of $A_{i}$ are all equal to
some $l$ s.t. $1\le l\le t(i)$ (to $t(i)+1$), we move $A_{i}$ to
$\gunsolved$ (to $\gexhausted$) and the step ends.  Otherwise, we call
\sliceop{}$(A_{i})$ obtaining $A_{i_{1}},\ldots,A_{i_{t(i)}}$ and,
possibly, $A_{i_{t(i)+1}}$, where $A_{i_{l}}$ is the agglomerate whose
columns have $l$ as tag, for each $1\le l\le t(i)+1$.  The node in
position $l$ of $\visitlead_{i}$ is the leading subproblem of
$A_{i_{l}}$, for each $1\le l\le t(i)$.  Finally, we move
$A_{i_{t(i)+1}}$ to $\gexhausted$, and all the other $A_{i_{j}}$'s to
$\gunsolved$.

To understand why the above computation is correct, let us describe
some properties of the tagging done. For any suffix $\suf{j}$, let us
denote with $\tau(\suf{j})$ the tag of the column $\ppair{c,r}$ such
that $c\le j\le r$.

If a subproblem $\subpro{r}$ of $A_{i}$ is \emph{unsolved} then the 
following holds: 
$(a)$ $\tau(\suf{j'})=\tau(\suf{j''})$, for any two 
$\suf{j'},\suf{j''}\in\subpro{r}$ and $(b)$ 
$1\le\tau(\suf{j'})\le t(i)$.

On the other hand, if for a subproblem $\subpro{r}$ of $A_{i}$ we have that 
conditions $(a)$ and $(b)$ hold, then on the path from 
$\subpro{r}$ to the root there has to be at least one \emph{unsolved subproblem} 
whose suffixes have the same tag $\tau(*)$ as $\subpro{r}$'s (maybe only 
$\subpro{r}$ itself, if it is unsolved). Also, the highest unsolved 
subproblem on said path must be the one in position $\tau(\suf{j'})$ in 
$\visitlead_{i}$.

Given the definition of \sliceop{} in Section~\ref{subsec:agg:oper},
the correctness follows.

\subsection{Slicing agglomerates}
\label{subsec:slice:alg:full}

As we have seen in Section~\ref{subsec:agg:oper}, the slice operation 
receives in input an agglomerate $A$ whose columns have been tagged 
with integers in $\set{1,\ldots,d}$, where $d\le\acard{A}$. Note that
there should be at least two columns with different tags, since
otherwise we do not need to run the slicing.

During the slice operation we deal with instances of the following
\emph{grouping problem}: We are given a list $L$ of objects, each with
an integer tag in $\set{1,\ldots,d,d+1}$ (where the tags of the
columns are $d$ but during parts of the slicing we will need an extra
tag for special purposes).  We want to partition $L$ into $d'\le d+1$
lists $L_{i_{1}},\ldots,L_{i_{d'}}$ such that an object $o\in L_{j}$
iff $o$'s tag is $j$, for each $j\in\set{i_{1},\ldots,i_{d'}}
\subseteq \set{1,\ldots,d,d+1}$. Note that the order in which the
lists $L_{i_{1}},\ldots,L_{i_{d'}}$ are produced does not
necessarily have to follow the order of the tags. Also, the problem is
easy if $|L| = \Omega(d)$ since it falls within the radix sort
scheme. In our case, we can spend $O(d)$ preprocessing time and space
beforehand: after that, for each instance $L$ of the grouping problem,
we assume that $|L| = o(d)$, and we cannot pay $O(d)$ time but just
$O(|L|)$ time.

The procedure \grouping{} solves the above problem as follows. Between
one call and the other, it reuses the same array $J$ of $d+1$ slots
that is allocated at the beginning of the slice operation and is never
reset from one call to another.  Each slot $J[i]$ has two fields:
$J[i].p$, a list pointer, and $J[i].t$, an integer. Let $\eta$ be an
integer timestamp unique for each call (since we can just maintain an
increasing integer throughout all the calls to \grouping{}).
While we scan $L$, we build a list $L'$ of lists and then
return it.  Let $c$ be the tag of the current object $o$. During the
scan we have two cases: $(i)$ if $J[c].t\not=\eta$, we set
$J[c].t=\eta$ and start a new list $L_c$ with $o$ as first element;
also, we append $L_c$ to $L'$ and set $J[c].p$ to point to $L_{c}$;
$(ii)$ if $J[c].t=\eta$, we append $o$ to the list pointed by
$J[c].p$.

\begin{lemma}
\label{lem:grouping:full}
After $O(d)$ preprocessing time and space, each call to the procedure
\grouping{} requires $\Oh{\card{L}}$ time.
\end{lemma}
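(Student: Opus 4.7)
The plan is to separate the one-time setup cost from the per-call cost, and to argue that the timestamping trick on $J$ allows each scan of $L$ to proceed without paying for any reinitialisation of the shared array.

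First I would handle the preprocessing. The array $J$ has $d+1$ slots, each containing one pointer field and one integer field, so allocating it and initialising every $J[i].t$ to a sentinel value (say $0$) that is strictly smaller than any timestamp $\eta$ ever produced takes $\Oh{d}$ time and $\Oh{d}$ space. This setup is done once at the beginning of the enclosing slice operation and is shared across all calls to \grouping{} performed inside it; thus its cost is not charged to any individual call.

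Next I would analyse a single call to \grouping{}$(L)$. Let $\eta$ be the timestamp associated to this call; by construction $\eta$ is strictly larger than any $\eta'$ used in a previous call, and strictly larger than the sentinel, hence $J[c].t=\eta$ holds for a given slot $c$ at this moment iff slot $c$ was already touched \emph{during the current call}. The procedure scans $L$ once and, for each element $o$ with tag $c$, performs: one array access $J[c]$, one comparison $J[c].t\stackrel{?}{=}\eta$, and then either (case $J[c].t\neq\eta$) the constant-time work of allocating a new list $L_c$ containing $o$, appending $L_c$ to $L'$, and writing $J[c].t\leftarrow\eta$ and $J[c].p\leftarrow L_c$, or (case $J[c].t=\eta$) the constant-time work of appending $o$ to the list already pointed by $J[c].p$. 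Every per-element action is $\Oh{1}$, and summing over the $\card{L}$ elements gives the claimed $\Oh{\card{L}}$ bound.

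The only subtle point, which I would address explicitly, is correctness of the timestamp mechanism: since $J$ is never reset between calls, a stale value $J[c].p$ may still sit in the slot from a previous call, but the test $J[c].t\neq\eta$ guarantees that this stale pointer is overwritten rather than read, so elements with the same tag $c$ in the current $L$ are correctly grouped into the single list $L_c$ created on the first occurrence of $c$ during this scan. There is no genuine obstacle here; the main thing to make explicit is precisely this separation of the $\Oh{d}$ amortised preprocessing from the $\Oh{\card{L}}$ per-call cost, which is exactly what makes grouping usable in the slice operation even when $\card{L}=o(d)$.
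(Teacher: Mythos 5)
Your proof is correct and follows exactly the argument the paper intends: the lemma is left without an explicit proof because it is immediate from the description of \grouping{} (allocate $J$ once in $\Oh{d}$ time, then each call does $\Oh{1}$ work per scanned object, with the timestamp test preventing any reinitialisation of $J$ between calls). Your explicit remark that stale pointers are overwritten rather than read is the only nontrivial point, and you handle it correctly.
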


At this point, we can describe \sliceop{}$(A)$, which has three main
phases: pruning, slicing, and finishing.

\subsubsection{Pruning phase}

The goal is to identify some relevant nodes in the tree reprensenting
the agglomerate~$A$. Speficically, a node $\subpro{i}\in A$ is
\emph{homogeneous} if all the columns associated with all the contact
nodes in the subtree rooted at $\subpro{i}$ have the same tag. We want
to find the set $\pruned$ of the \emph{highest} (i.e. closest to the
root) \emph{homogeneous nodes} of $A$ and tag each one of them
\emph{with the common tag of their columns}. Let us recall
(Lemma~\ref{lem:slice:comp}) that we need an implementation of
\sliceop{} with a time complexity that is of the order of the
number of columns of $A$ plus the total number of the new
  subproblems created. Hence, we cannot touch all the homogeneous
nodes of $A$ since they will not be refined into new subproblems at
this stage. If we are able to find the set $\pruned$ efficiently then,
later on, each subtree of $A$ rooted at a node in $\pruned$ will be just
linked to the corresponding new agglomerate without accessing any of
its internal nodes.  The pruning phase proceeds with the following
steps.

\iparagraph{First:} We traverse \emph{the skip tree of $A$} level by
level from the bottom one.
\begin{itemize}
\item For each $\subpro{l}$ at the lowest level we do the
  following. $(i)$ Since $\subpro{l}$ is a contact node (and a leaf),
  we call \grouping{} on its (tagged) column list. $(ii)$ Then we tag
  $\subpro{l}$ with $c$ (resp., with $d+1$) if from \grouping{} we get
  only one list $L_{c}$ (resp., we get more than one list).
\item For each $\subpro{i}$ at a generic level (different from
  the lowest) we do the following. $(i)$ If $\subpro{i}$ is a leaf, we
  perform the same steps as we did at the lowest level.  $(ii)$
  Otherwise, if $\subpro{i}$ is an internal node, we call \grouping{}
  on the list of objects, where each object is one of $\subpro{i}$'s
  children (and their tags); if $\subpro{i}$ is also a contact node,
  we add more objects to the list, where each object is one of
  $\subpro{i}$'s columns (which already have tags). $(iii)$ Then we
  tag $\subpro{i}$ as we did at the lowest level.
\end{itemize}

\iparagraph{Second:} We traverse $A$'s \emph{skip tree} with the
following recursive \skipvisit{}$(\subpro{r})$, starting from the
root.  \skipvisit{}$(\subpro{i})$ is defined for a node $\subpro{i}$
in the skip tree as follows: $(i)$ if $\subpro{i}$'s tag is $<d+1$, we
output a pointer to $\subpro{i}$ and return; $(ii)$ otherwise, if the
tag is $d+1$ we call \skipvisit{}$(\subpro{i_{j}})$, for each child
(in the skip tree) $\subpro{i_{j}}$ of $\subpro{i}$.

\iparagraph{Third:} We are ready to retrieve and mark the nodes to be
added to $\pruned$. (Note that the root $\subpro{r}$ cannot belong to
$\pruned$.) For each node $\subpro{i}$ outputted in the previous step,
we proceed as follows:
\begin{enumroman}
\item We retrieve $\subpro{i}$'s parent in the skip 
tree $\subpro{j}$. 
\item If $\subpro{i}$ is the $x$-th child of $\subpro{j}$ in the
    skip tree of $A$, we find $\subpro{j}$'s $x$-th child in
    the tree of $A$, say $\subpro{j_{x}}$: note that $\subpro{i}$ is
  a descendant of $\subpro{j_{x}}$ in the tree of $A$ (they could be
  even the same node in some cases).  Then we mark $\subpro{j_{x}}$
  (since we add it to $\pruned{}$) and tag it with $\subpro{i}$'s tag.
\item If $\subpro{i}$ and $\subpro{j_{x}}$ are not the same node, we
  do the following. $(a)$ We create a \emph{temporary skip link}
  between them (for the next phase---the slicing phase). $(b)$ Let $v$
  be the ancestor node of $\subpro{i}$ pointed by the corresponding
  guide link of $\subpro{j}$ (if any). For the sake of clarity,
  observe that traversing the tree of $A$ from its root to
  $\subpro{i}$, we meet $\subpro{j}$, $\subpro{j_{x}}$, $v$, and
  $\subpro{i}$. We create a \emph{temporary guide link} between 
  $\subpro{j_{x}}$ and $v$ (they may be the same node in some cases).
\end{enumroman}

\iparagraph{Fourth:} Let $\mathcal{Y}$ be the set of nodes of $A$ that
are not in a subtree rooted at a node belonging to $\pruned$. We assign
to each $\subpro{i}\in\mathcal{Y}$ a \emph{fingerprint} that is a
unique integer from $\set{1,\ldots,\card{\mathcal{Y}}}$ (any choice
would do, for example the DFS numbering).

\begin{lemma}
\label{lem:pruning-phase:full}
The pruning phase requires $\Oh{\acard{A}+\card{\bigcup_{i=1}^{d}A_{i}-A}}$ time.
\end{lemma}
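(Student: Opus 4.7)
The plan is to charge each of the four sub-steps of the pruning phase either to $\acard{A}$ or to $\card{\bigcup_{i=1}^{d}A_{i}-A}$, relying on two structural facts about the skip tree together with a tight characterization of the set $\mathcal{Y}$.

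The first thing I would record is that the skip tree of $A$ has $O(\acard{A})$ nodes. Indeed, every skip node is either contact, branching, or the root; the leaves of the skip tree are contact nodes; the branching nodes (in the main tree of $A$) are at most one less than the number of main-tree leaves, which are all contact; and every single-child internal skip node must itself be contact. Since the contact nodes number at most $\acard{A}$, the skip tree has $O(\acard{A})$ nodes and hence $O(\acard{A})$ edges. Using this, the first step is $O(\acard{A})$: by Lemma~\ref{lem:grouping:full} its cost equals the total size of the lists passed to \grouping{}, which is the total number of skip-tree children (summing to $O(\acard{A})$ by the edge count) plus the total number of columns (exactly $\acard{A}$).

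Next I would bound $|\pruned|$. Each node output by \skipvisit{} is a homogeneous skip-child (tag $<d+1$) of a non-homogeneous skip parent, so it corresponds to a distinct skip-tree edge; hence $|\pruned|=O(\acard{A})$. The second step then costs $O(\acard{A})$ for the outputted nodes plus $O(1)$ for each non-homogeneous internal skip node visited (whose count is a subset of $|\mathcal{Y}|$, handled below). The third step does $O(1)$ work per outputted node, since skip-parent lookup, identification of the $x$-th main-tree child $\subpro{j_x}$, marking, and installation of the temporary skip and guide links are all pointer chases on the structures of Section~\ref{subsec:agglomerates}. Thus the second and third steps together cost $O(\acard{A}+|\mathcal{Y}|)$.

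The hard step is step four, because $\mathcal{Y}$ can contain long chains of non-skip, non-contact subproblems, so a direct $O(|\mathcal{Y}|)$ bound is useless unless $|\mathcal{Y}|$ is tied to the number of newly created subproblems. My plan is to first show that $\mathcal{Y}$ is exactly the set of non-homogeneous subproblems of $A$. Every pruned node is homogeneous, so its entire subtree is homogeneous, and no non-homogeneous node can lie in any pruned subtree; hence every non-homogeneous node is in $\mathcal{Y}$. Conversely, for any homogeneous $v$ walk upwards to the first non-homogeneous ancestor (or the root) $u$, and let $c$ be the child of $u$ on this path in the main tree of $A$. The first skip-descendant of $u$ through $c$, call it $\subpro{i}$, is homogeneous (its subtree is contained in $c$'s homogeneous subtree), so step 2 outputs $\subpro{i}$ and step 3 places $\subpro{j_x}=c$ into $\pruned$, putting $v$ inside a pruned subtree.

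Finally, by the definition of \sliceop{} each non-homogeneous $\subpro{x}\in A$ is partitioned into $d'(\subpro{x})\ge 2$ new subproblems, each of which lies in $\bigcup_{i=1}^{d}A_i - A$; therefore $|\mathcal{Y}|\le \card{\bigcup_{i=1}^{d}A_{i}-A}/2$, which bounds step four (and also the ``internal visited'' term in step two) by $O\bigl(\card{\bigcup_{i=1}^{d}A_{i}-A}\bigr)$. Summing the four contributions yields the claimed $O\bigl(\acard{A}+\card{\bigcup_{i=1}^{d}A_{i}-A}\bigr)$ bound. The only delicate point in the proof is the characterization of $\mathcal{Y}$, which hinges on the distinction between skip-tree and main-tree children and on the fact that $\pruned$ is populated with the immediate main-tree ancestor $\subpro{j_x}$ of each outputted skip node rather than with the skip node itself.
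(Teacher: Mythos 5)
Your proof is correct and follows essentially the same route as the paper's: a step-by-step accounting in which steps one through three are charged to the $\Oh{\acard{A}}$-size skip tree and the column lists, while step four is charged to $\card{\bigcup_{i=1}^{d}A_{i}-A}$ via the observation that every node touched there (i.e., every node of $\mathcal{Y}$) is non-homogeneous and will therefore be partitioned into at least two new subproblems by the slicing. Your explicit characterization of $\mathcal{Y}$ as exactly the non-homogeneous nodes is a more detailed justification of a claim the paper simply asserts "by the definition of $\pruned$," but it is the same underlying argument.
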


\subsubsection{Slicing phase} 

The goal is to actually slice the agglomerate $A$ into the
agglomerates defined by the columns' tags, as claimed in
Lemma~\ref{lem:slice:comp}, with some provisions. Indeed, the only
missing things to complete this task are the following: $(a)$~the link
between each contact node and the root; $(b)$~the contact node list;
$(c)$~the correct labels and the correct ordering in $\sublist$ for
the new subproblems created; $(d)$~the guide links of skip nodes that
are not descendant of any node in $\pruned$.  We will deal with these
things in the next phase---the finishing phase.

We invoke the recursive procedure \slicerec{}$()$ on the root of $A$.
A generic call \slicerec{}$(\subpro{r})$, where $\subpro{r}$ indcates
now a generic node in~$A$, has three cases.

\paragraph{\boldmath ${\subpro{r}}$ is a leaf.}
First, we call \grouping{} on $\subpro{r}$'s column list and obtain 
$d'\le d+1$ lists $L_{r_1},\ldots,L_{r_d'}$. By scanning each $L_{r_i}$ 
we obtain $(a)$ all the distinct tags $t_1,\ldots,t_{d'}$ 
and $(b)$ $n_1,\ldots,n_{d'}$ where $n_{i}$ is the number 
of columns in list $L_{r_i}$. 

Second, we create $d'$ new subproblems
$\subpro{r_1},\ldots\subpro{r_{d'}}$ and insert them into \sublist{}
in place of $\subpro{r}$. If $\subpro{r}$ was $A$'s leading
subproblem, we set $\subpro{r_1},\ldots\subpro{r_{d'}}$ to be leading
subproblems of their respective agglomerates (they might not actually
be, see Section~\ref{sub:undecided-details:full}).  For each $1\le i\le d'$:
$(a)$ we
set $\subpro{r_{i}}$'s tag and fingerprint to $t_{i}$ and
$\subpro{r}$'s fingerprint, respectively; $(b)$ we set
$\card{\subpro{r_{i}}}$, $\subpro{r_{i}}$'s column list pointer and
$\slab{r_{i}}$ to $n_{i}$, $L_{r_i}$ and $\slab{r}$, respectively.
Finally, we eliminate $\subpro{r}$ and return
$\subpro{r_1},\subpro{r_2}\ldots\subpro{r_{d'}}$.

\paragraph{\boldmath ${\subpro{r}}$ is a contact node but not a leaf.}
If $\subpro{r}\in\pruned$, we return it immediately (it has been
tagged in the previous phase--the pruning phase). Otherwise we proceed
as follows.

First, we call 
\slicerec{}$(\subpro{r_{i}})$, for each child $\subpro{r_{i}}$ of 
$\subpro{r}$. From each call \slicerec{}$(\subpro{r_{i}})$ we receive 
a set $\mathcal{Q}_{i}$ of root nodes. We call \grouping{} on the list 
with the objects in $\mathcal{C}\cup 
\mathcal{Q}_{1}\cup\cdots\cup\mathcal{Q}_{x}$, where $\mathcal{C}$ 
contains  $\subpro{r}$'s columns. This produces $d'\le d+1$
lists $L_{r_1},\ldots,L_{r_d'}$. By scanning the lists, we obtain 
$(a)$ all the distinct tags $t_1,\ldots,t_{d'}$  
and $(b)$ pairs $\seq{col_1,nod_1},\ldots,\seq{col_{d'},nod_{d'}}$, where 
lists $col_i$ and $nod_i$ contain all the columns and all the nodes in $L_{r_i}$, 
respectively (some of them may be empty). 
Then by scanning each $col_{i}$ and $nod_{i}$ we obtain $(c)$ 
$n_1,\ldots,n_{d'}$, where $n_{i}$ is the number of columns in 
$col_{i}$, and $(d)$ $p_1,\ldots,p_{d'}$, where $p_{i}$ is the total 
number of suffixes in each subproblem in list $nod_i$.

Second, we
create $d'$ new subproblems 
$\subpro{r_1},\ldots\subpro{r_{d'}}$ and insert them in 
\sublist{} in place of $\subpro{r}$. If $\subpro{r}$ was $A$'s leading 
subproblem, we set  $\subpro{r_1},\ldots\subpro{r_{d'}}$ to be leading 
subproblems of their respective agglomerates (same as the above case
when $\subpro{r}$ is a leaf).
For each $1\le i\le d'$: $(a)$ we set $\subpro{r_{i}}$'s tag and 
fingerprint to $t_{i}$ and $\subpro{r}$'s fingerprint, respectively;
$(b)$ we set $\card{\subpro{r_{i}}}$, $\subpro{r_{i}}$ column list 
pointer (in case $\subpro{r_{i}}$ is a new contact node) and $\slab{r_{i}}$ 
to $n_{i}+p_{i}$, $col_{i}$ and $\slab{r}$, respectively; $(c)$ we make 
the nodes in $nod_{i}$ be $\subpro{r_{i}}$'s children.

Third, for each $\subpro{r_{i}}$ and each child
$\subpro{r_{ij}}$ of $\subpro{r_{i}}$ we do as follows. 
\begin{enumroman}
\item If $\subpro{r_{i}}$ is not a skip node but its only child
$\subpro{r_{ij}}$ \emph{is}, we 
create a temporary link between them. 
\item If neither  $\subpro{r_{i}}$
nor $\subpro{r_{ij}}$ is a skip node, we redirect to
$\subpro{r_{i}}$ the (only) temporary link 
that goes into $\subpro{r_{ij}}$.
\item If $\subpro{r_{i}}$ is a skip node and 
$\subpro{r_{ij}}$ \emph{is not}, we  
redirect  to $\subpro{r_{i}}$ the temporary link 
that goes into $\subpro{r_{ij}}$, and we 
change it into a skip link. 
\item If \emph{both} $\subpro{r_{i}}$ and $\subpro{r_{ij}}$ 
are skip nodes, we create a skip link between them.
\end{enumroman}
After that, we eliminate $\subpro{r}$ and return  
$\subpro{r_1},\subpro{r_2}\ldots\subpro{r_{d'}}$.

\paragraph{\boldmath${\subpro{r}}$ is not a contact node.} 
This case is analogous to the previous one, only simpler, because 
$\subpro{r}$ is not a contact node and no new contact nodes can be 
created from it.

\begin{lemma}
\label{lem:slicing-phase:full}
The slicing phase requires $\Oh{\acard{A}+\card{\bigcup_{i=1}^{d}A_{i}-A}}$ time.
\end{lemma}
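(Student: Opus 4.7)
The plan is to charge the total work of \slicerec{} to two kinds of visits: visits to nodes in $\mathcal{Y}$ (the non-pruned nodes, which by the pruning phase are exactly the non-homogeneous nodes of $A$), and visits to the nodes of $\pruned$ that appear as children of some $\mathcal{Y}$ node (the only way a $\pruned$ node is ever touched in the slicing phase, since \slicerec at such a node returns in $\Oh{1}$). I would begin by noting that \slicerec is invoked only on the root and then transitively on children of $\mathcal{Y}$ nodes, so the multiset of visits has size at most $\card{\mathcal{Y}}+\card{\pruned}$.

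Next I would bound the work at a single $\mathcal{Y}$ node $\subpro{r}$. By Lemma~\ref{lem:grouping:full} and the $\Oh{d}$ preprocessing done once at the start of \sliceop{}, the call to \grouping{} on the list formed by $\subpro{r}$'s children (and, if $\subpro{r}$ is a contact node, its columns) costs time proportional to the length of that list. The subsequent bookkeeping---creating the $d'(r)$ new subproblems, splicing them into \sublist{}, installing the temporary skip/guide links described in cases (i)--(iv), and assigning labels/tags/fingerprints---adds $\Oh{d'(r)}$. Summed over $\mathcal{Y}$, the ``children'' contribution telescopes to $\card{\mathcal{Y}}+\card{\pruned}-1$ because every node in $(\mathcal{Y}\cup\pruned)\setminus\{\mbox{root}\}$ is counted exactly once as a child of its parent in $\mathcal{Y}$; the ``columns'' contribution is at most $\acard{A}$; and the ``new subproblem'' contribution equals $\sum_{\subpro{r}\in\mathcal{Y}}d'(r)=\card{\bigcup_{i=1}^{d}A_{i}-A}$ (since $\pruned$-rooted subtrees are preserved verbatim and only $\mathcal{Y}$ nodes are replaced).

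It then remains to show the two structural inequalities $\card{\mathcal{Y}}\le \card{\bigcup_{i=1}^{d}A_{i}-A}$ and $\card{\pruned}\le \acard{A}$, which together with the previous paragraph give the advertised bound. The first follows from $d'(r)\ge 2$ for every $\subpro{r}\in\mathcal{Y}$: indeed, by the characterization of $\pruned$ as the highest homogeneous nodes (pruning phase), membership in $\mathcal{Y}$ is equivalent to being non-homogeneous, i.e. having columns of at least two distinct tags in the subtree rooted at $\subpro{r}$; since every child returned by the recursive calls carries the common tag of its own (homogeneous) subtree and the columns of $\subpro{r}$ itself already carry their individual tags, the list fed to \grouping{} at $\subpro{r}$ must contain at least two tags, so $d'(r)\ge 2$. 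The second inequality follows because the subtrees rooted at $\pruned$-nodes are pairwise disjoint, each contains at least one leaf of $A$, and every leaf of $A$ is a contact node contributing at least one column to $\acard{A}$.

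The main obstacle I anticipate is making the claim $d'(r)\ge 2$ for $\subpro{r}\in\mathcal{Y}$ fully rigorous: one needs to verify that the tag propagated upward from a recursively processed child $\subpro{c}$ faithfully witnesses all tags inside $\subpro{c}$'s subtree (so that a heterogeneity somewhere deep in $\subpro{r}$'s subtree is still visible to the \grouping{} call made at $\subpro{r}$). This requires an inductive argument on the depth of the recursion, using the invariant that after \slicerec returns from $\subpro{c}$, the set of distinct tags among the returned subproblems equals the set of distinct tags among the columns in $\subpro{c}$'s original subtree. Once this invariant is in place the calculation above yields $\Oh{\acard{A}+\card{\bigcup_{i=1}^{d}A_{i}-A}}$ as claimed.
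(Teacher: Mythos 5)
Your proposal is correct and follows essentially the same route as the paper: a depth-first charging argument over the non-pruned nodes, using that each node in $\mathcal{Y}$ is split into $d'(r)\ge 2$ new subproblems (so $\card{\mathcal{Y}}\le\card{\bigcup_{i=1}^{d}A_{i}-A}$), that \grouping{} costs linear time in its list by Lemma~\ref{lem:grouping:full}, and that $\pruned$-rooted subtrees are returned verbatim at $\Oh{1}$ cost. One tiny accounting slip: the list fed to \grouping{} at an internal node consists of the \emph{new subproblems} $\mathcal{Q}_i$ returned by the recursive calls (not the children themselves), so the ``children'' contribution is $\sum_{c}\card{\mathcal{Q}_c}$ rather than $\card{\mathcal{Y}}+\card{\pruned}-1$ --- but this sum is still bounded by $\card{\bigcup_{i=1}^{d}A_{i}-A}+\card{\pruned}$, so the stated bound is unaffected.
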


\subsubsection{Finishing phase}
\label{subsubsec:slice:finishing:full}

The finishing phase adds the missing information from the previous
phase---the slicing phase. It proceeds with the following steps.

\iparagraph{First:} We sort \emph{all} the new subproblems
$\subpro{j}$'s according the keys $\seq{m_j,t_j}$, where $m_j$ and
$t_j$ are $\subpro{j}$'s fingerprint and tag, respectively. Since each
$\seq{m_j,t_j}$ has $\Oh{\log \card{A}}$ bits, we can use radix sort.

\iparagraph{Second:} After the sorting, for each old \emph{active}
subproblem $\subpro{i}$, we have that new subproblems
$\subpro{i_1},\ldots,\subpro{i_{x}}$ are grouped together \emph{and}
in their correct relative order. Hence, we can reattach them in
$\sublist$ in the correct order and also set the correct label
$\slab{i_j}$ for each one of them. If $\subpro{i}$ was inactive, we
leave $\subpro{i_1},\ldots,\subpro{i_{x}}$ and their labels as they
are.

\iparagraph{Third:} For each new agglomerate $A_{i}$ we build its
contact node list by visiting its skip tree. Then we scan its contact
node list and for each contact node we set the link to $A_{i}$'s root.

\iparagraph{Fourth:} For each new agglomerate $A_{i}$ we need to
create the guide links for the skip nodes of $A_{i}$ that are not
descendants of the nodes in $\pruned$.{} To that end, we call
\guidevisit{}$(\subpro{i})$, defined as follows, on $A_{i}$'s root.
\begin{enumroman}
\item For each 
child $\subpro{i_{j}}$ of $\subpro{i}$ in $A$'s \emph{skip tree}, we scan the 
nodes $\subpro{i_{x}}$ of  $A$'s \emph{tree} that are both \emph{descendants of 
$\subpro{i}$} and \emph{ancestors of  $\subpro{i_{j}}$} starting from the 
highest one. We keep scanning them until we find a $\subpro{i_{x}}$ 
that falls into one of three cases: $(a)$ it is unsolved, 
$(b)$ it is in $\pruned$ or $(c)$ it is $\subpro{i_{j}}$. 
In case $(a)$ we create a guide link between $\subpro{i}$ and 
$\subpro{i_{x}}$. In case $(b)$ if $\subpro{i_{x}}$ has a temporary 
guide link (possibly created in the pruning phase) to a node 
$v$, we create a guide link between $\subpro{i}$ and 
$v$. Otherwise no guide link is created.
\item  We call \guidevisit{}$(\subpro{i_{j}})$, for each 
child $\subpro{i_{j}}$ of $\subpro{i}$ in $A$'s \emph{skip tree}.
\end{enumroman}

\begin{lemma}
\label{lem:finishing-phase:full}
The finishing phase requires $\Oh{\acard{A}+\card{\bigcup_{i=1}^{d}A_{i}-A}}$ time.
\end{lemma}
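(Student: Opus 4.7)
The plan is to analyze each of the four steps of the finishing phase separately and then sum the costs, aiming for the bound $\Oh{\acard{A}+\card{\bigcup_{i=1}^{d}A_{i}-A}}$ in each step.

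For the \emph{first step} (radix sort), the number of new subproblems $\subpro{j}$ is bounded by the count of touched-or-created nodes in the slicing phase, which by Lemma~\ref{lem:slicing-phase:full} is $\Oh{\acard{A}+\card{\bigcup_{i=1}^{d}A_{i}-A}}$. Each key $\seq{m_j,t_j}$ is a pair of small integers: the tag $t_j$ lies in $\set{1,\ldots,d+1}\subseteq \set{1,\ldots,\acard{A}+1}$, and the fingerprint $m_j$ lies in $\set{1,\ldots,\card{\mathcal{Y}}}$ where $\mathcal{Y}$ is the set introduced in the pruning phase (only non-pruned nodes receive fingerprints). Since $\card{\mathcal{Y}}$ is within the target bound, radix sort on $\seq{m_j,t_j}$ with a base equal to (an upper estimate of) that bound runs in the desired linear time.

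For the \emph{second step}, after sorting, the new subproblems $\subpro{i_1},\ldots,\subpro{i_x}$ originating from a common old active subproblem $\subpro{i}$ form a contiguous run (all sharing $\subpro{i}$'s fingerprint), and within that run they appear in their correct lexicographic order by tag. A single left-to-right scan suffices: for each run we splice the $\subpro{i_j}$'s into $\sublist$ in place of $\subpro{i}$ and assign labels by the recurrence $\slab{i_{j+1}}=\slab{i_j}+\card{\subpro{i_j}}$ started from $\slab{i}$. The total work is $\Oh{1}$ per new subproblem.

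For the \emph{third step}, for each new agglomerate $A_i$ we traverse its skip tree, which has $\Oh{\acard{A_i}}$ nodes (each contact node and each branching node is a skip node), enumerate contact nodes in preorder to form the contact list, and for each such node set its pointer to the root of $A_i$. Summing, $\sum_i \acard{A_i}=\acard{A}$ because columns partition $T$ and are conserved by slicing.

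The main obstacle is the \emph{fourth step}, building the guide links via \guidevisit. A naive accounting fails because, at each skip node $\subpro{i}$ and each of its skip-tree children $\subpro{i_j}$, the procedure scans the path in the \emph{tree} of $A_i$ from $\subpro{i}$ down towards $\subpro{i_j}$ until it hits an unsolved node, a node of $\pruned$, or $\subpro{i_j}$ itself, and in principle this path could be long. My plan is an amortized charging argument: (i) nodes on any such path that are not skip nodes of the \emph{new} agglomerate are necessarily unary there, so they were either freshly created during the slicing phase (and so already accounted for inside $\card{\bigcup_{i=1}^d A_i - A}$) or else they were created by the pruning phase's temporary guide/skip links as witnesses between $\subpro{j_x}$ and the ancestor $v$ of $\subpro{i}$; (ii) a scan stops at the first unsolved or pruned node encountered, so each path either terminates within $\Oh{1}$ steps or produces a guide link whose cost is charged to that link. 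Combining these two charges, the total work of \guidevisit{} across all new agglomerates is bounded by the number of newly introduced nodes plus the number of (new) skip-tree edges, both within $\Oh{\acard{A}+\card{\bigcup_{i=1}^{d}A_{i}-A}}$. Summing the four steps yields the claimed bound.
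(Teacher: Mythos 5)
Your decomposition into the four steps and your handling of the first three match the paper's analysis: radix sort on the short $\seq{m_j,t_j}$ keys, an $\Oh{1}$-per-subproblem scan for relabelling, and an $\Oh{\acard{A_i}}$ skip-tree traversal per new agglomerate summing to $\Oh{\acard{A}}$. The gap is in the fourth step, which is the only delicate one. Your charging argument (ii) does not work as stated: a single invocation of \guidevisit{} at a skip node $\subpro{i}$ may scan a long run of nodes that are solved or exhausted but neither unsolved nor in $\pruned$ before the stopping condition fires, and ``charging the cost to the guide link eventually produced'' does not bound the length of that run --- one link can absorb an arbitrarily long scan unless you separately bound the path. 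Likewise, the dichotomy in (i) is not justified: being unary in the new agglomerate does not imply a node was freshly created, and temporary skip/guide links from the pruning phase do not create nodes at all, so the second alternative is vacuous.

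The argument that actually closes this (and is the one the paper uses) is structural rather than amortized: the scan in \guidevisit{} proceeds top-down and halts as soon as it meets a node of $\pruned$, so every node it touches, other than the terminating one, lies outside every subtree rooted at a node of $\pruned$. By maximality of $\pruned$ (it consists of the \emph{highest} homogeneous nodes), every such node is non-homogeneous, hence is partitioned by \sliceop{} into at least two new subproblems; consequently the nodes reachable by any of these scans number at most $\card{\bigcup_{i=1}^{d}A_{i}-A}$ in total, the terminating $\pruned$ nodes number $\Oh{\acard{A}}$, and each test for ``unsolved'' costs $\Oh{1}$ via \rankstrut{}. This gives the stated bound directly, with no per-link charging needed. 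You have the right target quantities in hand; what is missing is the observation that non-descendants of $\pruned$ are exactly the split nodes, which is what turns the path lengths into a global count.
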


\subsection{Joining agglomerates}
\label{subsec:join:alg:full}

Given an agglomerate $A'$ that is joinable to $A$, we need to attach
the root of $A'$ to a suitable place in $A$. Let $\subpro{x}\in
A$ be the contact subproblem such that $\suf{i+1}\in \subpro{x}$, for
each suffix $\suf{i}$ of the root subproblem $\subpro{r'}$ of $A'$.
The operation \joinop{}$(A',A)$ (see Section~\ref{subsec:agg:oper})
proceeds with the following steps.

\iparagraph{First:} We fuse the trees of $A$ and $A'$  by 
making $\subpro{r'}$ be the new leftmost children of $\subpro{x}$.

\iparagraph{Second:} Let $L$ and $L'$ be the contact node lists of $A$
and $A'$, respectively. Let $p$ and $s$ be the predecessor and
successor of $\subpro{x}$ in $L$, respectively. Let $b$ and $e$ be the
leftmost and the rightmost node in $L'$, respectively. If after the
fusion $\subpro{x}$ is not (is still) a contact node, we link $p$
($\subpro{x}$) to $b$ and $e$ to $s$.

\iparagraph{Third:} 
Let us define $v$ as follows: $(i)$ if $\subpro{x}$ is still a 
skip node after the fusion (it may not be a contact node anymore),
then $v$ is $\subpro{x}$; $(ii)$ otherwise $v$ is the ancestor of 
$\subpro{x}$ pointed by its skip link.
If $\subpro{r'}$ is 
a contact or branching node, we create a skip link between $v$ and $\subpro{r'}$.
Otherwise $\subpro{r'}$ is not a skip node of the final
agglomerate. Hence, the only skip link from a node in $A'$ to $\subpro{r'}$ 
is redirected to $v$. We do the same with the only guide link of 
$\subpro{r'}$.

\iparagraph{Fourth:} 
Each column $\ppair{c'_{i},r'_{i}}$ of $A'$ is changed 
into $\ppair{c'_{i},r_{j}}$ where 
$\ppair{c_{j},r_{j}}$ is the column of $A$ associated with $\subpro{x}$ 
such that $r'_{i}+1=c_{j}$. The column $\ppair{c_{j},r_{j}}$ is deleted 
since $\suf{c_{j}}$ is not a contact suffix anymore.

\iparagraph{Fifth:} 
We set the root pointer of each contact node of 
$A'$ to $A$'s root.

Since the total number of skip links, columns and contact 
nodes of $A'$ and deleted pairs 
of $A$ is $\Oh{\acard{A'}}$, Lemma~\ref{lem:join:comp} is proven.

\subsection{Slicing and joining with an exhausted agglomerate} 
\label{subsec:slicejoin:alg:full}

Let us now describe the \slicejoinop{}$(A,A_{*})$ operation used in
the last step of \RAP{} and whose effect has been described in
Section~\ref{subsec:agg:oper}. Let $A$ and $A_{*}$ be unsolved and
exhausted, respectively, and let us assume that $A$ is joinable with
$A_{*}$. Let $\subpro{x}$ be the contact node of $A_{*}$ such that
$\suf{i+1}\in \subpro{x}$, for each suffix $\suf{i}$ of the root
subproblem $\subpro{r}$ of $A$.

For the slicing part of \slicejoinop{}, we \emph{do not} explicitly
tag $A_{*}$'s columns but, conceptually, we would have the following:
$(i)$ only two tags ($1$ and $2$) for the columns; $(ii)$ $\subpro{x}$
is the only contact subproblem of $A_{*}$ with some columns
with tag $1$, and all the other columns with tag $2$. Because of that, we
know that the only subproblems of $A_{*}$ that are partitioned during
the slicing part of \slicejoinop{} are $\subpro{x}$ and its
ancestors. Thus, we do not need the pruning phase of
\sliceop{} because $\subpro{x}$ is the only non-homogeneous node.

The slicing phase is the same as in \sliceop{} except for 
two things. First, the 
recursion does not touch any node that is not on the path from the root 
of $A_{*}$ to 
$\subpro{x}$ (since they are implicitly in $\pruned$).{}
Second, when we treat $\subpro{x}$ (which is a contact node) we do not touch 
its column list at all, we just create the two subproblems 
$\subpro{x_{1}}$ and 
$\subpro{x_{2}}$, and we link \emph{the whole column list} of 
$\subpro{x}$ to  $\subpro{x_{2}}$  (which will be part of $A_{*2}$ and 
is still a contact node). We can leave in the column list of 
$\subpro{x_{2}}$ all those columns that, after a normal \sliceop{} of 
$A_{*}$, would end up being 
associated with $\subpro{x_{1}}$  without incurring in any trouble for 
two reasons: $(i)$ after 
the join of $A$ and $A_{*1}$ they would disappear anyway, and $(ii)$ $A_{*2}$ 
is exhausted and its contact nodes are not active anymore.  

We also do not need the finishing 
phase of \sliceop{}, 
since all the subproblems in $A_{*}$ are exhausted and $A$ will join with 
$A_{*1}$ after the slicing part of \slicejoinop{}.

Because of the characteristics of $A_{*1}$, to join $A$ with it we just 
$(i)$ link $A$'s root to the only contact node of $A_{*1}$ 
and $(ii)$ change each column $\ppair{c_{j},r_{j}}$ of $A$ to  
$\ppair{c_{j},r_{j}+l}$, where $l$ is the number of nodes  $A_{*1}$ 
(whose tree is just a path).

Thus, Lemma~\ref{lem:slicejoin:comp} follows as a corollary of 
Lemma~\ref{lem:slice:comp}.

\subsection{Finalization stage}
\label{sub:finalization-stage:full}

We finally have to store into $\sublist$ all the $K$ wanted
suffixes. Note that we need to retrieve them from the columns that
contains them. To this end, we output the set of $K$ pairs
$\gpairs=\set{\seq{r_{i},j}\vert r_{i}\in\mathcal{R} \mbox{ and
    $\suf{j}$ has rank $r_{i}$}}$ in the following way. We scan
$\sublist$ and for each subproblem $\subpro{x}$ that is both
\emph{solved} and a \emph{leaf} we do as follows. First we add
$\seq{r',j_{x}}$ to $\gpairs$, where $r'$ and $\suf{j_{x}}$ are
$\subpro{x}$'s only rank and only suffix. Then we retrieve all the
ancestors of $\subpro{x}$ (in the tree of its agglomerate) that
  are also solved. They are all the nodes closest to $\subpro{x}$ in
its leaf-to-root path. Let $\subpro{x_{y}}$ be the $y$-th closest one
of them, we add $\seq{r^{y},j_{x}+y}$ to $\gpairs$, where $r^{y}$ is
the only rank of $\subpro{x_{y}}$ (and $\suf{j_{x}+y}$ is clearly
$\subpro{x_{y}}$'s only suffix).

\begin{lemma}
  \label{lem:finalization-stage:full}
  The finalization stage requires $O(N)$ time.
\end{lemma}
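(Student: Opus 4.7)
The plan is to split the work of the finalization stage into two pieces---the outer scan of $\sublist$ and the upward walks launched at each solved leaf---and to bound each by $O(N)$ through a simple charging argument.

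First I would observe that $|\sublist|\leq N$: the subproblems form a partition of $S$, so the number of subproblems never exceeds the number of suffixes. Thus the outer scan visits $O(N)$ entries. For each subproblem the test ``$\subpro{x}$ is solved and a leaf of its agglomerate tree'' takes $O(1)$ time using the cardinalities $\card{\subpro{x}}$ and $\card{\subprorank{x}}$ stored locally with $\subpro{x}$, together with the child pointers maintained as described in Section~\ref{subsec:agglomerates}. Emitting the pair $\seq{r',j_x}$ whenever such a leaf is found is $O(1)$, contributing $O(N)$ total for this part.

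Second I would bound the cost of the ancestor walks. Starting from a solved leaf $\subpro{x}$, we follow parent pointers in the agglomerate tree, emit one pair $\seq{r^y,j_x+y}$ for every solved ancestor encountered, and stop at the first non-solved ancestor (or at the root of the agglomerate). Each step costs $O(1)$: parent navigation is constant, the solved/unsolved test is constant, and the unique rank of a solved ancestor is fetched in $O(1)$ via $\rankstrut$. The key structural observation is that a solved \emph{internal} node $\subpro{x_y}$ must have exactly one child in the agglomerate tree: since $\subpro{x_y}$ contains a single suffix $\suf{t}$, the dependency relation $\subdept$ forces every child to consist of suffixes $\suf{x}$ with $\suf{x+1}\in\subpro{x_y}$, i.e.\ $x=t-1$, and because subproblems are disjoint and nonempty there is at most one such child. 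Consequently, the maximal chain of solved ancestors rising from a solved leaf is linear and is uniquely associated with that leaf.

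Third I would combine these observations. Every solved subproblem is emitted exactly once during the whole finalization stage: a solved leaf is emitted directly during the scan, while every non-leaf solved subproblem is emitted during the ancestor walk of the unique leaf at the bottom of its solved chain. Hence the total number of ``productive'' ancestor steps is at most $|\gpairs|=K$. To this we add at most one ``terminating'' step per solved leaf (the first non-solved ancestor, or reaching the agglomerate root), contributing another $O(K)$. Summing with the $O(N)$ cost of the scan and using $K\leq N$ yields the claimed $O(N)$ bound. The one step that needs genuine care is the structural claim that solved internal nodes have exactly one child---without it, two distinct leaves could share a solved ancestor and the charging of ancestor steps to output pairs would break down. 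Once this uniqueness is established the rest is routine amortization.
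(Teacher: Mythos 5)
Your proof is correct. The paper in fact states this lemma without an explicit proof (it is only asserted again inside the proof of Lemma~\ref{lem:msel:suffixes:comp:full}), so your argument supplies exactly the missing justification, and it is the natural one: the $\Oh{N}$ scan of $\sublist$, plus an amortization of the upward walks against the $K$ output pairs. Your key structural observation---that a solved (hence singleton) internal node has at most one child under $\subdept$, so the solved chains above distinct solved leaves are disjoint and every solved subproblem is charged exactly once---is precisely what the paper implicitly relies on when it asserts that the solved ancestors of $\subpro{x}$ are ``the nodes closest to $\subpro{x}$ in its leaf-to-root path,'' so nothing further is needed.
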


\section{Correctness and Analysis}
\label{sec:correctness-analysis}
\label{sec:analysis:full}

Correctness and complexity are strictly related, so we discuss them
together. here we give the lemmas needed to prove the theorems stated
in the Introduction, and few proofs. We devote
Section~\ref{sec:proofs} to the remaining proofs.

Consider first a simplified scenario where we have to
perform multi-selection on a prefix-free set $\mathcal{Z}$ of $N$
\emph{independent} strings of total length~$L$, using our set
${\mathcal R}$ of $K$ ranks. We adopt the same notation as
in formula~\eqref{eq:multi-bound} and Section~\ref{sec:def}.

We run \mselmset{}$({\mathcal Z}, {\mathcal R})$ on the first symbol
of all the strings in $\mathcal{Z}$. This partitions the strings into
unsolved, solved and exhausted subproblems: $\subpro{i}$ is unsolved
when it contains all the strings with the same first symbols and
$\card{\subpro{i}}>\card{\subprorank{i}}\ge 1$; or, $\subpro{i}$ is
solved when $\card{\subpro{i}}=\card{\subprorank{i}}=1$; finally, if
$\subpro{i}$ is exhausted then $\card{\subprorank{i}}=0$ and the first
symbols of its strings may not be the same. We repeat the refining
steps until there are no more unsolved subproblems.  We pick any
unsolved subproblem $\subpro{i}$. Let $a_i$ be the length of the
common prefix (of its strings) examined so far: we invoke
\mselmset{}$(\subpro{i}, \subprorank{i})$ using the alphabetic order
on the symbols $y[a_i+1]$ for $y \in \subpro{i}$.  Thus we refine
$\subpro{i}$ into smaller subproblems, classify them as described
above, and repeat the steps.
\begin{lemma}
  \label{lem:msel:strings:comp}
  The running time of the multi-selection algorithm with $K$ ranks for
  a prefix-free set of $N$ independent strings of total length $L$ is
  upper bounded by $O\left(N\log N - \sum_{j=0}^K \Delta_j \log
    \Delta_j+N+L\right)$.
\end{lemma}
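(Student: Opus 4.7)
The plan is to decompose the total cost into two parts and analyze each separately. Let $\mathcal{T}$ denote the tree of recursive \mselmset{} invocations produced by the string algorithm; each node $v\in\mathcal{T}$ corresponds to a subproblem $\subpro{v}$ with $n_v$ strings, and its children are those pivotal multisets $\mathcal{F}_i^v$ (of size $>1$) that the algorithm actually recurses upon.

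First I bound the scanning part. Each invocation \mselmset{}$(\subpro{v},\subprorank{v})$ reads exactly one character $y[a_v+1]$ from every $y\in\subpro{v}$, contributing $O(n_v)$. Because a string $s$ advances by one character per level it survives in $\mathcal{T}$, it occurs in at most $|s|+1$ nodes along its root-to-leaf path, and therefore $\sum_v n_v\leq L+N$, giving an overall scanning cost of $O(L+N)$.

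Next I handle the comparison part. I would apply Lemma~\ref{lem:multisel:multiset:full} at every $v\in\mathcal{T}$, yielding the per-node bound
\[
T_v \;\leq\; c\,n_v\log n_v \;-\; c\,|\mathcal{M}_0^v|\log|\mathcal{M}_0^v| \;-\; c\sum_i\bigl(|\mathcal{F}_i^v|+|\mathcal{M}_i^v|\bigr)\log\bigl(|\mathcal{F}_i^v|+|\mathcal{M}_i^v|\bigr) \;+\; c\,n_v.
\]
Summing over $\mathcal{T}$ and telescoping, the positive $c\,n_w\log n_w$ at each non-root child $w=\mathcal{F}_i^v$ cancels (using $(f{+}m)\log(f{+}m)\geq f\log f + m\log m$, convexity of $x\log x$) against the $f\log f$ part of its parent's paired negative term. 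What remains is $c\,N\log N$ from the root plus a residue of negative contributions from every $\mathcal{M}_0^v$ and from the uncancelled portions of the paired terms.

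The final step is to identify this residue with $c\sum_{j=0}^{K}\Delta_j\log\Delta_j$ up to an additive $O(N)$. The idea is that every top-level gap $(r_j,r_{j+1})$ is peeled off by the recursion as a chain of $\mathcal{M}$-pieces whose sizes sum to $\Delta_j-1$; at each link of the chain the paired form of Lemma~\ref{lem:multisel:multiset:full} produces a negative term of the shape $|\mathcal{M}|\log(|\mathcal{F}|+|\mathcal{M}|)$ in which $|\mathcal{F}|+|\mathcal{M}|$ carries the full remaining weight of the gap, and aggregating along the chain by convexity of $x\log x$ yields at least $(\Delta_j-1)\log(\Delta_j-1) = \Delta_j\log\Delta_j - O(\Delta_j)$. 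I expect this bookkeeping to be the main obstacle: a naive per-node argument only gives $c\,N\log N - c\sum_E|E|\log|E|$ with $E$ ranging over exhausted subproblems, which is weaker than the target whenever gaps fragment across levels, and the paired structure of Lemma~\ref{lem:multisel:multiset:full} must be threaded through each gap's recursive decomposition to recover the missing slack.
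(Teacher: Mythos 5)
Your decomposition into a scanning part ($O(L+N)$ via $\sum_v n_v\le L+N$) and a comparison part (summing Lemma~\ref{lem:multisel:multiset:full} over the recursion tree and cancelling each child's $c\,n_w\log n_w$ against the $|\mathcal{F}_i|\log|\mathcal{F}_i|$ portion of its parent's paired negative term) is exactly the skeleton of the paper's argument, which runs the same computation as an induction on the ``offset size'' of subproblems. The gap is precisely where you flag it, and the tool you name does not close it. After the telescoping, the residue attached to a gap $(r_j,r_{j+1})$ is a sum over the $\mathcal{M}$-fragments into which that gap splinters across recursion levels, and convexity/superadditivity of $x\log x$ points the wrong way: it gives $\sum_\ell a_\ell\log a_\ell\le(\sum_\ell a_\ell)\log(\sum_\ell a_\ell)$, so a gap of size $m$ that fragments into $m$ singleton exhausted pieces leaves a residue of $0$ rather than the required $-c\,m\log m$. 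Your own observation that the paired term really supplies $-c\,|\mathcal{M}_i|\log(|\mathcal{F}_i|+|\mathcal{M}_i|)$, with the logarithm carrying the still-alive weight of the gap, is the right ingredient, but turning it into the bound requires the weighted telescoping estimate $\sum_\ell(S_\ell-S_{\ell+1})\log S_\ell\ge S_1\log S_1-O(S_1)$ (where $S_\ell$ is the gap weight surviving to level $\ell$), followed by a merge of the left-descending and right-descending chains of each gap; none of this is ``convexity,'' and none of it is carried out in your sketch. Note also that this partial split of the paired term is inconsistent with the full split $-|\mathcal{F}_i|\log|\mathcal{F}_i|-|\mathcal{M}_i|\log|\mathcal{M}_i|$ you use in the telescoping paragraph: you must commit to one bookkeeping or the other.

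The paper closes this step by strengthening the inductive statement instead: the bound proved for each sub-instance $\mathcal{F}_o$ explicitly contains two boundary terms, $-c(r_p-N_o)\log(r_p-N_o)$ and $-c\,N_*\log N_*$, accounting for the portions of the first and last gaps that fall inside $\mathcal{F}_o$ (its ``left/right inductive terms''). A gap that straddles pivotal classes at the current level is then assembled from at most three \emph{already aggregated} pieces --- the right boundary term of the left child, the current $-c|\mathcal{M}_o|\log|\mathcal{M}_o|$, and the left boundary term of the right child --- via the inequality $(a+b)\log(a+b)\le a\log a+b\log b+2(a+b)$, at an extra cost of $O(\Delta_j)$ per gap and hence $O(N)$ overall. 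To complete your global tree-sum formulation you must either import this strengthened induction or execute the weighted telescoping explicitly; as written, the last third of your argument is an expectation rather than a proof.
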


Here we focus on the multi-selection for our set~$S$ of $N$
suffixes, and show how to consider them as a set of $N$ independent
\emph{virtual} strings.

\begin{lemma}
\label{lem:msel:suffixes:comp}
The running time of the suffix multi-selection algorithm for a text of
length $N$ is upper bounded by $O\left(N\log N - \sum_{j=0}^K \Delta_j
  \log \Delta_j+N+\rapbpart\right)$, where $\rapbpart$ is the total 
time required by all the \RAP{}s minus the time for the \mselmset{} 
calls.
\end{lemma}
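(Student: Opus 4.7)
The plan is to decompose the total running time as $T_{\text{init}} + T_{\text{refine}} + T_{\text{final}}$. The finalization stage is $O(N)$ by Lemma~\ref{lem:finalization-stage:full}. The initialization stage runs one call to \mselmset{}$(\{T[1],\ldots,T[N]\},\mathcal{R})$ together with the setup of the bookkeeping structures of Section~\ref{subsec:agglomerates}, which is $O(N)$ apart from the \mselmset{} cost. Within each \RAP{} I separate the single \mselmset{} call on the leading subproblem $\subpro{w}$ from the rest of \RAP{}'s work; by definition, the latter sums to $\rapbpart$. So it suffices to bound the accumulated cost of all \mselmset{} calls (the initialization call plus one per \RAP{}) by $O(N\log N - \sum_{j=0}^K \Delta_j\log\Delta_j + N)$.

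To bound this accumulated cost, observe that every comparison performed inside any \mselmset{} call is between keys constructed in \RAP{}'s second step: each key is an $O(1)$-length tuple of integer labels retrieved via $\arrsub$, hence each comparison takes $O(1)$ time under $\qlexord$ and $\qequiv$, independent of the underlying suffix lengths. Consequently the \mselmset{} calls collectively behave like the recursive procedure analysed in Lemma~\ref{lem:msel:strings:comp}: starting from the initial multiset of $N$ suffixes, each call on a subproblem $\subpro{w}$ of size $\card{\subpro{w}}$ refines it into pivotal multisets $\mathcal{M}_0,\mathcal{F}_1,\ldots,\mathcal{F}_t,\mathcal{M}_t$, and each unsolved $\mathcal{M}_i$ becomes the leading subproblem of an unsolved agglomerate subjected to a later \RAP{}. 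The resulting recursion tree is isomorphic to that in the proof of Lemma~\ref{lem:msel:strings:comp}, with the prefix-free independent strings replaced by integer-label keys.

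Applying Lemma~\ref{lem:multisel:multiset:full} at each node and telescoping along the tree, the accumulated \mselmset{} cost is bounded by the same quantity $O(N\log N - \sum_{j=0}^K \Delta_j\log\Delta_j + N + L)$ as in Lemma~\ref{lem:msel:strings:comp}, where $L$ is the total ``virtual string'' length accessed by the calls. Here $L=O(N)$ because the columns form a non-overlapping partition of $T$ at every instant and each key used by an \mselmset{} call is the label of a single contact or root suffix (one per column), so the total number of distinct keys constructed is $O(N)$ plus a term proportional to the number of newly created subproblems, which is already absorbed into $\rapbpart$ via Lemmas~\ref{lem:slice:comp}, \ref{lem:join:comp}, and~\ref{lem:slicejoin:comp}.

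The main obstacle is making this telescoping step rigorous in the presence of the joins and slice-joins performed in \RAP{}'s sixth step: these fuse an exhausted agglomerate into an unsolved one and could, in principle, disturb the neat recursion-tree structure inherited from Lemma~\ref{lem:msel:strings:comp}. I resolve this by observing that joins and slice-joins never produce new unsolved subproblems—they only redistribute columns while preserving Property~\ref{prop:leading-subproblem}—so the tree of \mselmset{} calls is unaffected, and the bookkeeping cost of these operations is entirely absorbed into $\rapbpart$. Combining the \mselmset{} bound $O(N\log N - \sum_{j=0}^K \Delta_j\log\Delta_j + N)$ with $\rapbpart$ and the $O(N)$ contributions from initialization and finalization yields the claim.
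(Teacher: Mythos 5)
Your overall decomposition matches the paper's: isolate the \mselmset{} calls, view their comparisons as operating on virtual symbols of independent strings so that Lemma~\ref{lem:msel:strings:comp} applies, and absorb everything else into $\rapbpart$. The gap is in your treatment of the additive term $L$ (the total virtual-string length) coming out of Lemma~\ref{lem:msel:strings:comp}. You claim $L=O(N)$ on the grounds that ``the total number of distinct keys constructed is $O(N)$'' because each key is the label of one contact or root suffix and the columns partition $T$. But $L$ is not the number of \emph{distinct} keys; it is the total number of virtual symbols created over \emph{all} \RAP{}s, i.e.\ $\sum_{A}\acard{A}$ with the sum ranging over every \RAP{} invocation. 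A single column survives through many successive \RAP{}s and is re-keyed each time the agglomerate containing it is picked and its leading subproblem is fed to \mselmset{}, so the fact that at any instant there are at most $N$ columns does not bound the sum. Proving $\sum_A \acard{A}=O(N)$ directly is exactly the content of the event-counting argument of Lemma~\ref{lem:rapN}, which you cannot invoke here: the present lemma is stated with $\rapbpart$ as a free parameter precisely so that that counting is done separately.

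The fix is small and is what the paper does: you do not need $L=O(N)$, only $L=O(\rapbpart)$, which already appears in the target bound. For each \RAP{} on an agglomerate $A$ with leading subproblem $\subpro{w}$, the \mselmset{} call creates $\card{\subpro{w}}=\acard{A}$ virtual symbols, while the non-\mselmset{} portion of that same \RAP{} costs $\Omega(\acard{A})$ (e.g.\ by Lemma~\ref{lem:slice:comp}); summing over all \RAP{}s gives $L=O(\rapbpart)$. Your paragraph about joins not disturbing the recursion tree is reasonable but addresses a side issue; the load-bearing step is the accounting of $L$, and as written it does not go through.
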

\begin{proof}
  Consider the computation described in Sections~\ref{sec:def}
  and~\ref{sec:algorithm}, and the \emph{virtual symbol cost} of the
  calls to \mselmset{}: when applied to a subproblem $\subpro{w}$, it
  performs comparisons using a certain order on $\subpro{w}$'s keys,
  which can be seen as the \emph{virtual} symbols of independent
  strings.  Indeed, these virtual symbols are exclusively created and
  ``used'' for $\subpro{w}$. Unlike $T$'s symbols, virtual
    symbols are not shared by subproblems.  Each $\subpro{w}$ has
  associated $\card{\subpro{w}}$ virtual strings that are made
  up of all the virtual symbols \emph{created} to refine $\subpro{w}$
  during several {\RAP}s, every time $\subpro{w}$ is the leading
    subproblem of its current agglomerate. And, unlike the suffixes
  of $T$, all the virtual strings are independent.

  Let $L$ be the total number of virtual symbols thus created by all
  the \RAP{}s. Lemma~\ref{lem:msel:strings:comp} reports the virtual
  symbol cost for all the \mselmset{} calls, that is, their total
  contribution to the final cost of our multi-selection algorithm. We
  have to add the cost of the rest of the computation, which is
  $O(\rapbpart)$ by definition. It remains to show that $L = O(\rapbpart)$,
  thus proving the claimed bound.

  When \mselmset{} is applied to $\subpro{w}$, let $A$ be the
  agglomerate for which $\subpro{w}$ is its leading subproblem.  The
  number $\card{\subpro{w}}$ of virtual symbols created in this call
  satisfies $\card{\subpro{w}} = \acard{A}$. Since the \RAP\
  computation time for this step (minus the call to \mselmset) is
  $\Omega(\acard{A})$ (e.g.\mbox{}
  Lemmas~\ref{lem:slice:comp}--~\ref{lem:slicejoin:comp}), this
  computation time is an upper bound for $\card{\subpro{w}}$.  Summing
  up over all the {\RAP}s, we obtain that the total number $L$ of
  virtual symbols thus created is upper bounded by the computation
  time of all the {\RAP}s minus the \mselmset{} calls, namely,
  $L = O(\rapbpart)$.
\end{proof}

\begin{lemma}
\label{lem:msel:Ksuffixes:comp}
The running time of the suffix multi-selection algorithm for a text of
length $N$ is upper bounded by  $O(K\log K+N+\rapbpart)$
when ${\mathcal R}$ is an interval of $K$ consecutive ranks.
\end{lemma}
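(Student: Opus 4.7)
The plan is to invoke Lemma~\ref{lem:msel:suffixes:comp} and specialize its bound under the consecutive-rank hypothesis. Since $\mathcal{R}$ is a consecutive interval, $\Delta_j=1$ for $1\leq j<K$, so $\sum_{j=0}^{K}\Delta_j\log\Delta_j=\Delta_0\log\Delta_0+\Delta_K\log\Delta_K$, where $\Delta_0=r_1$ and $\Delta_K=N+1-r_K$ satisfy $\Delta_0+\Delta_K=N+1-K$. Thus the formula from that lemma collapses, and the only task is to verify that $N\log N-\Delta_0\log\Delta_0-\Delta_K\log\Delta_K = O(K\log K + N)$.

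Two elementary inequalities deliver this bound. First, by convexity of $x\log x$ (Jensen), $\Delta_0\log\Delta_0+\Delta_K\log\Delta_K \geq (\Delta_0+\Delta_K)\log((\Delta_0+\Delta_K)/2)$. Setting $M=N+1-K$ and expanding, $N\log N-\Delta_0\log\Delta_0-\Delta_K\log\Delta_K \leq M\log(2N/M) + (K-1)\log N$. Since the function $y\mapsto y\log(2N/y)$ attains its maximum $2N/(e\ln 2)$ at $y=2N/e$, the first summand is $O(N)$; hence we get an $O(K\log N+N)$ bound. Second, writing $K\log N=K\log K+K\log(N/K)$ and applying the analogous optimization $y\log(N/y)\leq N/e$ at $y=K$, we conclude $K\log(N/K)=O(N)$, so $K\log N\leq K\log K+O(N)$. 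Combining the two inequalities, $N\log N-\sum_j\Delta_j\log\Delta_j=O(K\log K+N)$.

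Plugging back into Lemma~\ref{lem:msel:suffixes:comp} yields the total running time $O(K\log K+N+\rapbpart)$, as claimed. The main (and only) nontrivial point is the algebraic reduction $K\log N\leq K\log K+O(N)$, which rests on the elementary fact that $y\log(N/y)\leq N/e$; once this is noted, the rest is a direct specialization of Lemma~\ref{lem:msel:suffixes:comp}'s formula to the consecutive-rank case.
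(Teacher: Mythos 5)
Your proof is correct, but it takes a genuinely different route from the paper's. The paper proves this lemma by modifying the algorithm: it first runs suffix multi-selection with only the two extreme ranks $r_1,r_K$ (cost $O(N)$ by Theorem~\ref{theo:main}), then an intermediate stage collects the $K-2$ suffixes strictly between them into a single subproblem and calls \mselmset{} on their first symbols in $O(K\log K)$ time, and finally reruns the refine-and-aggregate stage with all $K$ ranks, arguing that only $K-2$ virtual strings now exist so the virtual-symbol cost drops to $O(K\log K+N)$ (see the remark following the lemma and the full version in Lemma~\ref{lem:multisel:contiguous:full}). You instead leave the algorithm untouched and specialize the general bound of Lemma~\ref{lem:msel:suffixes:comp}: for consecutive ranks the middle $\Delta_j$'s vanish from the sum, the Jensen step gives $O(K\log N+N+\rapbpart)$, and the elementary inequality $K\log N=K\log K+K\log(N/K)\le K\log K+O(N)$ (since $y\mapsto y\log(N/y)$ peaks at $y=N/e$ with value $\Theta(N)$) finishes the argument. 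Both steps are sound; the only slip is an immaterial off-by-one ($\Delta_0+\Delta_K=N+2-K$ rather than $N+1-K$, since $r_K-r_1=K-1$). What your route buys is considerable simplicity, and it incidentally shows that the gap the paper draws between the $O(K\log N+N)$ cost ``implied by equation~\eqref{eq:multi-bound}'' and the refined $O(K\log K+N)$ cost is asymptotically vacuous once the additive $N$ term is present. What the paper's route buys is an explicit two-phase computation that is reused verbatim to prove Theorem~\ref{the:bwt2}; under your argument that theorem would instead follow directly from Theorem~\ref{theo:main} together with the same inequality. Since the lemma only asserts an upper bound on the running time, your proof establishes it.
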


We need an intermediate stage to find the suffixes of ranks $r_1$ and
$r_K$, and create a subproblem with the remaining $K-2$ ones. After
that, we run our multi-selection. 

\begin{lemma}
  \label{lem:rapN}
  Independently of the choice of the ranks in ${\mathcal R}$, 
  $\rapbpart = O(N)$. 
\end{lemma}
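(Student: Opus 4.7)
The plan is to decompose $\rapbpart$ into contributions from each \RAP{} and then bound the sum, using the structural properties of columns and subproblems.

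First, I would show that one \RAP{} on agglomerate $A$, excluding the \mselmset{} call in its third step, takes time $O(\acard{A}+s_A)$, where $s_A$ denotes the number of new subproblems the \RAP{} creates. Steps~1 (classifying $A$) and 2 (building the keys) scan the columns of $A$ in $O(\acard{A})$ time; step~4 invokes \sliceop{} at cost $O(\acard{A}+s_A)$ by Lemma~\ref{lem:slice:comp}; step~5 processes $\gundecided$ in time $O\!\left(\sum_i \acard{A_i} + s_A\right)$ as detailed in Section~\ref{sub:undecided-details:full}; and step~6 applies \joinop{} and \slicejoinop{}, whose costs are bounded by Lemmas~\ref{lem:join:comp}--\ref{lem:slicejoin:comp} in terms of the sizes of the agglomerates being joined. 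All of these are absorbed into $O(\acard{A}+s_A)$.

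Second, I would bound $\sum_{\text{\RAP{}s}} s_A = O(N)$. Subproblems are produced only by slicing; they are never merged back into larger ones. Since each subproblem contains at least one suffix and there are only $N$ suffixes in $T$, the total number of distinct subproblems that ever exist across the whole computation is at most $N$.

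Third — this is the main obstacle — I would prove $\sum_{\text{\RAP{}s}} \acard{A} = O(N)$ by an amortized argument. The starting invariant is that at any instant the columns of all live agglomerates form a non-overlapping partition of a subset of $T$, so the total number of columns is always at most $N$. I would then track how each \RAP{} affects the distribution of columns: slicing preserves total columns but redistributes them among $A_1,\ldots,A_d$; joining permanently deletes $\acard{A'}$ columns of the receiving agglomerate (the contact columns of $\subpro{x}$); and slicejoining revives $\acard{A}$ previously exhausted columns only at the price of consuming the $\acard{A}$ columns of the driving unsolved $A$. By defining a potential that credits each column sitting in an unsolved agglomerate, and charging each \RAP{}'s $\acard{A}$ work against permanent retirements of columns into $\gexhausted$ — which must happen at every \RAP{} because \mselmset{} on $\subpro{w}$ always produces at least one pivotal multiset $\mathcal{F}_j$ whose rank becomes captured and eventually solved — the amortized per-\RAP{} cost becomes $O(s_A)$, and summing gives the desired $O(N)$ bound. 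Combining the three steps, $\rapbpart=\sum O(\acard{A}+s_A)=O(N)$, independently of $\mathcal{R}$.

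The hardest part will be Step~3: the presence of \slicejoinop{}, which can push columns from $\gexhausted$ back into $\gunsolved$, defeats any naive ``columns only move toward exhausted'' argument, so one must carefully pair each revival with the columns that the join simultaneously destroys and with the constant number of columns that the same \RAP{} permanently retires.
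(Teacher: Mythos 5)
Your Steps~1 and~2 are sound and match the paper: the per-\RAP{} cost (excluding \mselmset{}) is $\Oh{\acard{A}+s_A}$, and $\sum s_A=\Oh{N}$ because subproblems are only ever split (each genuine split produces at least two pieces, so the refinement forest has $\Oh{N}$ nodes). The gap is in Step~3. Your charging scheme pays for the $\acard{A}$ term of each \RAP{} out of ``permanent retirements of columns into $\gexhausted$,'' justified by the fact that \mselmset{} always produces some pivotal multiset $\mathcal{F}_j$. But a pivotal multiset $\mathcal{F}_j$ with $\card{\mathcal{F}_j}>1$ yields an agglomerate that goes to $\gunsolved$ or $\gjoinable$, not $\gexhausted$; the $\mathcal{M}_j$ pieces go to $\gundecided$ and may return to $\gunsolved$ in the fifth step; and when all keys of $\subpro{w}$ coincide the slice is trivial and \emph{nothing} is retired. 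So a single \RAP{} can do $\Theta(\acard{A})$ work while retiring $O(1)$ --- or zero --- columns into $\gexhausted$, and your amortization has nothing to charge against. ``Eventually solved'' does not rescue this: a deferred retirement can be charged only once, while the same columns can drive many later \RAP{}s.

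The paper closes exactly this hole by charging each of the $\acard{A}$ columns to one of four \emph{one-shot events}, not just to retirements: columns whose slice goes to $\gexhausted$ or $\gundecided$ are paid for by \emph{suffix discoveries/exhaustions} of the corresponding suffixes of the leading subproblem $\subpro{w}$ (a suffix, once discovered or exhausted, can never sit in a leading subproblem again, even though its column may live on); columns whose slice goes to $\gjoinable$ are paid for by \emph{column fusions} (each \joinop{} or \slicejoinop{} permanently merges two columns, so there are at most $N-1$ fusions ever --- this is also what pays for the ``revival'' of exhausted columns that worried you); and columns whose slice goes directly to $\gunsolved$ are paid for by \emph{inner collisions} between adjacent columns of $A$ whose contact suffixes lie in different subproblems, which cannot recur because the slice separates those columns into different agglomerates. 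Your plan identifies the first and (implicitly) the second of these currencies but then discards the second and never finds the third and fourth, so the potential argument as stated cannot be completed. You would need to enlarge your set of chargeable events to cover the columns that remain live after the \RAP{}.
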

\begin{proof}
  We give an analysis based on counting the following \emph{types of
    events}:
  \textsf{(a)~\emph{Subproblem creation}}: when some $\subpro{j}$ is
  partitioned into $\subpro{i_{1}},\ldots,\subpro{i_{p}}$ (during a
  \sliceop{} or at the beginning of a \slicejoinop{}).
  \textsf{(b)~\emph{Suffix discovery}}: when some
  $\suf{w}\in\subpro{w}$ is recognized as one of the wanted suffixes
  with rank in ${\mathcal R}$ (third step of \RAP{}).
  \textsf{(c)~\emph{Suffix exhaustion}}: when some suffix
  $\suf{e}\in\subpro{w}\in A$ becomes exhausted, where $\subpro{w}$ is
  the leading subproblem of $A$ (third step of \RAP{}).
  \textsf{(d)~\emph{Column fusion}}: when a column of $A'$ is fused
  with one of $A$ (during \joinop{}$(A',A)$ or at the end of a
  \slicejoinop{}).
  \textsf{(e)~\emph{Inner collision}}: when processing a column
  $\ppair{c_{i},r_{i}}$ of $A$ such that $\ppair{c_{j},r_{j}}$ is also
  in $A$ and satisfies $c_{j}=r_{i}+1$, while $\suf{c_{i}}$ and
  $\suf{c_{j}}$ do not belong to the same subproblem (second
  step of \RAP{}).
  
  \textit{Claim:} There are overall $\Oh{N}$ events occurring in any
  execution of our algorithms. Indeed, the same event cannot
    repeat.  A column is never divided and a subproblem is never
  merged with others.  If a suffix is exhausted or a wanted suffix is
  found, they can never be in an unsolved subproblem again. An inner
  collision is unique, since the two colliding columns will not be
  part of the same agglomerate.  Since we start with $N$ columns and one
  subproblem, the total number of events is $\Oh{N}$.
  
  For a generic \RAP{} on an agglomerate $A$, let $N_A$ be the number
  of events thus occurring.  Since the overall number of events is
  $O(N)$, this implies that $\sum_A N_A = O(N)$.  We show that $N_A
  \geq \acard{A} + \Pi_A$, where $\Pi_A$ is the number of created
  subproblems.  Consider events~\textsf{(a)}, and observe that their
  contribution is $\Pi_A$, which is the sum of three quantities:
  $|\cup_{i=1}^{d}A_{i}-A|$, where $A_{1},\ldots,A_{d}$ are the
  agglomerates into which $A$ is sliced in the fourth step;
  $\sum_{\{A_{i}\in\gundecided\}}|\cup_{j=1}^{d_i}A_{i_{j}}-A_{i}|$,
  where $A_{i_{1}},\ldots,A_{i_{d_i}}$ are the agglomerates into which
  each $A_{i}$ is sliced in the fifth step;
  $\sum_{\{A_{i}\in\gsljoinable\}} \card{A_{i*1}}$, where
  $\gsljoinable$ denotes all the $A_{i}$'s in $\gjoinable$ that need a
  \slicejoinop{} in the sixth step.
  
  As for events~\textsf{(b)}--\textsf{(e)}, they totalize at least
  $\acard{A}$ in number and occur in the fourth step. Namely, the
  number of \textsf{(b)}s and \textsf{(c)}s is at least the total
  number of columns of $A_i$'s that go to $\gexhausted$ or to
  $\gundecided$: each column of each $A_i$ moved to $\gundecided$
  corresponds to~\textsf{(c)}, and each column of each $A_i$ moved to
  $\gexhausted$ corresponds to either~\textsf{(b)}
  or~\textsf{(c)}. The number of \textsf{(d)}s and \textsf{(e)}s is at
  least the total number of columns of the $A_{i}$'s that are
  moved to $\gjoinable$ and $\gunsolved$, respectively.  Since the
  total number of involved columns in the fourth step is
  $\sum_{i=1}^{d}\acard{A_{i}}=\acard{A}$, we obtain the claimed
  number.
  
  At this point, to prove $\rapbpart = O(N)$, it remains to see that
  the cost of a generic \RAP{} (\mselmset{} excluded) on an
  agglomerate $A$ is $O(\acard{A} + \Pi_A)$ time.  The first three
  steps of the \RAP{} take $\Oh{\acard{A}}$ time.  The costs of the
  fourth and fifth steps are given by Lemma~\ref{lem:slice:comp}:
  precisely, $\Oh{\acard{A}+|\cup_{i=1}^{d}A_{i}-A|}$ and
  $O(\sum_{\{A_{i}\in\gundecided\}}(\acard{A_{i}}+
  |\cup_{j=1}^{d_i}A_{i_{j}}-A_{i}|))$.  By Lemmas~\ref{lem:join:comp}
  and~\ref{lem:slicejoin:comp}, the total cost of the sixth step is
  $O(\sum_{\{A_{i}\in\gjoinable\}} \acard{A_{i}}+
  \sum_{\{A_{i}\in\gsljoinable\}} \card{A_{i*1}})$.  Since
  $\sum_{\{A_{i}\in\gundecided\}}\acard{A_{i}}+\sum_{\{A_{i}\in\gjoinable\}}
  \acard{A_{i}}$ $\leq\acard{A}$, the total cost is $O(\acard{A}+\Pi_A)$
  and $\rapbpart = O(\sum_A (\acard{A}+\Pi_A)) = O(\sum_A N_A) =
  O(N)$.
\end{proof}


For any two subproblems $\subpro{i}$ and $\subpro{i'}$ such that
$\neigh{i}\not=\neigh{i'}$, let $\lcp{\subpro{i},\subpro{i'}}$ be the
length of the longest common prefix of any two suffixes
$\suf{j_i}\in\subpro{i}$ and $\suf{j_{i'}}\in\subpro{i'}$.  
We have the following:
\begin{lemma}\label{lem:lcp}
  The suffix multi-selection algorithm can support
  $\lcp{\subpro{i},\subpro{i'}}$ queries in $\Oh{1}$ time, for any two
  $\subpro{i},\subpro{i'}$ such that $\neigh{i}\not=\neigh{i'}$,
  without changing its asymptotic time complexity.
\end{lemma}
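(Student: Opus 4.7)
The plan is to reduce arbitrary lcp queries to range--minimum queries over a one--dimensional array of \emph{consecutive--pair} lcps, and to show that this array can be maintained during the algorithm at no extra asymptotic cost. The starting observation is the standard fact that, for any three subproblems $\subpro{a}<\subpro{b}<\subpro{c}$ with pairwise distinct neighborhoods, $\lcp{\subpro{a},\subpro{c}}=\min\!\left(\lcp{\subpro{a},\subpro{b}},\lcp{\subpro{b},\subpro{c}}\right)$, since the corresponding $\alpha$--strings are sorted. Hence, if we have the length $L[k]$ of $\lcp{\subpro{k},\subpro{k+1}}$ between every pair of subproblems that are consecutive in $\sublist$ and have different neighborhoods (and place $L[k]=+\infty$ for pairs inside the same neighborhood, which are irrelevant by the hypothesis $\neigh{i}\not=\neigh{i'}$), then $\lcp{\subpro{i},\subpro{i'}}$ is a range minimum over the appropriate window of $L$. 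After the algorithm terminates we build a Bender--Farach $\Oh{1}$--query RMQ index on $L$ in $\Oh{N}$ time and space; since $\sublist$ has at most $N$ entries, this adds only $\Oh{N}$ to the overall cost.

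First I would maintain, for every subproblem $\subpro{i}$, an integer $\ell_i$ equal to $\card{\alpha_i}$. Initially every non--degenerate subproblem created by the initialization stage has $\ell_i=1$ and every degenerate one has $\ell_i=0$. Then, whenever a \RAP{} on agglomerate $A$ slices $A$ according to the pivotal multisets $\mathcal{M}_0,\mathcal{F}_1,\ldots,\mathcal{F}_t,\mathcal{M}_t$ of the leading subproblem $\subpro{w}$, I would compute $\ell$ for every newly created subproblem and the entry $L[\cdot]$ between every pair of new subproblems that become consecutive in $\sublist$. Since all suffixes of $\subpro{w}$ share $\alpha_w$, each new pair of consecutive new subproblems agrees on at least the first $\ell_w$ symbols; the extension beyond $\ell_w$ is read off from the column structure and the keys used by \mselmset{}. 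In the generic case, a column $\ppair{c,r}$ has key $\suf{r+1}$ belonging to some already--existing subproblem $\subpro{K}$: if two consecutive new subproblems came from columns whose keys are in the same $\subpro{K}$, their common prefix length is $\ell_w+(r-c+1)+\ell_K$ (they only got separated by a deeper recursive $\subpro{K}$ that will be refined later, and in that case $L$ is tightened later), whereas if they came from columns with keys in two different subproblems $\subpro{K'}\not=\subpro{K''}$, the lcp is $\ell_w+(r-c+1)+\lcp{\subpro{K'},\subpro{K''}}$, which is obtained in $\Oh{1}$ by a recursive lookup on the already--maintained $L$. The core--cyclic case is analogous, taking into account the number $f$ of column copies in the key so that the extension contributed by the key is $f\cdot(r-c+1)$ plus the recursive lcp of the trailing key suffix. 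Each such update touches at most $\Oh{1}$ pointers per newly created subproblem, so the total bookkeeping cost summed over all \RAP{}s is $\Oh{\sum_A \Pi_A}=\Oh{N}$ by Lemma~\ref{lem:rapN}, and the asymptotic complexity is unaffected.

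The main obstacle is the consistency of this update rule when a slicing step splits a subproblem $\subpro{i}\not=\subpro{w}$ implicitly (fourth step of \RAP{}, where descendants of $\subpro{w}$ inherit a coarser refinement for free) and when agglomerates later merge via \joinop{} or \slicejoinop{}: the $\alpha$--string of an involved contact subproblem is only a proper suffix of the $\alpha$--string of its deeper descendants, so the implicitly refined descendants must inherit an $\ell$--value that is consistent with subsequent splits. I would handle this by exploiting Property~\ref{prop:leading-subproblem} and the fact that every descendant of $\subpro{w}$ has exactly one child along the ancestor chain: all the descendants along this chain share the extension realized by the keys of $\subpro{w}$, so the increments to their $\ell$--values are the same as for $\subpro{w}$. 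The $L$ entries created between consecutive implicit offspring of such descendants are of the form already computed for $\subpro{w}$'s children, offset by the extra prefix between $\alpha_w$ and the descendant's $\alpha$, which is an already--stored length. After all \RAP{}s terminate and $\sublist$ is stable, the finalization stage scans $\sublist$ once to finalize $L$, builds the RMQ index in linear time, and the lemma follows.
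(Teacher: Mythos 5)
Your overall strategy is the same as the paper's: maintain the lcp between consecutive entries of $\sublist$ (restricted to distinct neighborhoods), observe that an arbitrary query reduces to a minimum over the intervening consecutive-pair values because the subproblems are kept in lexicographic order, and charge the maintenance work to newly created subproblems so that by Lemma~\ref{lem:rapN} the total overhead is $\Oh{N}$. The paper realizes the range-minimum component as a Cartesian tree with dynamic LCA queries; your RMQ index over the array $L$ is the same object in different clothing.

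There is, however, a genuine gap in the timing of when that index exists. You build the RMQ structure only \emph{after} the algorithm terminates, yet your own update rule needs $\Oh{1}$ lcp queries between arbitrary, generally non-adjacent, already-existing subproblems \emph{during} the computation: when two consecutive new subproblems have keys lying in distinct subproblems $\subpro{K'}\not=\subpro{K''}$, the term $\lcp{\subpro{K'},\subpro{K''}}$ is a range minimum over a possibly long window of $L$, not a single array lookup, so ``a recursive lookup on the already-maintained $L$'' is not $\Oh{1}$ without an index that is kept current as $\sublist$ grows. So you need a \emph{dynamic} RMQ/LCA structure, and you must argue that each of the $\Oh{N}$ updates to it costs $\Oh{1}$. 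This is exactly the point the paper addresses: since slicing is a refinement, every newly created consecutive-pair lcp is at least as large as the two boundary lcps it is inserted between, so each update is a \emph{leaf insertion} into the Cartesian tree, which a dynamic LCA structure (e.g.\ Cole--Hariharan) supports in $\Oh{1}$ time. Without this observation (or an equivalent one) your bookkeeping is not $\Oh{1}$ per new subproblem and the ``no change in asymptotic complexity'' claim does not follow. A secondary, more cosmetic issue: the extension contributed by a column to a suffix of the leading subproblem $\subpro{w}$ is measured from that suffix's position $i$ inside the column (i.e.\ $r-i+1$), not necessarily the full column length $r-c+1$; as written your formula conflates the contact suffix with $\subpro{w}$'s suffix.
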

\begin{proof}
  Consider a snapshot of the computation, recalling that we maintain
  the sorted linked list $\sublist$ of all the subproblems (where some
  of them belong to the same neighborhood). Suppose by induction that
  we have the \textit{lcp}'s between consecutive subproblems. We use a
  Cartesian Tree (CT) plus lowest common ancestor (LCA) dynamic
  queries to compute the \textit{lcp} for any two subproblems. We
  maintain the induction at the end of the \sliceop\ operation: when
  $A_i$ is sliced into $A_{i_1}, \ldots, A_{i_d}$, the newly created
  subproblems $\subpro{x_1}, \ldots, \subpro{x_{d'}}$ for each
  $A_{i_j}$ are stored contiguously in $\sublist$. We compute
  $\mathit{lcp}(\subpro{x_1}, \subpro{x_2})$,~\dots,
  $\mathit{lcp}(\subpro{x_{d'-1}}, \subpro{x_{d'}})$ using using their
  keys, the CT, and the LCA queries, in $O(d')$ time, which we can pay
  for.  Since these \textit{lcp}'s are longer, we just need to add
  $d'-1$ new leaves to CT, without increasing the asymptotic complexity.
\end{proof}

\section{Proofs}
\label{sec:proofs}

\subsection{Multi-selection on a set of strings}
\label{subsec:msel:strings:full}

In order to prove the upper bound for the time complexity of our 
suffix multi-selection algorithm, we first need to prove the complexity 
of the following algorithm for multi-selection of \emph{independent 
strings} (i.e.\mbox{} not sharing symbols). 

We have a set $\mathcal{Z}$ of $N$ \emph{independent strings}
of total length $L$ such that none of them is the 
prefix of another. We want to select the $K$ strings with ranks in 
$\mathcal{R}=\set{r_{1},\ldots,r_{K}}$, where 
$r_1<\ldots< r_K$. Let $r_0$ and $r_{K+1}$ be $0$ and $N+1$ 
respectively. 

In our multi-selection algorithm for independent strings a subproblem 
$\ssubpro{i}$ is a subset of $\mathcal{Z}$ associated with 
$(a)$ $\ssubprorank{i}\subseteq\mathcal{R}$, $(b)$ $\sless{i}$, the number 
of strings in $\mathcal{Z}$ that are less than each one in 
$\ssubpro{i}$ and $(c)$ an offset $\soffset{i}\in\set{1,\ldots,l_{i}}$, 
where $l_{i}$ is the length of the shortest string in $\ssubpro{i}$.

\begin{enumerate}
\item\label{multistring:stepone:full} We put in $\gunsolved$ the first
subproblem 
$\ssubpro{0}=\mathcal{Z}$ where $\ssubprorank{0}=\mathcal{R}$, 
$\sless{0}=0$ and $\soffset{0}=1$.

\item\label{multistring:steptwo:full} We repeat the following steps until
$\gunsolved$ is empty. 
\begin{enumerate}
\item\label{multistring:steptwo:a:full} We pick a 
subproblem $\subpro{i}$ in $\gunsolved$ and we call 
\mselmset{}$(\mathcal{M},\mathcal{R}_{i}^{-})$ where 
$\mathcal{M}=\set{s[\soffset{i}]\;\vert\; s\in\ssubpro{i}}$
and $\mathcal{R}_{i}^{-}=\set{r_{j}-\sless{i}\;\vert\; r_{j}\in\ssubprorank{i}}$.
This partitions $\mathcal{M}$ into its pivotal multisets 
$\mathcal{M}_0,\mathcal{F}_1,\mathcal{M}_1,\ldots,\mathcal{F}_t,\mathcal{M}_{t}$. 

\item\label{multistring:steptwo:b:full} For each $\mathcal{F}_{j}$, we collect 
the set $\ssubpro{i_{j}}\subseteq\ssubpro{i}$ of the strings 
corresponding to the symbols in $\mathcal{F}_{j}$, and  
we set $\soffset{i_{j}}=\soffset{i}+1$, 
$\sless{i_{j}}=\card{\mathcal{M}_{0}}+\sum_{x=1}^{j-1}\card{\mathcal{F}_{x}\cup\mathcal{M}_{x}}$
and $\ssubprorank{i_{j}}=\set{r_{y}\in\ssubprorank{i}\;\vert\; 
\sless{i_{j}}<r_{y}\le\sless{i_{j}}+\card{\ssubpro{{i}_{j}}}}$.

\item If $\card{\ssubpro{i_{j}}}=1$ we move $\ssubpro{i_{j}}$ to $\gsolved$. 
Otherwise, we move $\ssubpro{i_{j}}$ to $\gunsolved$. 
\end{enumerate}

\item\label{multistring:stepthree:full} For each $\ssubpro{i}$ in $\gsolved$, we
output 
$\seq{s_{i},r_{i}}$, where $s_{i}\in\ssubpro{i}$ and 
$r_{i}\in\ssubprorank{i}$.
\end{enumerate}

\begin{lemma}\label{lem:msel:strings:comp:full}
The running time of the multi-selection algorithm for a set 
$\mathcal{Z}$ of independent  
(and prefix free) strings of total length $L$ is upper bounded by
$$ c\left(N\log N - \sum_{j=0}^K \Delta_j \log \Delta_j+N+5L'\right) 
$$
where $c$ is a suitable integer constant, 
$r_0 = 0$, $r_{K+1} = N+1$, $\Delta_j \equiv r_{j+1} - r_j$ for
$0 \leq j \leq K$, and $L'\le L$ is the smallest number of symbols we 
need to probe to find the wanted strings.
\end{lemma}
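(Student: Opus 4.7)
The plan is to aggregate the running time across all invocations of \mselmset{} by organizing the processed subproblems into a recursion tree rooted at $\ssubpro{0}=\mathcal{Z}$, whose children at each node $\ssubpro{i}$ are the $\ssubpro{i_j}$ derived from its pivotal multisets $\mathcal{F}_j^i$. For each node I would apply Lemma~\ref{lem:multisel:multiset:full} to bound the cost of the \mselmset{} call on the symbols of $\ssubpro{i}$ at offset $\soffset{i}$ by
\begin{equation*}
c n_i \log n_i - c|\mathcal{M}_0^i|\log|\mathcal{M}_0^i| - c\sum_{j\ge 1} (|\mathcal{F}_j^i| + |\mathcal{M}_j^i|) \log (|\mathcal{F}_j^i| + |\mathcal{M}_j^i|) + c n_i,
\end{equation*}
with $n_i=|\ssubpro{i}|$, and then sum these bounds across the tree.

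The key telescoping step pairs each non-root $\ssubpro{i_j}$ (of size $|\mathcal{F}_j^i|$, where $\ssubpro{i}$ is its parent) with the corresponding negative block term $-c(|\mathcal{F}_j^i|+|\mathcal{M}_j^i|)\log(|\mathcal{F}_j^i|+|\mathcal{M}_j^i|)$ contributed by $\ssubpro{i}$: the child's positive $c|\mathcal{F}_j^i|\log|\mathcal{F}_j^i|$ cancels part of this, leaving a net contribution of $c|\mathcal{F}_j^i|\log|\mathcal{F}_j^i|-c(|\mathcal{F}_j^i|+|\mathcal{M}_j^i|)\log(|\mathcal{F}_j^i|+|\mathcal{M}_j^i|) \le -c|\mathcal{M}_j^i|\log|\mathcal{M}_j^i|$ by the super-additivity $(F+M)\log(F+M)\ge F\log F+M\log M$. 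After full telescoping only $cN\log N$ survives on the positive side, while the negatives are collected from every $|\mathcal{M}_0^i|\log|\mathcal{M}_0^i|$ and every $|\mathcal{M}_j^i|\log|\mathcal{M}_j^i|$ generated during the recursion. These $\mathcal{M}$-sets, together with the $K$ solved singletons, partition $\mathcal{Z}$ into pieces each lying within a single $\Delta$-interval between consecutive requested ranks; viewing the whole process through the chain rule of entropy, the aggregated negative mass equals $c\sum_P|P|\log|P|$, where $P$ ranges over the final pieces.

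To match the claimed $\sum_{j=0}^K\Delta_j\log\Delta_j$, I would observe that any splitting of a $\Delta_j$-interval across multiple $\mathcal{M}$-pieces forces the algorithm to have probed a distinguishing symbol of each string in the gap, so the ``missing'' log-mass $\Delta_j\log\Delta_j - \sum_{\text{splits}}|P|\log|P|$ (caused by convexity loss) is charged to additional symbol probes. Those probes are in turn absorbed by the $5L'$ linear term: the aggregated $\sum_i cn_i$ linear contribution equals the total symbol probes issued by the algorithm (one per string per subproblem it enters), and a charging argument against the lower bound $L'$ on the probes required by any correct algorithm yields the constant $5$ after accounting for the $cn_i$ factors in each \mselmset{} call and the redundancy of probing. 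The $N$ baseline covers one-per-string overhead. The main obstacle is precisely this interplay between the log savings (which shrink whenever a gap is split among many $\mathcal{M}$-sets) and the linear probing cost (which grows correspondingly): one must show that for every splitting pattern realized by the algorithm, the lost log-mass and the extra probing cost balance within the combined $N+5L'$ budget, i.e., that whenever the algorithm splits a gap it necessarily performed enough symbol probes to pay for the split.
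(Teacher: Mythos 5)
Your setup is sound and matches the paper's skeleton: apply Lemma~\ref{lem:multisel:multiset:full} at every node of the recursion, telescope the positive $c\card{\mathcal{F}_j^i}\log\card{\mathcal{F}_j^i}$ of each child against the parent's block term via superadditivity, and conclude that the cost is at most $cN\log N - c\sum_{P}\card{P}\log\card{P} + c\cdot(\mbox{total probes})$, where $P$ ranges over the exhausted $\mathcal{M}$-pieces. The problem is the last bridge, from $\sum_{P}\card{P}\log\card{P}$ to $\sum_{j}\Delta_j\log\Delta_j$, which you yourself flag as ``the main obstacle'' and then only assert. This is not a routine detail: a single gap $\Delta_j$ is not split once but is \emph{peeled} repeatedly, because after the ranks $r_j$ and $r_{j+1}$ separate, the gap's strings live in (at most two) unsolved subproblems that keep shedding one exhausted $\mathcal{M}$-piece per recursion level along each of the two bounding rank paths. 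So a gap can shatter into $\Theta(\mathrm{depth})$ pieces, the convexity loss $\Delta_j\log\Delta_j-\sum_{P\subseteq \mathrm{gap}\,j}\card{P}\log\card{P}$ can be as large as $\Delta_j$ times the log of the number of pieces, and showing that this is always dominated by (a constant times) the symbol probes actually charged to those strings requires a genuine quantitative argument relating piece sizes to shedding depths. Nothing in your proposal supplies that argument, and the constant $5$ in the statement exists precisely to absorb it.

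The paper avoids the global reassembly problem by strengthening the induction instead: it proves the lemma by induction on the offset size of a subproblem, so that the inductive hypothesis for each recursive call on $\ssubpro{z_o}$ already hands back its two \emph{boundary} gap contributions $-c(r_p-N_o)\log(r_p-N_o)$ and $-c(\card{\mathcal{F}_o}-r_x+N_o+1)\log(\cdots)$ as single assembled quantities (they are the $\Delta_0$ and $\Delta_{K}$ terms of the sub-instance), together with its own $5L'_o$ slack. Consequently each gap $\Delta_x$ only ever needs to be merged from at most three pieces --- the tail of $\mathcal{F}_o$'s subproblem, the exhausted $\mathcal{M}_o$, and the head of $\mathcal{F}_{o+1}$'s subproblem --- at the one level where its two bounding ranks separate, and the merging slack from $(a+b)\log(a+b)\le a\log a+b\log b+O(a+b)$ is only $O(\Delta_x)$ per gap. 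These per-level remainders sum to $O(N)$ at each level and are paid for by the $N$ probes made at that level, which is exactly how $L'=N+\sum_o L'_o$ unwinds into the $5L'$ term. If you want to keep your global telescoping formulation, you must either prove the depth-versus-entropy charging claim directly, or restructure the argument as the paper does so that only $O(1)$ pieces per gap are ever recombined.
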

\begin{proof}
The \emph{offset size} of a subproblem $\ssubpro{i}$ is the 
total length of the strings in
$\set{s[\soffset{i}\cdots\card{s}]\; \vert\; s\in\ssubpro{i}}$.
It is easy to see that the computations on distinct subproblems proceed 
independently from one another.
Hence, we prove the thesis by induction on the offset size of subproblems.

Let us consider subproblem $\ssubpro{0}$ 
(step~\ref{multistring:stepone:full}), its corresponding multiset 
$\mathcal{M}$ and the pivotal multisets of  $\mathcal{M}$
computed in step~\ref{multistring:steptwo:full}. 
Because of Lemma~\ref{lem:multisel:multiset:full}, after the first 
execution of step~\ref{multistring:steptwo:a:full} the 
terms $cN\log N$ and $cN$ of the wanted upper bound are accounted for.

Let us first deal with the pivotal multisets $\mathcal{F}_{*}$ of $\mathcal{M}$ 
such that $\card{\mathcal{F}_{*}}=1$. They correspond to solved 
subproblems and they are not involved in 
step~\ref{multistring:steptwo:full} any longer. 
Let us consider all the maximal 
consecutive groups of them. Let one such group be  
$\mathcal{F}_{i_g},\mathcal{F}_{i_g+1},\dots,\mathcal{F}_{i_g+z}$, for 
some $\mathcal{F}_{i_g}$. We know that 
$\mathcal{F}_{i_g}$ is associated with just one rank $r_{i_{g'}}$, 
$g'\ge g$ and 
the only string in the corresponding subproblem is the one of rank $r_{i_{g'}}$ in $\mathcal{Z}$.
Analogously, $\mathcal{F}_{i_g+1}$ is associated with just one rank 
$r_{i_{g'}+1}$ and so forth up to $\mathcal{F}_{i_g+z-1}$ included. 
Hence, for any such  $\mathcal{F}_{i_g+w}$, we have that
$\card{\mathcal{F}_{i_g+w}}+\card{\mathcal{M}_{i_g+w}}=\Delta_{i_{g'}+w}$.
Therefore, because of Lemma~\ref{lem:multisel:multiset:full}, the term 
$-c\Delta_{i_{g'}+w} \log \Delta_{i_{g'}+w}$ of the wanted bound is 
accounted for. 
What we have said so far does not hold for $\mathcal{F}_{i_g+z}$ (unless it is 
the rightmost one of all the pivotal multisets $\mathcal{F}_{*}$). Let 
us deal with these cases later.

Let us now consider any pivotal multiset $\mathcal{F}_o$ created in 
the first execution of step~\ref{multistring:steptwo:full} and containing more 
that one symbol. Let $\ssubpro{z_{o}}$ the subproblem corresponding to  
$\mathcal{F}_o$. Let $r_p,r_{p+1},\ldots,r_x$ be the ranks corresponding to 
$\mathcal{F}_o$ and let $N_o$ be the 
number of symbols in $\mathcal{M}$ less than the ones in 
$\mathcal{F}_o$. Since $\ssubpro{z_{o}}$ 
has offset size less than the one of $\subpro{0}$, by inductive hypothesis 
we know that the algorithm retrieves the strings of $\mathcal{Z}$ with 
ranks  $r_p,r_{p+1},\ldots,r_x$ using time 

$$c\card{\mathcal{F}_o}\log\card{\mathcal{F}_o} 
+c\card{\mathcal{F}_o}+c5L'_o -c\sum_{j=p}^{x-1} 
\Delta_j \log \Delta_j+$$
$$-c(r_p-N_o)\log(r_p-N_o) 
-c(\card{\mathcal{F}_o}-r_x+N_o+1)\log(\card{\mathcal{F}_o}-r_x+N_o+1)$$

\noindent{}where $L'_o+\card{\ssubpro{z_{o}}}$ is the smallest number of symbols we 
need to probe to find the wanted strings amongst the ones in 
$\ssubpro{z_{o}}$ (since $\soffset{z_{o}}=2$, i.e. the first 
symbol of each string in $\ssubpro{z_{o}}$ can be skipped). 
Let us call $-c(r_p-N_o)\log(r_p-N_o)$ the \emph{left inductive 
term} and the 
two terms $c\card{\mathcal{F}_o}$, $c5L'_o$ \emph{inductive remainder 
terms}. 
We will deal with them later. 

By Lemma~\ref{lem:multisel:multiset:full}, we have term 
$-c\left(\card{\mathcal{F}_o}+\card{\mathcal{M}_o}\right)\log 
\left(\card{\mathcal{F}_o}+\card{\mathcal{M}_o}\right)$ from the first 
execution of step~\ref{multistring:steptwo:a:full}. 
We can upper bound that term by the two-term expression 
$-c\card{\mathcal{F}_o}\log \card{\mathcal{F}_o}-c\card{\mathcal{M}_o}\log \card{\mathcal{M}_o}$. 
The first term $-c\card{\mathcal{F}_o}\log \card{\mathcal{F}_o}$ can 
be cancelled with the opposite term in the expression given by the inductive 
hypothesis.

For each rank $r_j \in\set{r_p,r_{p+1},\ldots,r_{x-1}}$   
the term $-c\Delta_j \log \Delta_j$ of the wanted upper bound is 
accounted for, since it is in 
the expression given by the inductive 
hypothesis. Amongst the 
ranks corresponding to $\card{\mathcal{F}_o}$, $r_x$ is the one with 
which we 
have yet to deal.  

Let us first assume that 
$\card{\mathcal{F}_{o+1}}=1$. If that is true, we know that 
$N_*+\card{\mathcal{M}_o}=\Delta_x$, where we denoted 
with $N_*$ the number of strings in $\ssubpro{z_{o}}$ with rank (in 
$\mathcal{Z}$) greater 
than or equal to $r_x$. But $N_*=\card{\mathcal{F}_o}-r_x+N_o+1$. 
Hence, to obtain the term $-c\Delta_x\log\Delta_x$ for the wanted 
bound, we need to mix the terms 
$-c(\card{\mathcal{F}_o}-r_x+N_o+1)\log(\card{\mathcal{F}_o}-r_x+N_o+1)$
(from the inductive hypothesis on $\ssubpro{z_{o}}$) and $-c\card{\mathcal{M}_o}\log \card{\mathcal{M}_o}$
(from the first execution of step~\ref{multistring:steptwo:a:full}). 

It is easy to prove that for any $a,b\ge 1$, 
$(a+b)\log(a+b)\ge a\log a+b\log b +2(a+b)$. Thus, we have that  

$$-cN_*\log N_*-c\card{\mathcal{M}_o}\log \card{\mathcal{M}_o}\le$$ 
$$-c(N_*+\card{\mathcal{M}_o})\log(N_*+\card{\mathcal{M}_o})+
2c(N_*+\card{\mathcal{M}_o})=-c\Delta_x\log\Delta_x+2c\Delta_x.$$

\noindent{}Thus, we have now obtained the $-c\Delta_x\log\Delta_x$ term for the last 
rank $r_x$ of $\mathcal{F}_o$ for the case 
$\card{\mathcal{F}_{o+1}}=1$. However, we still have to deal with the 
extra $2c\Delta_x$, let us call it the \emph{extra remainder 
term}.

Let us now consider the case 
$\card{\mathcal{F}_{o+1}}>1$. Thus we can use the inductive hypothesis 
and in this case 
$\mathcal{F}_{o+1}$ contributes to the bound a \emph{left inductive
term} $-c(r_p'-N_o')\log(r_p'-N_o')$ (analogous to the 
$-c(r_p-N_o)\log(r_p-N_o)$ for $\mathcal{F}_{o}$),
 where $r_p'$ is the smallest rank associated with 
$\mathcal{F}_{o+1}$ and $N_o'$ is the number of symbols in 
$\mathcal{M}$ less than the ones in $\mathcal{F}_{o+1}$.
The term $-c\Delta_x\log\Delta_x$ is obtained in the same 
way we did before, only this time we have to mix three intervals 
instead of two, since in this case 
$\Delta_x=N_*+\card{\mathcal{M}_o}+(r_p'-N_o')$. Hence, 
in this case the \emph{extra remainder term} is $4c\Delta_x$. 

Before we account for all the remainder terms, we still need to deal 
with the rightmost multiset of each maximal 
consecutive group of those multisets $\mathcal{F}_*$ such that 
$\card{\mathcal{F}_*}=1$. The (single) ranks $r_u$ of each one of these rightmost 
multisets are the only ones for which we have not yet obtained the 
term $-c\Delta_u\log\Delta_u$ of the wanted upper bound. For any such multiset 
$\mathcal{F}_i$ with rank $r_i'$ ($i\le i'$),  
$-c\Delta_i'\log\Delta_i'$ can be obtained in the exact same way we 
did above and hence we have an extra remainder term for each one of 
these ranks too.

Finally, let us account for the remainder terms. For each 
$\mathcal{F}_o$, we have at most three kinds 
of remainders: $c\card{\mathcal{F}_o}$, $c5L'_o$ and a third kind that 
is at most $4c\Delta_x$, where $r_x$ is the largest rank associated with 
$\mathcal{F}_o$ and $L'_o+\card{\ssubpro{z_{o}}}$ is the smallest number of symbols we 
need to probe to find the wanted strings amongst the ones in 
$\ssubpro{z_{o}}$ (if any). Overall, the first and third kinds
are upper bounded by $c5N$. By the definition of the second kind of 
remainder and since all the $N$ symbols in $\mathcal{M}$ 
need to be probed to find the wanted strings, we have that 
$c5N+\sum_{\set{\mathcal{F}_o\left\vert\card{\mathcal{F}_o}>1}\right.} 
c5L'_o=c5L'$.

\end{proof}

\subsection{Proof of Lemma~\ref{lem:slice:comp}}
\label{subsec:slice:comp:full}

Let us consider the pruning phase. In the first step, the level-by-level visit of the 
skip tree of $A$ can be easily done in $\Oh{\acard{A}}$: the skip tree 
contains only contact and branching subproblems, hence its has 
$\Oh{\acard{A}}$ nodes; also, by Lemma~\ref{lem:grouping:full}, all the 
calls to \grouping{} have a total cost which is linear in the number 
of columns of $A$, which is $\acard{A}$. About the second step, the 
cost of \skipvisit{} on the skip tree is clearly $\Oh{\acard{A}}$.  
So is the cost of the third step  where we do an $\Oh{1}$ amount of 
work for each node in $\pruned$. 
In the fourth step we do $\Oh{1}$ amount of work for each subproblem 
that is not in a subtree rooted at some $\subpro{w_{i}}\in\pruned$. 
By the definition of $\pruned$ we already know that each one of the 
subproblems we touch in the fourth step will be later partitioned into two 
or more smaller subproblems. Thus they are certainly less than 
$\card{\bigcup_{i=1}^{d}A_{i}-A}$ (which is the total number of new 
subproblems that are created by the slicing of $A$). Thus the pruning phase takes 
$\Oh{\acard{A}+\card{\bigcup_{i=1}^{d}A_{i}-A}}$ time.

Let us consider the slicing phase. First of all, \slicerec{} performs 
a depth first visit of the tree of $A$ that whenever encounters a 
node in $\pruned$ it does not go any deeper. Thus, the total number of 
nodes visited is equal to the number of subproblems that are partitioned
into two or more subproblems by the slice operation. 

Let us consider the internal nodes of $A$. For any such $\subpro{r}$, in the 
first step, after the 
recursive calls to \slicerec{} have returned, we call \grouping{} on a 
list containing  
$\card{\mathcal{C}_{r}}+\sum_{i=1}^{f}\mbox{\it{}sub}_{r_{i}}$ objects, 
where $\mathcal{C}_{r}$ is the set of contact suffixes of $\subpro{r}$ 
(if it is a contact node) and $\mbox{\it{}sub}_{r_{i}}$ is the number 
of subproblems into which the $i$-th child of $\subpro{r}$ has been 
refined into (during the recursive call). We charge each 
$\mbox{\it{}sub}_{r_{i}}$ term to the corresponding child. The second 
step costs $\Oh{\mbox{\it{}sub}_{r}}$ time. The third step requires 
$\Oh{1}$ amount of work for each one of the children of each new 
subproblem into which $\subpro{r}$ has been partitioned: we charge any 
such $\Oh{1}$ amount of work the the corresponding child.
Thus, the total cost for an internal node $\subpro{r}$ of $A$ (including the costs 
charged to $\subpro{r}$ by its parent) is $\Oh{\mbox{\it{}sub}_{r}}$.

For each leaf $\subpro{r}$ of $A$, the total work we do is of the order of the number of columns
of $\subpro{r}$ (by Lemma~\ref{lem:grouping:full}). The amount charged to  
$\subpro{r}$ by its parent is of the same order. Thus, overall the 
cost for all the leaves of $A$, the internal nodes of $A$ and the new 
nodes of each $A_{i}$ is 
$\Oh{\acard{A}+\card{\bigcup_{i=1}^{d}A_{i}-A}}$.

Finally let us consider the finishing phase. In the first step we can 
use radix sorting and thus the first and second steps take 
$\Oh{\card{\bigcup_{i=1}^{d}A_{i}-A}}$ time. The third step accesses 
each skip node of each new agglomerate $\Oh{1}$ times, thus 
costing $\Oh{\acard{A}}$. The fourth step, accesses at most all the nodes 
that are not descendants of any node in $\pruned$. By 
using $\rankstrut$, each access to establish if a node is unsolved 
takes $\Oh{1}$ time. Thus the total cost of the fourth step is
$\Oh{\card{\bigcup_{i=1}^{d}A_{i}-A}}$.

\subsection{Proof of Theorem~\ref{theo:main:full}}

At this point, Theorem~\ref{theo:main:full} follows directly from the
following Lemmas~\ref{lem:msel:suffixes:comp:full}
and~\ref{lem:rapN:full}, whose proofs detail some of the ideas
presented in Section~\ref{sec:correctness-analysis}.

\begin{lemma}
\label{lem:msel:suffixes:comp:full}
The running time of the suffix multi-selection algorithm for a text of
length $N$ is upper bounded by $O\left(N\log N - \sum_{j=0}^K \Delta_j
  \log \Delta_j+N+\rapbpart\right)$, where $\rapbpart$ is the total 
time required by all the \RAP{}s minus the time for the \mselmset{} 
calls.
\end{lemma}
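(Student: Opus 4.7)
The plan is to reduce the problem to the independent-strings multi-selection bound of Lemma~\ref{lem:msel:strings:comp:full} and then to charge the overhead of the reduction back to $\rapbpart$. The reduction I would use is the \emph{virtual-symbol} viewpoint: whenever \mselmset{} is invoked inside a \RAP{} on an agglomerate $A$ with leading subproblem $\subpro{w}$, the keys it manipulates (assembled in the second step of \RAP{} from the integer labels of the current contact and root suffixes) can be treated as fresh symbols of a family of virtual strings, one per suffix in $\subpro{w}$. These strings grow one symbol at a time, each time a subproblem containing the suffix appears as the leading subproblem of its current agglomerate. A short check shows that the resulting collection is prefix-free and that the virtual symbols of distinct calls are disjoint: once \sliceop{} refines $\subpro{w}$, every subsequent \mselmset{} on a descendant operates on different contact/root suffixes, so it generates genuinely new symbols.

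Granting this viewpoint, Lemma~\ref{lem:msel:strings:comp:full} applied to the $N$ virtual strings gives $O\!\left(N\log N - \sum_{j=0}^K \Delta_j \log \Delta_j + N + L\right)$ as the total cost of all \mselmset{} invocations, where $L$ denotes the total number of virtual symbols ever produced. The remaining work of every \RAP{} (the kind-classification, key assembly, slicing, undecided/joinable handling, joining, and finalization) amounts to $\rapbpart$ by definition, so summing the two contributions yields exactly the claimed bound, provided I can establish $L = O(\rapbpart)$.

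The main obstacle, and the only nontrivial step, is this bound $L = O(\rapbpart)$. I would argue it \RAP{} by \RAP{}: when \mselmset{} is called on $\subpro{w}$ in a \RAP{} on $A$, the number of virtual symbols it creates is exactly $\card{\subpro{w}} = \acard{A}$, while the non-\mselmset{} portion of the same \RAP{} necessarily costs $\Omegah{\acard{A}}$ time, because the third step alone must emit keys for the $\acard{A}$ columns and the fourth step feeds them to \sliceop{} whose cost is $\Omegah{\acard{A}}$ by Lemmas~\ref{lem:slice:comp}--\ref{lem:slicejoin:comp}. Summing the inequality $\acard{A} = \Oh{\text{non-\mselmset{} cost of this \RAP{}}}$ over all \RAP{}s gives $L = O(\rapbpart)$, and substituting back absorbs $L$ into $\rapbpart$, completing the argument. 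A minor technical point I would verify along the way is that for core-cyclic agglomerates, whose keys are sequences $\slab{i}\slab{i}\cdots\slab{i}\slab{j}$ rather than single labels, comparisons still cost $\Oh{1}$ via \arrsub{} (as claimed in the second step of \RAP{}), so the independent-strings model is applied with a constant per-symbol comparison cost and the virtual-symbol count remains tied to $\acard{A}$.
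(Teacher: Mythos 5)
Your proposal is correct and follows essentially the same route as the paper's own proof: the virtual-symbol reduction to Lemma~\ref{lem:msel:strings:comp:full}, followed by the charging argument $L=O(\rapbpart)$ obtained by noting that each \RAP{} on $A$ creates exactly $\acard{A}$ virtual symbols while its non-\mselmset{} portion costs $\Omega(\acard{A})$ by Lemmas~\ref{lem:slice:comp}--\ref{lem:slicejoin:comp}. The additional checks you flag (prefix-freeness and disjointness of the virtual strings, and $O(1)$ key comparisons for core-cyclic agglomerates via \arrsub{}) are consistent with what the paper asserts in Sections~\ref{subsec:subproblems} and~\ref{subsec:rap}.
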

\begin{proof}
The cost of the initialization stage is dominated by the call to 
\mselmset{} (building \rankstrut{} takes $\Oh{N}$ time). 
By Lemma~\ref{lem:multisel:multiset:full}, we know that the cost 
of that call is within our target bound.
After the Refine and Aggregate stage, the cost of the finalization 
stage (Section~\ref{sub:finalization-stage:full}) is
$\Oh{N}$.

Let us now account for the contribution of the calls to 
\mselmset{} to the total cost. Instead of the normal time cost, let us consider 
the \emph{virtual symbol cost} of the calls to \mselmset{}: a call to 
\mselmset{} costs or (conceptually) \emph{creates} $x$ virtual symbols 
if the multiset that it receives in 
input has $x$ objects. The virtual symbols created by the call to 
\mselmset{} for some leading subproblem are exclusively created and 
``used'' for that subproblem. Thus, unlike $T$'s symbols, 
virtual symbols are not shared by subproblems. Naturally, 
virtual symbols form virtual strings: each subproblem $\subpro{i}$ has 
associated $\card{\subpro{i}}$ \emph{virtual strings} that are made up of all 
the virtual symbols that will be created during the computation 
to 
refine $\subpro{i}$ every time it is the leading subproblem of its 
current agglomerate. And, unlike the suffixes of $T$, all the 
virtual strings are \emph{independent}. 

Picking 
an unsolved agglomerate $A$ and refining it with a \RAP{} can be seen as 
a two-part process:
$(a)$ picking an 
unsolved subproblem, the leading subproblem of $A$, and refining it with the 
call to \mselmset{}; $(b)$ refining all the other unsolved subproblems of $A$
with the rest of the \RAP{} (mainly the call to \sliceop{}). 
Thus, if we take aside all the $(b)$ parts of the \RAP{}s and 
if we consider the subproblems to be subsets of virtual strings, 
then the suffix multi-selection 
algorithm behaves exactly like the multi-selection algorithm for 
independent strings in Section~\ref{subsec:msel:strings:full}.
Therefore, by Lemma~\ref{lem:msel:strings:comp:full}, we have that the 
cost of the algorithm is 
$ \Oh{N\log N - \sum_{j=0}^K \Delta_j \log 
\Delta_j+N+L+\rapbpart} 
$,
where $r_0 = 0$, $r_{K+1} = N+1$, $\Delta_j \equiv r_{j+1} - r_j$ for
$0 \leq j \leq K$, $L$ is the total number of virtual symbols (and 
$\rapbpart$ is the total cost of the $(b)$ parts of the \RAP{})s.

Let us evaluate $L$. As we have seen above, a \RAP{} an agglomerate $A$ creates
a number of new virtual symbols that is equal to the cardinality of the multiset
passed to \mselmset{} (equal to $\acard{A}$). Since it is not computed during
\mselmset{} call, the cardinality of such multiset must be $\Oh{\rapbpart(A)}$
(where $\rapbpart(A)$ denotes the total cost of the \RAP{} on $A$ minus the
cost of the \mselmset{} call). Therefore $L=\Oh{\rapbpart}$.
\end{proof}

\begin{lemma}
  \label{lem:rapN:full}
  $\rapbpart = O(N)$.
\end{lemma}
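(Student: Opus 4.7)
The plan is to prove $\rapbpart = O(N)$ by an amortized charging argument against a global pool of \emph{events}, each of which can fire at most once and whose overall count is $O(N)$. Following the outline already sketched in the proof of Lemma \ref{lem:rapN}, I would work with five event types: (a) \emph{subproblem creation} inside a \sliceop{} or at the start of a \slicejoinop{}; (b) \emph{suffix discovery} of a wanted suffix from a leading subproblem; (c) \emph{suffix exhaustion} of a suffix from a leading subproblem; (d) \emph{column fusion} inside \joinop{} or at the end of a \slicejoinop{}; and (e) \emph{inner collision}, i.e.\ processing a column $\langle c_i,r_i\rangle$ of an agglomerate $A$ for which another column $\langle c_j,r_j\rangle$ of $A$ satisfies $c_j=r_i+1$ while $\suf{c_i}$ and $\suf{c_j}$ live in different subproblems. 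Since subproblems are never merged, columns never split, exhausted and discovered suffixes never re-enter an unsolved subproblem, and the two sides of an inner collision are separated by the ensuing slicing and cannot end up back together in a common agglomerate, each event fires at most once. With $N$ initial columns and a single initial subproblem as seed, the total number of events over the whole execution is $O(N)$.

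The core step is then a per-\RAP{} inequality: for an agglomerate $A$ refined by a single \RAP{}, if $\Pi_A$ denotes the number of newly created subproblems and $N_A$ the number of events witnessed during this \RAP{}, then $N_A \geq \acard{A} + \Pi_A$. The $\Pi_A$ contribution is exactly the tally of type (a), which I would decompose as $|\bigcup_i A_i - A|$ from the slicings of the fourth step, plus the slicings of $A_i\in\gundecided$ from the fifth step, plus the slice-joins performed in the sixth step. The $\acard{A}$ contribution is obtained by looking, column by column, at which bucket the host sub-agglomerate $A_i$ is sent to in the fourth step: columns routed to $\gexhausted$ witness events of types (b) or (c), columns routed to $\gundecided$ witness a type (c), columns routed to $\gjoinable$ witness a type (d), and columns routed to $\gunsolved$ witness a type (e). Since the fourth step partitions $A$'s columns across these buckets, every one of the $\acard{A}$ columns yields a fresh event of type (b)--(e).

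On the cost side, I would match the lower bound $N_A\ge \acard{A}+\Pi_A$ with an upper bound $O(\acard{A}+\Pi_A)$ on the per-\RAP{} time (excluding the \mselmset{} call). The first three steps of \RAP{} are $O(\acard{A})$ by inspection; the fourth and fifth steps are bounded using Lemma \ref{lem:slice:comp} applied to $A$ and to each $A_i\in\gundecided$; the sixth step uses Lemmas \ref{lem:join:comp} and \ref{lem:slicejoin:comp}. Crucially, $\sum_{A_i\in\gundecided}\acard{A_i}+\sum_{A_i\in\gjoinable}\acard{A_i}\le \acard{A}$ because the $A_i$'s partition the columns of $A$, so these summed contributions telescope into a single $O(\acard{A})$ term. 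Summing over all \RAP{}s, $\rapbpart = O\!\left(\sum_A (\acard{A}+\Pi_A)\right) = O\!\left(\sum_A N_A\right) = O(N)$.

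The main obstacle I anticipate is the argument that the five event types really are disjoint and non-repeating across \RAP{}s, especially (e): I have to verify that the two columns involved in an inner collision end up, after the slicing of the fourth step, in sub-agglomerates that are destined never to coexist in any future agglomerate, so that the same unordered pair cannot be charged twice. A secondary subtlety is the per-column accounting in the fourth step, where I must rule out double-charging a column both to $\Pi_A$ via type (a) and to $\acard{A}$ via (b)--(e); this follows because $\Pi_A$ counts \emph{subproblems} created (not columns), while (b)--(e) charge \emph{columns}, so the two tallies live in disjoint universes and may be summed without overlap.
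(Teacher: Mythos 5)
Your proposal is correct and follows essentially the same route as the paper's own proof: the same five event types with the same once-only argument giving $O(N)$ total events, the same per-\RAP{} inequality $N_A \ge \acard{A}+\Pi_A$ with the identical column-by-column routing of the fourth step to event types (b)--(e), and the same matching $O(\acard{A}+\Pi_A)$ cost bound via Lemmas~\ref{lem:slice:comp}--\ref{lem:slicejoin:comp}. No gaps to report.
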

\begin{proof}
Let us first evaluate the total cost of the steps of a \RAP{} on an 
agglomerate $A$ minus the cost of the call to \mselmset{} in the 
third step.

About the first step, to verify which kind of an agglomerate we are dealing 
with (generic or core cyclic) $\Oh{1}$ scans of the contact node list 
of $A$ (while using the \arrsub{} structure) are enough. 
Thus, the first step takes $\Oh{\acard{A}}$ time.

About the second step, the slightly more complex case is when $A$ is 
core cyclic. In that case we first find find the columns 
$\ppair{c_{j},r_{j}}$ such that $\suf{c_{j}-1}\in\subpro{i}\not\in A$, 
then their keys and from those we produce the keys for all the other 
columns. As we noticed, we do not actually access each suffix 
of $A$ to assign it its key. We do that only with $A$'s columns. 
Thus, the second step takes $\Oh{\acard{A}}$ time.

About the third step. Thanks to \rankstrut{}, retrieving the 
$\card{\subprorank{w}}$ ranks of the 
leading subproblem $\subpro{w}$ of $A$ takes 
$\Oh{\card{\subprorank{w}}}$ time. As we said, there is a 1-to-1 
correspondence between the leading suffixes and the root suffixes. 
Hence, retrieving the suffixes of 
$\subpro{w}$'s and their keys takes $\Oh{\card{\subpro{w}}}$.
We will deal 
with the cost of the call to \mselmset{} later.
Scanning the pivotal multisets and tagging the 
the columns after the call to \mselmset{} clearly cost $\Oh{\acard{A}}$.  
If we exclude \mselmset{}, the cost of the 
third step is $\Oh{\acard{A}}$ (since $\card{\subpro{w}}=\acard{A}$).
  
The cost of the fourth step is dominated by the cost of the call 
\sliceop{}$(A)$ which, by Lemma~\ref{lem:slice:comp}, is 
$\Oh{\acard{A}+\card{\bigcup_{i=1}^{q}A_{i}-A}}$, where 
$A_{1},\ldots,A_{q}$ are the agglomerates into which $A$ is sliced.
 
Let us consider the fifth step and, in particular, its three substeps 
described in Section~\ref{sub:undecided-details:full}. 
In the first substep, we call \leadvisit{} on the root of each 
agglomerate in \gundecided{}. \leadvisit{} accesses the nodes of 
the skip tree of the agglomerate it received in input and for each 
node does $\Oh{1}$ time worth of work. 
In the second substep we compute the prefix sum array $\visitsum_{i}$ of 
$\visitlist_{i}$ for each $A_{i}\in\gundecided$ and then we  
tag each column of $A_{i}$. So both substeps require 
$\Oh{\sum_{\set{A_{i}\in\gundecided}} \acard{A_{i}}}$ time.
In the third substep we call \sliceop{}$(A_{i})$ for each 
$A_{i}\in\gundecided$. All those calls clearly dominates the total 
cost of the substep. For each $A_{i}\in\gundecided$, let 
$A_{i_{1}},\ldots,A_{i_{t(i)}},A_{i_{t(i)+1}}$ be the $t(i)+1$ agglomerates into which $A_{i}$ 
is sliced (for 
simplicity's sake let us assume that there is a $A_{i_{t(i)+1}}$, i.e.   
an exhausted agglomerate, for each $A_{i}$). 
By  Lemma~\ref{lem:slice:comp}, the total cost of the third substep is 
$\Oh{\sum_{\set{A_{i}\in\gundecided}}\left(\acard{A_{i}}+\card{\bigcup_{j=1}^{t(i)+1}A_{i_{j}}-A_{i}}\right)}$.

Finally, in the sixth step we operate on each $A_{i}\in\gjoinable$. 
We have two cases in which we either do a \joinop{} or a \slicejoinop{} 
between $A_{i}$ and the agglomerate $A_{i*}$ 
with which $A_{i}$ is joinable. In the second case, let $A_{i*1}$ be the agglomerate 
``sliceable'' from $A_{i*}$ such that $\ppair{c_{j},r_{j}}$ 
is a column of $A_{i*1}$ iff $\suf{c_{j}-1}\in\subpro{z}\in A_{i}$.
By Lemmas~\ref{lem:join:comp} 
and~\ref{lem:slicejoin:comp} the total cost of the sixth
step is $\Oh{\sum_{\set{A_{i}\in\gjoinable}} \acard{A_{i}}}$ in the 
first case and
$\Oh{\sum_{\set{A_{i}\in\gjoinable}} \acard{A_{i}}+\card{A_{i*1}}}$ in 
the second. Let us denote with $\gsljoinable$ all the $A_{i}$'s in 
$\gjoinable$ that need a \slicejoinop{} in the eighth step.

Excluding the cost of the call to \mselmset{} in the 
third step and, since 
$\sum_{\set{A_{i}\in\gundecided}}\acard{A_{i}}+\sum_{\set{A_{i}\in\gjoinable}} \acard{A_{i}}\le\acard{A}$ 
(in the fourth step some of the $A_{i}$'s sliced from $A$ may have 
been moved to \gunsolved{}), the total time for the \RAP{}  is 
\begin{gather*}
\Oh{\acard{A}+\termone+\termtwo+\termthree}\\
\\
\termone=\card{\bigcup_{i=1}^{q}A_{i}-A}\;\;\;\;
\termtwo=\sum_{\set{A_{i}\in\gundecided}}\card{\bigcup_{j=1}^{t(i)+1}A_{i_{j}}-A_{i}}\;\;\;\;
\termthree=\sum_{\set{A_{i}\in\gsljoinable}}\card{A_{i*1}}
\end{gather*}

To complete the analysis of $\rapbpart$ let us introduce the \emph{events}.
During 
the algorithm five kinds of crucial events happen. $(a)$ \emph{Column 
fusions}: when during a \joinop{}$(A',A)$ (or at the end of a 
\slicejoinop{}) a column of $\ppair{c_{i},r_{i}}$ of $A'$ is fused with 
the column $\ppair{c_{j},r_{j}}$ of $A$ such that $r_{i}+1=c_{j}$, to 
form the column $\ppair{c_{i},r_{j}}$. 
$(b)$ \emph{Subproblem creations}: 
when during a \sliceop{} (or at the beginning of a 
\slicejoinop{})
some subproblem $\subpro{j}$ is partitioned into smaller subproblems 
$\subpro{i_{1}},\ldots,\subpro{i_{p}}$, we have \emph{$p$ subproblem 
creation events}.
$(c)$ \emph{Suffix exhaustions}: when during the third step of a \RAP{} for 
$A$, after the \mselmset{} call, some suffix 
$\suf{e}\in\subpro{w}\in A$, where $\subpro{w}$ is the leading 
subproblem of $A$, becomes exhausted (i.e. 
$\suf{e}$ belongs to a pivotal multiset $\mathcal{M}_{i}$, thus we know 
for sure that it is not one of the wanted suffixes).
$(d)$ \emph{Suffix discoveries}: when during the third step of a 
\RAP{} for $A$, some suffix $\suf{w}$ in $\subpro{w}$, the leading 
subproblem of $A$, is recognized as one of the wanted suffixes (i.e. 
$\suf{w}$ belongs to a pivotal multiset $\mathcal{F}_{i}$ such that 
$\card{\mathcal{F}_{i}}=1$).
$(e)$ \emph{Inner collisions}: when during the second step of a \RAP{} 
for $A$, we encounter a column $\ppair{c_{i},r_{i}}$ of $A$ such that 
$\ppair{c_{j},r_{j}}$, where $c_{j}=r_{i}+1$, is also in $A$ 
while $\suf{c_{i}}$ and $\suf{c_{j}}$ do not belong to the same 
subproblem.

A column is never divided into 
smaller ones and a subproblem is never merged with others 
to form a larger subproblem. Also, when a suffix becomes exhausted or a wanted suffix is 
discovered, they can never be in an unsolved 
subproblem again. Finally, the same inner collision cannot be repeated 
after the \RAP{} in which it has been detected has ended, since the 
two columns colliding will not be part of the same agglomerate any 
longer. By all the above, it is easy to see that one particular 
event cannot be repeated twice. Since there are $N$ suffixes and we start with $N$ 
columns and one subproblem, we can conclude that the total number of events 
during the computation is $\Oh{N}$, or $\le 6N$ to be precise. 

Let us establish how many events take place \emph{during} a generic 
\RAP{} for an agglomerate $A$. Let us start with \emph{subproblem 
creations}. 
In the fourth step, for each agglomerate $A_{i}$ sliced from 
$A$, we have $\card{A_{i}-A}$ subproblem creations, one for 
each new (i.e. not coming from $A$) subproblem in $A_{i}$. 
Analogously, in the third substep of the fifth step
(Section~\ref{sub:undecided-details:full}), 
for each $A_{i}\in\gundecided$ and 
for each $A_{i_{j}}$ sliced from $A_{i}$, we have 
$\card{A_{i_{j}}-A_{i}}$ subproblem creations. Finally, in the eighth 
step, for each $A_{i}\in\gsljoinable$, we have $\card{A_{i*1}}$ 
subproblem creations. Summing up,
$\termone+\termtwo+\termthree$ subproblem creations take place during the \RAP{} 
for an agglomerate $A$.
The total number of \emph{suffix discoveries} and \emph{suffix 
exhaustions} occurring during the \RAP{} on $A$ is equal to the total 
number of columns of the agglomerates that in the fourth step are 
moved either to $\gexhausted$ or to $\gundecided$. Each column of each 
agglomerate moved to $\gundecided$ corresponds to a \emph{suffix 
exhaustion}. Each column of each agglomerate moved to $\gexhausted$  
corresponds to either a \emph{suffix discovery} or a \emph{suffix 
exhaustion}.
Analogously, the numbers of \emph{column fusions} and \emph{inner collisions} 
during the \RAP{} of $A$  are equal to the total numbers of 
columns of the $A_{i}$'s that in the fourth step are moved to 
$\gjoinable$ and $\gunsolved$, respectively. 
Since $\sum_{i=1}^{q}\acard{A_{i}}=\acard{A}$, we can conclude that 
the \emph{total number of events} occurring during the \RAP{} for $A$ 
is $\acard{A}+\termone+\termtwo+\termthree$. As we have seen, if we 
exclude the calls to \mselmset{}, the cost of the \RAP{} for $A$ is 
$\Oh{\acard{A}+\termone+\termtwo+\termthree}$. Therefore the total 
time required by all the \RAP{}s minus the time for the \mselmset{} 
calls is of the order of the total number of events, which is $\Oh{N}$.
\end{proof}

\subsection{Proof of Theorem~\ref{the:bwt2}}

Theorem~\ref{the:bwt2} follows directly from 
Lemmas~\ref{lem:multisel:contiguous:full} and~\ref{lem:lcp:full}.

\begin{lemma}\label{lem:multisel:contiguous:full}
Given a text $T$ of $N$ symbols drawn from an unbounded alphabet and 
$K$ consecutive ranks $r_1,\ldots,r_K$ (i.e. 
$r_2=r_1+1,r_3=r_2+1,\ldots, r_K=r_{K-1}+1$),
the $K$ text suffixes of ranks $r_1,\ldots,r_K$ can be found 
using  
$\Oh{K\log K+N}$ time and comparisons.

\end{lemma}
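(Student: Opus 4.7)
The plan is to follow the two-stage strategy previewed in the remark after Theorem~\ref{the:bwt2:full} and in the comment preceding Lemma~\ref{lem:msel:Ksuffixes:comp}: first collapse everything outside $[r_1,r_K]$ in linear time, then do the real work on an island of just $K-2$ suffixes.

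\textbf{Stage 1 (intermediate).} First invoke the general multi-selection algorithm of Theorem~\ref{theo:main:full} with only the two extreme ranks $\mathcal{R}_0=\{r_1,r_K\}$. The three gaps are $\Delta_0=r_1$, $\Delta_1=K-1$, $\Delta_2=N+1-r_K$, summing to $N+1$. Rewriting formula~\eqref{eq:multi-bound} in its entropy form $\Theta(N(H_0+1))$ with $H_0$ the base-$2$ entropy of a three-outcome distribution, we have $H_0\le\log_2 3$, so Stage~1 costs $O(N)$. By the invariants of Section~\ref{subsec:subproblems}, the output partition is $S=S'\cup\{\suf{i_1}\}\cup S^\ast\cup\{\suf{i_K}\}\cup S''$, where $\suf{i_1},\suf{i_K}$ are solved and $S^\ast$ is an \emph{exhausted} subproblem/neighborhood of size exactly $K-2$ consisting of the suffixes whose ranks are $r_1+1,\ldots,r_K-1$, placed in the correct position within \sublist{} with a fixed integer label.

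\textbf{Stage 2 (focused sort).} Re-invoke the algorithm on the same data structures with the new rank set $\mathcal{R}'=\{r_1+1,\ldots,r_K-1\}$. Every rank of $\mathcal{R}'$ lies inside $S^\ast$, and since an exhausted subproblem only refines further (Section~\ref{subsec:subproblems}), all subsequent \RAP{}s are confined to agglomerates spawned from $S^\ast$ and each \mselmset{} call creates virtual symbols only for suffixes of~$S^\ast$. Applying Lemma~\ref{lem:msel:suffixes:comp:full} with $\card{S^\ast}=K-2$ playing the role of $N$ and with all internal $\Delta_j=1$, the virtual-symbol term collapses to $(K-2)\log(K-2)+O(K-2)=O(K\log K)$. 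The combined $\rapbpart$ of the two stages (cost of all \RAP{}s minus the \mselmset{} calls) is bounded by the total number of events in the entire run, which by Lemma~\ref{lem:rapN:full} is $O(N)$ independently of the chosen ranks. Summing Stage~1, Stage~2 virtual symbols, and the global $\rapbpart$ gives $O(N)+O(K\log K)+O(N)=O(K\log K+N)$, as claimed.

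\textbf{Main obstacle.} The delicate point is justifying that Stage~2's virtual-symbol contribution is really driven by $\card{S^\ast}=K-2$ rather than by $N$. A direct application of Theorem~\ref{theo:main:full} to the original $K$ consecutive ranks gives, via $\Delta_j=1$ for internal gaps, a virtual-symbol term that naively behaves like $K\log N$ and only matches $K\log K+N$ through a separate concavity estimate $K\log(N/K)\le N/\ln 2$. Routing through the isolated island $S^\ast$ yields the clean $K\log K$ bound directly, because after Stage~1 the suffixes of $S^\ast$ behave, for the virtual-string analysis of Section~\ref{subsec:msel:strings:full}, as an independent prefix-free set of $K-2$ strings: the exhausted status of $S^\ast$ prevents it from ever merging with anything else, and its label in \sublist{} fixes its lexicographic slot so that rescanning of the surrounding text is never triggered.
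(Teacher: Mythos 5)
Your overall two-stage strategy (find the suffixes of ranks $r_1$ and $r_K$ in $\Oh{N}$ time, then work only on the $K-2$ suffixes trapped between them) is exactly the paper's plan, and your Stage-1 cost analysis and final accounting are fine. However, there is a genuine gap at the junction between the two stages. After Stage~1, the middle suffixes do \emph{not} form a single exhausted subproblem $S^\ast$: they are scattered across many exhausted subproblems $\subpro{i'}$ with $i_l<i'<i_r$, living inside exhausted agglomerates whose trees may also contain suffixes from outside the interval, and for which only contact and root suffixes are explicitly stored. You cannot simply ``re-invoke the algorithm on the same data structures with the new rank set'': with the new ranks those subproblems become unsolved again, but the agglomerate invariants (in particular Property~\ref{prop:leading-subproblem} and the existence of well-defined leading subproblems) no longer hold for them, and the \RAP{} machinery has nothing in $\gunsolved$ to start from. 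Your appeal to ``an exhausted subproblem only refines further'' is also backwards: exhausted subproblems are inactive and are never refined by the algorithm, so some explicit re-activation step is unavoidable.

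The paper fills this gap with an explicit intermediate stage that you are missing: it first recovers the full suffix membership of every subproblem by traversing each agglomerate's tree (the \suffixvisit{} procedure, $\Oh{N}$ time), then merges each neighborhood into a single subproblem and in particular fuses all the middle subproblems into one subproblem $\subpro{\#}$ of size $K-2$ with a correct integer label, then runs \mselmset{} on $\subpro{\#}$ using first symbols as keys with ranks $r_2,\ldots,r_{K-1}$ (cost $\Oh{K\log K}$), and finally packages the resulting pieces as fresh singleton agglomerates placed into $\gunsolved$ and $\gexhausted$. Only then can the second execution of the suffix multi-selection proceed, with its initialization stage skipped (otherwise that stage alone would reintroduce a superlinear term). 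Once this intermediate stage is in place, your Stage-2 analysis — at most $K-2$ virtual strings, $\Oh{N}$ virtual symbols and events, hence $\Oh{K\log K+N}$ total — goes through as you describe.
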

\begin{proof}
To retrieve  the wanted suffixes in $\Oh{K\log K +N}$ time we first 
apply the suffix multi-selection algorithm on $T$ with only $r_1$ and 
$r_K$. Then we go through an \emph{intermediate stage} that takes the 
subproblems left by the suffix multi-selection and prepares them for a 
second suffix multi-selection. After that we apply 
again the suffix multi-selection on $T$ but this time $(a)$ we use all the 
ranks $r_1,\ldots,r_K$ and $(b)$ we skip the \emph{initialization 
stage} (because of the work done in the intermediate stage). 

Let us give the details of the process.

We execute the suffix multi-selection algorithm on $T$ with 
$\mathcal{R}=\set{r_1,r_K}$. After the 
computation ends, we have 
the two wanted suffixes, let them be 
$\suf{l}$ and $\suf{r}$ ($\suf{l}<\suf{r}$), the exhausted agglomerates 
and a subproblem list 
$\sublist{}=\seq{\subpro{0},\ldots,\subpro{i_l},\ldots,\subpro{i_r},\ldots,\subpro{p}}$, ($i_l<i_r<p$), 
with the following properties. 
$(a)$ $\subpro{i_l}=\set{\suf{l}}$ and
$\subpro{i_r}=\set{\suf{r}}$ (they are the only solved subproblems). $(b)$ 
$\subpro{i}<\subpro{i_l}$, 
$\subpro{i_l}<\subpro{i'}<\subpro{i_r}$ and 
$\subpro{i_r}<\subpro{i''}$, for each $i <i_l$, $i_l<i'<i_r$ and 
$i''>i_r$, respectively. $(c)$ All together the subproblems 
$\subpro{i'}$ with $i_l<i'<i_r$ contains \emph{exactly} $K-2$ suffixes and 
they are the ones with ranks $r_2,\ldots,r_{K-1}$ (but for each 
of those suffixes we do not know which $r_2,\ldots,r_{K-1}$ is its rank). 

The \emph{intermediate stage} has the following steps.

First, for each subproblem in $\sublist{}$ we retrieve its suffixes. 
Recall that for each agglomerate $A_i$ only contact and root suffixes are explicitly 
stored during the computation. To retrieve all the suffixes of each 
$\subpro{j}\in A_i$ we simply need to visit the tree of $A_i$ from its 
root with \suffixvisit{}$(\subpro{r})$ defined as follows. $(i)$ If 
$\subpro{r}$ is a leaf then
$\subpro{r}=\set{\suf{c}\;\vert\; \ppair{c,r} \mbox{ is in $\subpro{r}$'s column 
list}}$. After we retrieved the suffixes of $\subpro{r}$ 
we return the set $\set{\suf{i+1}\;\vert\; \suf{i}\in\subpro{r}}$. 
$(ii)$ Otherwise, if $\subpro{r}$ is a contact node (but not a leaf) we 
retrieve $Con=\set{\suf{c}\;\vert\; \ppair{c,r} \mbox{ is in $\subpro{r}$'s column 
list}}$. $(iii)$ In any case, we call \suffixvisit{}$(\subpro{r_i})$ for each children 
$\subpro{r_i}$ of $\subpro{r}$, let $\mathcal{S}$ be the set of 
all the suffixes we receive from all these recursive calls. $(iv)$ 
Then, we set $\subpro{r}=Con\cup\mathcal{S}$  and we return the set
$\set{\suf{i+1}\;\vert\; \suf{i}\in\subpro{r}}$.

Second, with a scan of $\sublist{}$, we do the following. We merge 
$\subpro{0}$ with all the subproblems in its neighborhood 
$\neigh{0}$, let them be $\subpro{1},\subpro{2},\ldots,\subpro{n_0}$, 
into one
(the subproblems in the same neighborhood are adjacent in 
$\sublist{}$). We do the same for
$\subpro{n_0+1}$ and its neighborhood 
$\subpro{n_0+2},\ldots,\subpro{n_1}$, then for $\subpro{n_1+1}$ and so 
forth until all the subproblems $\subpro{i}<\subpro{i_l}$ have been 
treated.  After that we do the same for all the subproblems 
$\subpro{''}>\subpro{i_r}$. Finally we merge all the subproblems 
$\subpro{i'}$ with $i_l<i'<i_r$ into one, let it be $\subpro{\#}$.
After the first step, the new 
$\sublist{}=\seq{\subpro{0},\ldots,\subpro{i_l},\subpro{\#},\subpro{i_r},\ldots,\subpro{p'}}$
maintains the same properties of the original one. However, the 
meaning of \emph{the integer labels} of the subproblems may have 
changed. 
Now for each subproblem $\subpro{i}\in\sublist$, $\slab{i}$ is the 
number of suffixes of $T$ that are lexicographically smaller than each 
$\suf{j}\in\subpro{i}$ (because now for each $\subpro{i}$ we have that 
$\neigh{i}=\subpro{i}$, whereas originally that was guaranteed only 
for $\subpro{i_l}$ and $\subpro{i_r}$).

Third, for each $\suf{i}\in\subpro{\#}$, let $\suf{i}$'s \emph{key} be its 
first symbol $T[i]$. We call 
\mselmset{}$\left(\subpro{\#},r_2,\ldots,r_{K-1}\right)$ and we obtain 
$\subpro{\#}$'s pivotal subsets 
$\mathcal{M}_0,\mathcal{F}_1,\mathcal{M}_1,\ldots,\mathcal{F}_{t},\mathcal{M}_{t}$. 
Since we know that $\subpro{\#}$ contains all and only the $K-2$ 
suffixes with ranks $r_2,\ldots,r_{K-1}$, all the $\mathcal{M}_i$ 
\emph{are void}. Thus, from each $\mathcal{F}_i$ we create a subproblem 
$\subpro{\#i}$ with the following properties: $(a)$ 
$\card{\subpro{\#i}}=\card{\subprorank{\#i}}$ and $(b)$ $\slab{\#i}$ is the 
number of suffixes of $T$ smaller than each one in $\subpro{\#i}$.
After this step we have 
$\sublist{}=\seq{\subpro{0},\ldots,\subpro{i_l},\subpro{\#1},\ldots,\subpro{\#t},\subpro{i_r},\ldots,\subpro{p'}}$.

Fourth, from each $\subpro{i}$ with $i<i_l$ or $i>i_r$ (they are all 
exhausted) we make an 
agglomerate $A_{i}$ and we move it to $\gexhausted$. We do the same 
for $\subpro{i_l}$ and $\subpro{i_r}$ (although, as subproblems, they 
are solved ones). Finally, from each $\subpro{\#j}$ we make an agglomerate 
$A_{\#i}$ and we move it to $\gunsolved$.

After the intermediate stage, we apply again the suffix 
multi-selection algorithm on $T$ with the full rank set 
$\mathcal{R}=\set{r_1,\ldots,r_K}$ in the following way. We skip the 
initialization stage completely and we start immediately with the refine and 
aggregate stage (using $\sublist$, $\gunsolved$, $\gexhausted$, the agglomerates
and the subproblems we already have after the intermediate stage). The 
rest of the computation proceeds normally.

Let us now evaluate the cost of the whole computation.
The first call of the suffix 
multi-selection algorithm is done with just two ranks, $r_1$ and $r_K$. 
Thus, by Theorem~\ref{theo:main:full}, its cost is $\Oh{N}$.
Let us consider the intermediate stage. 
The first step requires $\Oh{N}$ time: for each agglomerate $A_i$, \suffixvisit{}
costs $\Oh{suf_i}$ where $suf_i=\sum_{\set{\subpro{j}\in A_i}} 
\card{\subpro{j}}$. 
For the second step a scan of 
$\sublist$ is enough and the cost is $\Oh{N}$. The cost of the third 
step si dominated by the cost of the call to \mselmset{}. Since it is
done on a multiset of $K-2$ elements (and with a set 
of $K-2$ ranks), its cost is clearly $\Oh{K\log K}$.
Finally, an $\Oh{N}$ time scan of $\sublist$ is enough for the fourth 
step. 

Let us now consider the cost of the second  execution of the 
suffix multi-selection. The complexity proof is the same of the one 
for Theorem~\ref{theo:main:full} except for two aspects. First, 
there is no initialization stage, thus the cost of finding 
the pivotal multisets of $\set{T[1],\ldots,T[N]}$ disappears. Second, 
the total contribution of the \mselmset{} calls made during 
the whole refine and aggregate stage changes as follows. The total number of 
suffixes of all the leading subproblems is at most $K-2$. Thus the total 
number of virtual strings is at most $K-2$. On the other hand, the events that 
take place during all the \RAP{}s are still $\Oh{N}$. Thus, the total 
number of virtual symbols in the virtual strings is $\Oh{N}$.
All the other additional costs of the \RAP{}s remain the same.  
Therefore, by Lemma~\ref{lem:msel:strings:comp:full}, we have that the 
total cost of the second call of the suffix multi-selection is is 
$\Oh{K\log K +N}$.
\end{proof}

For any two subproblems $\subpro{i}$ and $\subpro{i'}$ such that 
$\neigh{i}\not=\neigh{i'}$, let 
$\lcp{\subpro{i},\subpro{i'}}$ be the length of the longest common prefix of any two 
suffixes $\suf{j_i}\in\subpro{i}$ and $\suf{j_{i'}}\in\subpro{i'}$. 
Let us extend the notation to neighborhoods: for any two 
$\neigh{i}\not=\neigh{i'}$, $\lcp{\neigh{i},\neigh{i'}}$ is equal to
$\lcp{\subpro{i},\subpro{i'}}$.
We have the following:
\begin{lemma}\label{lem:lcp:full}
The suffix multi-selection algorithm can be modified (without changing 
its asymptotical time complexity) so that 
$\lcp{\subpro{i},\subpro{i'}}$ can be computed in $\Oh{1}$ time, 
for any two $\subpro{i},\subpro{i'}$ such that 
$\neigh{i}\not=\neigh{i'}$. 
\end{lemma}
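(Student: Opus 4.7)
The plan is to augment the algorithm so that, at every snapshot of the computation, we maintain the invariant that for every pair of adjacent subproblems $\subpro{i},\subpro{i+1}$ in $\sublist$ with $\neigh{i}\neq\neigh{i+1}$ we have stored the value $\lcp{\subpro{i},\subpro{i+1}}$, and we keep an auxiliary Cartesian Tree $\mathcal{CT}$ built on the sequence of these consecutive \textit{lcp}'s. Using a standard dynamic LCA structure on $\mathcal{CT}$ that supports leaf-insertions at prescribed positions in $\Oh{1}$ amortized time, a query $\lcp{\subpro{i},\subpro{i'}}$ for arbitrary subproblems in distinct neighborhoods reduces to a range-minimum query on the $\mathit{lcp}$ sequence between their positions in $\sublist$, which is an LCA query in $\mathcal{CT}$ answerable in $\Oh{1}$ time.

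The remaining work is to preserve the invariant under every update. The only operations that insert new adjacent subproblems into $\sublist$ are the initial \mselmset{} call and the \sliceop{} invocations inside each \RAP{}; in both cases, a subproblem $\subpro{i}$ is replaced in place by a contiguous block $\subpro{x_1}<\cdots<\subpro{x_{d'}}$ obtained from the pivotal multisets of the keys. The two boundary \textit{lcp}'s $\lcp{\subpro{i^-},\subpro{x_1}}$ and $\lcp{\subpro{x_{d'}},\subpro{i^+}}$ coincide with the pre-refinement values $\lcp{\subpro{i^-},\subpro{i}}$ and $\lcp{\subpro{i},\subpro{i^+}}$, since refinement can only lengthen the common prefix within $\subpro{i}$ and not the one that separates it from its neighbors in $\sublist$. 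For the $d'-1$ internal \textit{lcp}'s $\lcp{\subpro{x_t},\subpro{x_{t+1}}}$, I would exploit the fact that the keys assigned in the second step of \RAP{} are either single labels (generic case) or short periodic sequences of labels (core-cyclic case), and that each label refers to a suffix already present in $\sublist$, whose mutual \textit{lcp}'s are available via $\mathcal{CT}$ in $\Oh{1}$. Thus $\lcp{\subpro{x_t},\subpro{x_{t+1}}}$ equals the common depth in $\alpha_i$ plus a constant number of label-based queries, computable in $\Oh{1}$ per pair, hence $\Oh{d'}$ overall.

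The analogous argument handles \joinop{} and \slicejoinop{}: a join removes an adjacency (only deletions from the \textit{lcp} sequence, which $\mathcal{CT}$ supports trivially by leaf removal), while \slicejoinop{} inherits the \sliceop{} analysis on the sliced portion. Since each of the $\Oh{d'}$ inserted \textit{lcp}'s and $\Oh{1}$ boundary updates is charged to the corresponding subproblem creation event, the total extra cost telescopes to $\Oh{N}$ by Lemma~\ref{lem:rapN:full}, strictly dominated by the bound already established in Lemma~\ref{lem:msel:suffixes:comp:full}; hence the asymptotic running time is unchanged.

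The delicate step is the constant-time computation of $\lcp{\subpro{x_t},\subpro{x_{t+1}}}$ in the core-cyclic case, where the key encodes a period: I expect to handle it by showing that two distinct keys of a core-cyclic agglomerate $A$ either agree on the full repeated portion, in which case their mismatch lies in the terminating label and the \textit{lcp} is a fixed multiple of $|\alpha_{x_*}|$ plus an $\Oh{1}$ label comparison, or differ earlier, in which case the \textit{lcp} is directly a label comparison at the first mismatching block. In both subcases the answer reduces to one $\mathcal{CT}$ query plus $\Oh{1}$ arithmetic, which is the only genuinely non-routine point of the argument.
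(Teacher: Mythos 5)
Your proposal is correct and follows essentially the same route as the paper: maintain the consecutive \textit{lcp}'s between adjacent neighborhoods in a list alongside $\sublist$, keep a Cartesian tree with a dynamic LCA structure over them, observe that refinement preserves the boundary \textit{lcp}'s and only inserts larger values (hence leaf insertions suffice), and compute the new internal \textit{lcp}'s from the keys via $\Oh{1}$ LCA queries each, charging the total to the $\Oh{N}$ subproblem-creation events. Your explicit treatment of the core-cyclic keys and of \joinop{}/\slicejoinop{} fills in details the paper leaves implicit, but the underlying argument is the same.
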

\begin{proof}
As we have seen, for any subproblem $\subpro{i}$, the subproblems in its 
neighborhood $\neigh{i}$ are in contiguous positions in $\sublist$.
We maintain another list $\neighsublist$ whose elements represent the 
neighborhoods: the $i$-th element in $\neighsublist$ represents the $i$-th
contiguous group of subproblems in $\sublist$ forming a neighborhood.
Each $\subpro{i}$ in $\sublist$ has a link to the element in 
$\neighsublist$ corresponding to its neighborhood.
We also maintain a third list $\lcpsublist$ such that 
if $\subpro{i}$ and $\subpro{i'}$ are in two adjacent neighborhoods
$\neighsublist[j]$ and $\neighsublist[j+1]$ then 
$\lcpsublist[j]=\lcp{\subpro{i},\subpro{i'}}$. 
A cartesian tree is maintained on $\lcpsublist$ and a structure for
dynamic LCA queries  (e.g. \cite{Cole:2005}) is maintained on the 
cartesian tree.

$\sublist$, $\neighsublist$ and  
$\lcpsublist$ are updated in the finishing phase of the \sliceop{} 
operation (see Section~\ref{subsubsec:slice:finishing:full}). As we have 
seen, a subproblem $\subpro{i}$ in $\sublist$ is replaced by a 
partitioning $\subpro{i_1},\ldots,\subpro{i_r}$ of it. If $\subpro{i}$ 
was active then the partitioning is necessarily a refining one, otherwise we do 
not care. Thus, if $\subpro{i}$ was inactive, the set of the suffixes 
whose subproblem belong to $\neigh{i}$ does not change and neither 
$\neighsublist$ nor $\lcpsublist$ needs to be updated.

On the other hand, if $\subpro{i}$ was active then $\neigh{i}=\subpro{i}$ 
and  $\neigh{i_j}=\subpro{i_j}$, for each $1\le j\le r$. Thus, the 
element in $\neighsublist$ for $\neigh{i}$ is replaced by 
the ones for $\neigh{i_1},\ldots,\neigh{i_r}$. Since the partitioning 
is a refining one, we have that 
$\lcp{\neigh{p},\neigh{i_1}}=\lcp{\neigh{p},\neigh{i}}$ and 
$\lcp{\neigh{i_r},\neigh{s}}=\lcp{\neigh{i},\neigh{s}}$, where 
$\neigh{p}$ and $\neigh{s}$ are (were) the predecessor and successor 
of $\subpro{i}$ in $\neighsublist$. Thus the corresponding entries in 
$\lcpsublist$ do not need to be updated (and neither does the cartesian 
tree nor the structure for LCA queries). The values 
$\lcp{\neigh{i_1},\neigh{i_2}},\ldots,\lcp{\neigh{i_{r-1}},\neigh{i_r}}$ 
can be easily found by doing $r-1$ LCA queries. Since $\subpro{i}$ has 
been refined, we know that none of the values 
$\lcp{\neigh{i_1},\neigh{i_2}},\ldots,\lcp{\neigh{i_{r-1}},\neigh{i_r}}$ 
can be smaller than either $\lcp{\neigh{p},\neigh{i}}$ or 
$\lcp{\neigh{i},\neigh{s}}$. Thus each pair of insertions of $\neigh{i_j}$ in 
$\neighsublist$ and of $\lcp{\neigh{i_j},\neigh{i_{j+1}}}$ in 
$\lcpsublist$ corresponds to the insertion of a leaf on the 
cartesian tree built on $\lcpsublist$. This kind of updates of the 
structure for LCA queries can be done in $\Oh{1}$ time (hence adding 
an extra $\Oh{N}$ term to the complexity bound for the suffix multi-selection 
algorithm).
\end{proof}
  
\section{Conclusions}
\label{sec:conclusions:full}

We studied partial compression and text indexing problems, and as a
technical piece, the suffix multi-selection problem. The main theme is that 
when comparing an arbitrary set of suffixes which might overlap in sophisticated
ways, we need to devise methods to avoid rescanning characters to get optimal
results. We achieve this with a variety of structural observations and carefully 
arranging computations, and achieve bounds optimal with respect to those known
for atomic elements. Other partial suffix problems  will be of great interest
in Stringology and its applications. 
\junk{
the asymptotically optimal complexity of the
multi-selection problem on suffixes, describing a way to explictly
represent and handle the inherent complex patterns that arise among
the suffixes when performing the multi-selection task on them. It is
an open problem to count the exact number of comparisons needed to
solve the multi-selection problem on suffixes. It would be interesting
to see if it is possible to simplify our algorithm and get exact
bounds as in~\cite{Kanela} for our problem too.
}
  

\begin{figure}[!h]
  \begin{center}
    \begin{psfrags}
          \psfrag{\$}{\figtextfont\texttt{\tend}}
      \psfrag{a}{\figtextfont\texttt{a}}
      \psfrag{b}{\figtextfont\texttt{b}}
      \psfrag{c}{\figtextfont\texttt{c}}
      \psfrag{d}{\figtextfont\texttt{d}}
      \psfrag{e}{\figtextfont\texttt{e}}
      \psfrag{f}{\figtextfont\texttt{f}} 
      \psfrag{g}{\figtextfont\texttt{g}}      
      \psfrag{h}{\figtextfont\texttt{h}}      
      \psfrag{i}{\figtextfont\texttt{i}}      
      \psfrag{j}{\figtextfont\texttt{j}}      
      \psfrag{k}{\figtextfont\texttt{k}}      
      \psfrag{l}{\figtextfont\texttt{l}}      
      \psfrag{m}{\figtextfont\texttt{m}}      
      \psfrag{n}{\figtextfont\texttt{n}}      
      \psfrag{o}{\figtextfont\texttt{o}}      
      \psfrag{p}{\figtextfont\texttt{p}}      
      \psfrag{q}{\figtextfont\texttt{q}}  
      \psfrag{r}{\figtextfont\texttt{r}}
      \psfrag{s}{\figtextfont\texttt{s}}
      \psfrag{t}{\figtextfont\texttt{t}}
      \psfrag{u}{\figtextfont\texttt{u}}
      \psfrag{v}{\figtextfont\texttt{v}}
      \psfrag{w}{\figtextfont\texttt{w}}
      \psfrag{x}{\figtextfont\texttt{x}}
      \psfrag{y}{\figtextfont\texttt{y}}
      \psfrag{z}{\figtextfont\texttt{z}} 
      \psfrag{1}{\figtextfont{}$\suf{1}$}
      \psfrag{5}{\figtextfont{}$\suf{5}$}
      \psfrag{10}{\figtextfont{}$\suf{10}$}
      \psfrag{15}{\figtextfont{}$\suf{15}$}
      \psfrag{20}{\figtextfont{}$\suf{20}$}
      \psfrag{25}{\figtextfont{}$\suf{25}$}
      \psfrag{30}{\figtextfont{}$\suf{30}$}
      \psfrag{35}{\figtextfont{}$\suf{35}$}
      \psfrag{40}{\figtextfont{}$\suf{40}$}
      \psfrag{45}{\figtextfont{}$\suf{45}$}
      \psfrag{50}{\figtextfont{}$\suf{50}$}
      \psfrag{55}{\figtextfont{}$\suf{55}$}
      \psfrag{60}{\figtextfont{}$\suf{60}$}
      \psfrag{61}{\figtextfont{}$\suf{61}$}
      \psfrag{65}{\figtextfont{}$\suf{65}$}
      \psfrag{70}{\figtextfont{}$\suf{70}$}
      \psfrag{75}{\figtextfont{}$\suf{75}$}
      \psfrag{80}{\figtextfont{}$\suf{80}$}
      \psfrag{85}{\figtextfont{}$\suf{85}$}
      \psfrag{90}{\figtextfont{}$\suf{90}$}
      \psfrag{95}{\figtextfont{}$\suf{95}$}
      \psfrag{100}{\figtextfont{}$\suf{100}$}
      \psfrag{105}{\figtextfont{}$\suf{105}$}
      \psfrag{110}{\figtextfont{}$\suf{110}$}
      \psfrag{115}{\figtextfont{}$\suf{115}$}
      \psfrag{120}{\figtextfont{}$\suf{120}$}
      \psfrag{121}{\figtextfont{}$\suf{121}$}
      \psfrag{125}{\figtextfont{}$\suf{125}$}
      \psfrag{130}{\figtextfont{}$\suf{130}$}
      \psfrag{135}{\figtextfont{}$\suf{135}$}
      \psfrag{140}{\figtextfont{}$\suf{140}$}
      \psfrag{145}{\figtextfont{}$\suf{145}$}
      \psfrag{150}{\figtextfont{}$\suf{150}$}
      \psfrag{155}{\figtextfont{}$\suf{155}$}
      \psfrag{160}{\figtextfont{}$\suf{160}$}
      \psfrag{165}{\figtextfont{}$\suf{165}$}
      \psfrag{170}{\figtextfont{}$\suf{170}$}
      \psfrag{175}{\figtextfont{}$\suf{175}$}
      \psfrag{180}{\figtextfont{}$\suf{180}$}
      \psfrag{c0}{\figtextfont{}$C_{0}$}
      \psfrag{c1}{\figtextfont{}$C_{1}$}
      \psfrag{c2}{\figtextfont{}$C_{2}$}
      \psfrag{c3}{\figtextfont{}$C_{3}$}
      \psfrag{c4}{\figtextfont{}$C_{4}$}
      \psfrag{c5}{\figtextfont{}$C_{5}$}
      \psfrag{c6}{\figtextfont{}$C_{6}$}
      \psfrag{c7}{\figtextfont{}$C_{7}$}
      \psfrag{c8}{\figtextfont{}$C_{8}$}
      \psfrag{c9}{\figtextfont{}$C_{9}$}
      \psfrag{c10}{\figtextfont{}$C_{10}$}
      \psfrag{c11}{\figtextfont{}$C_{11}$}
      \psfrag{c12}{\figtextfont{}$C_{12}$}
      \psfrag{c13}{\figtextfont{}$C_{13}$}
      \psfrag{c14}{\figtextfont{}$C_{14}$}
      \psfrag{c15}{\figtextfont{}$C_{15}$}
      \psfrag{c16}{\figtextfont{}$C_{16}$}
      \psfrag{c17}{\figtextfont{}$C_{17}$}
      \psfrag{c18}{\figtextfont{}$C_{18}$}
      \psfrag{c19}{\figtextfont{}$C_{19}$}
      \psfrag{c20}{\figtextfont{}$C_{20}$}
      \psfrag{c21}{\figtextfont{}$C_{21}$}
      \psfrag{c22}{\figtextfont{}$C_{22}$}
      \psfrag{c23}{\figtextfont{}$C_{23}$}
      \psfrag{c24}{\figtextfont{}$C_{24}$}
      \psfrag{c25}{\figtextfont{}$C_{25}$}
      \psfrag{c26}{\figtextfont{}$C_{26}$}
      \psfrag{c27}{\figtextfont{}$C_{27}$}
      \psfrag{c28}{\figtextfont{}$C_{28}$}
      \psfrag{c29}{\figtextfont{}$C_{29}$}
      \psfrag{c30}{\figtextfont{}$C_{30}$}
      \psfrag{c31}{\figtextfont{}$C_{31}$}
      \psfrag{c32}{\figtextfont{}$C_{32}$}
      \psfrag{c33}{\figtextfont{}$C_{33}$}
      \psfrag{c34}{\figtextfont{}$C_{34}$}
      \psfrag{c35}{\figtextfont{}$C_{35}$}
      \psfrag{c36}{\figtextfont{}$C_{36}$}
      \psfrag{c37}{\figtextfont{}$C_{37}$}
      \psfrag{c38}{\figtextfont{}$C_{38}$}
      \psfrag{c39}{\figtextfont{}$C_{39}$}
      \psfrag{p0}{\figtextfont{}$\subpro{0}$}
      \psfrag{p1}{\figtextfont{}$\subpro{1}$}
      \psfrag{p2}{\figtextfont{}$\subpro{2}$}
      \psfrag{p3}{\figtextfont{}$\subpro{3}$}
      \psfrag{p4}{\figtextfont{}$\subpro{4}$}
      \psfrag{p5}{\figtextfont{}$\subpro{5}$}
      \psfrag{p6}{\figtextfont{}$\subpro{6}$}
      \psfrag{p7}{\figtextfont{}$\subpro{7}$}
      \psfrag{p8}{\figtextfont{}$\subpro{8}$}
      \psfrag{p9}{\figtextfont{}$\subpro{9}$}
      \psfrag{p10}{\figtextfont{}$\subpro{10}$}
      \psfrag{p11}{\figtextfont{}$\subpro{11}$}
      \psfrag{p12}{\figtextfont{}$\subpro{12}$}
      \psfrag{p13}{\figtextfont{}$\subpro{13}$}
      \psfrag{p14}{\figtextfont{}$\subpro{14}$}
      \psfrag{p15}{\figtextfont{}$\subpro{15}$}
      \psfrag{p16}{\figtextfont{}$\subpro{16}$}
      \psfrag{p17}{\figtextfont{}$\subpro{17}$}
      \psfrag{p18}{\figtextfont{}$\subpro{18}$}
      \psfrag{p19}{\figtextfont{}$\subpro{19}$}
      \psfrag{p20}{\figtextfont{}$\subpro{20}$}
      \psfrag{p21}{\figtextfont{}$\subpro{21}$}
      \psfrag{p22}{\figtextfont{}$\subpro{22}$}
      \psfrag{p23}{\figtextfont{}$\subpro{23}$}
      \psfrag{p24}{\figtextfont{}$\subpro{24}$}
      \psfrag{p25}{\figtextfont{}$\subpro{25}$}
      \psfrag{p26}{\figtextfont{}$\subpro{26}$}
      \psfrag{p27}{\figtextfont{}$\subpro{27}$}
      \psfrag{p28}{\figtextfont{}$\subpro{28}$}
      \psfrag{p29}{\figtextfont{}$\subpro{29}$}
      \psfrag{p30}{\figtextfont{}$\subpro{30}$}
      \psfrag{p31}{\figtextfont{}$\subpro{31}$}
      \psfrag{p32}{\figtextfont{}$\subpro{32}$}
      \psfrag{p33}{\figtextfont{}$\subpro{33}$}
      \psfrag{ex}{\figtextfont{}$e$}
      \psfrag{so}{\figtextfont{}$s$}
      \psfrag{un}{\figtextfont{}$u$}
      \psfrag{lead}{\figtextfont{}Leading sub.}
      \psfrag{cont}{\figtextfont{}Contact sub.}
      \psfrag{skip}{\figtextfont{}Skip link}
      \psfrag{guide}{\figtextfont{}Guide link}
      \psfrag{unso}{\figtextfont{}Unsolved sub.}
      \psfrag{exa}{\figtextfont{}Exhausted sub.}
      \psfrag{solv}{\figtextfont{}Solved sub.}
      \psfrag{a1}{$A_{1}$}
      \psfrag{a2}{$A_{2}$}
      \psfrag{a3}{$A_{3}$}
      \psfrag{a4}{$A_{4}$}
      \psfrag{a5}{$A_{5}$}
      \psfrag{a\$}{$A_0$}
      \psfrag{rs}{\figtextfont{}Rank sets}
      \psfrag{rs1}{\figtextfont{}$\subprorank{2}=\set{20,22,26}$} 
      \psfrag{rs2}{\figtextfont{}$\subprorank{3}=\set{42,49}$}
      \psfrag{rs3}{\figtextfont{}$\subprorank{6}=\set{68}$}
      \psfrag{rs4}{\figtextfont{}$\subprorank{7}=\set{93,99}$}
      \psfrag{rs5}{\figtextfont{}$\subprorank{9}=\set{77}$}
      \psfrag{rs6}{\figtextfont{}$\subprorank{10}=\set{103,105}$}
      \psfrag{rs7}{\figtextfont{}$\subprorank{13}=\set{125}$}
      \psfrag{rs8}{\figtextfont{}$\subprorank{14}=\set{81,83}$}
      \psfrag{rs9}{\figtextfont{}$\subprorank{15}=\set{116}$}
      \psfrag{rs10}{\figtextfont{}$\subprorank{16}=\set{128}$}
      \psfrag{rs11}{\figtextfont{}$\subprorank{17}=\set{121,122}$}
      \psfrag{rs12}{\figtextfont{}$\subprorank{18}=\set{130,137}$}
      \psfrag{rs13}{\figtextfont{}$\subprorank{19}=\set{139}$}
      \psfrag{rs14}{\figtextfont{}$\subprorank{20}=\set{157}$}
      \psfrag{rs15}{\figtextfont{}$\subprorank{23}=\set{153}$}
      \psfrag{rs16}{\figtextfont{}$\subprorank{24}=\set{65}$}
      \psfrag{rs17}{\figtextfont{}$\subprorank{25}=\set{158}$}
      \psfrag{rs18}{\figtextfont{}$\subprorank{26}=\set{163,168,173}$}
      \psfrag{rs19}{\figtextfont{}$\subprorank{27}=\set{175}$}
      \psfrag{rs20}{\figtextfont{}$\subprorank{28}=\set{179}$}
      \psfrag{rs21}{\figtextfont{}$\subprorank{29}=\set{36}$}
      \psfrag{rs22}{\figtextfont{}$\subprorank{31}=\set{162}$}
      \psfrag{rs23}{\figtextfont{}$\subprorank{32}=\set{57}$}
      \psfrag{rs24}{\figtextfont{}$\subprorank{33}=\set{150}$}
      \psfrag{rc}{\begin{minipage}{0.15\textwidth}\figtextfont{}Lists of columns of contact nodes
                  \end{minipage}}
      \psfrag{rc0}{\figtextfont{}$\subpro{0}\rightarrow\seq{C_{0}}$}            
      \psfrag{rc1}{\figtextfont{}$\subpro{3}\rightarrow\seq{C_{16},C_{17}}$}
      \psfrag{rc2}{\figtextfont{}$\subpro{6}\rightarrow\seq{C_{6}}$} 
      \psfrag{rc3}{\figtextfont{}$\subpro{9}\rightarrow\seq{C_{4},C_{5}}$}      
      \psfrag{rc4}{\figtextfont{}$\subpro{13}\rightarrow\seq{C_{13},C_{14},C_{15}}$}
      \psfrag{rc5}{\figtextfont{}$\subpro{14}\rightarrow\seq{C_{1},C_{2},C_{3}}$}
      \psfrag{rc6}{\figtextfont{}$\subpro{16}\rightarrow\seq{C_{11},C_{12}}$}
      \psfrag{rc7}{\figtextfont{}$\subpro{17}\rightarrow\seq{C_{7},C_{8},C_{9},C_{10}}$}
      \psfrag{rc8}{\figtextfont{}$\subpro{19}\rightarrow\seq{C_{21},C_{22},C_{23}}$}
      \psfrag{rc9}{\figtextfont{}$\subpro{20}\rightarrow\seq{C_{24},C_{25}}$}
      \psfrag{rc10}{\figtextfont{}$\subpro{22}\rightarrow\seq{C_{38},C_{39}}$}
      \psfrag{rc11}{\figtextfont{}$\subpro{23}\rightarrow\seq{C_{18},C_{19},C_{20}}$}
      \psfrag{rc12}{\figtextfont{}$\subpro{24}\rightarrow\seq{C_{33},C_{34}}$}
      \psfrag{rc13}{\figtextfont{}$\subpro{25}\rightarrow\seq{C_{37}}$}
      \psfrag{rc14}{\figtextfont{}$\subpro{27}\rightarrow\seq{C_{29}}$}
      \psfrag{rc15}{\figtextfont{}$\subpro{28}\rightarrow\seq{C_{32}}$}
      \psfrag{rc16}{\figtextfont{}$\subpro{29}\rightarrow\seq{C_{36}}$}
      \psfrag{rc17}{\figtextfont{}$\subpro{31}\rightarrow\seq{C_{30},C_{31}}$}
      \psfrag{rc18}{\figtextfont{}$\subpro{33}\rightarrow\seq{C_{26},C_{27},C_{28}}$}
      \psfrag{sl}{\figtextfont{}\sublist{}$=\langle\subpro{0},\subpro{1},\subpro{2},\subpro{29},
      \subpro{3},\subpro{32},\subpro{4},\subpro{24},\subpro{6},\subpro{8},\subpro{9},
      \subpro{11},\subpro{14},\subpro{5},\subpro{7},\subpro{10},\subpro{12},$}
      \psfrag{sl2}{\figtextfont{}$\subpro{15},\subpro{17},\subpro{13},
      \subpro{16},\subpro{18},\subpro{19},
      \subpro{30},\subpro{21},\subpro{33},\subpro{23},\subpro{20},\subpro{22},
      \subpro{25},\subpro{31},\subpro{26},\subpro{27},\subpro{28}\rangle$}
      \psfrag{rf}{\figtextfont{}Subproblems' fields}
      \psfrag{rf0}{\figtextfont{}$\card{\subpro{0}}=1$, $\slab{0}=0$}
      \psfrag{rf1}{\figtextfont{}$\card{\subpro{1}}=17$, $\slab{1}=1$}
      \psfrag{rf2}{\figtextfont{}$\card{\subpro{2}}=17$, $\slab{2}=18$}
      \psfrag{rf3}{\figtextfont{}$\card{\subpro{3}}=17$, $\slab{3}=39$}
      \psfrag{rf4}{\figtextfont{}$\card{\subpro{4}}=6$, $\slab{4}=59$}
      \psfrag{rf5}{\figtextfont{}$\card{\subpro{5}}=9$, $\slab{5}=84$}
      \psfrag{rf6}{\figtextfont{}$\card{\subpro{6}}=6$, $\slab{6}=67$}
      \psfrag{rf7}{\figtextfont{}$\card{\subpro{7}}=9$, $\slab{7}=93$}
      \psfrag{rf8}{\figtextfont{}$\card{\subpro{8}}=3$, $\slab{8}=73$}
      \psfrag{rf9}{\figtextfont{}$\card{\subpro{9}}=2$, $\slab{9}=76$}
      \psfrag{rf10}{\figtextfont{}$\card{\subpro{10}}=9$, $\slab{10}=102$}
      \psfrag{rf11}{\figtextfont{}$\card{\subpro{11}}=3$, $\slab{11}=78$}
      \psfrag{rf12}{\figtextfont{}$\card{\subpro{12}}=4$, $\slab{12}=111$}
      \psfrag{rf13}{\figtextfont{}$\card{\subpro{13}}=5$, $\slab{13}=123$}
      \psfrag{rf14}{\figtextfont{}$\card{\subpro{14}}=3$, $\slab{14}=81$}
      \psfrag{rf15}{\figtextfont{}$\card{\subpro{15}}=4$, $\slab{15}=115$}
      \psfrag{rf16}{\figtextfont{}$\card{\subpro{16}}=2$, $\slab{16}=128$}
      \psfrag{rf17}{\figtextfont{}$\card{\subpro{17}}=4$, $\slab{17}=119$}
      \psfrag{rf18}{\figtextfont{}$\card{\subpro{18}}=8$, $\slab{18}=130$}
      \psfrag{rf19}{\figtextfont{}$\card{\subpro{19}}=6$, $\slab{19}=138$}
      \psfrag{rf20}{\figtextfont{}$\card{\subpro{20}}=2$, $\slab{20}=156$}
      \psfrag{rf21}{\figtextfont{}$\card{\subpro{21}}=3$, $\slab{21}=147$}
      \psfrag{rf22}{\figtextfont{}$\card{\subpro{22}}=2$, $\slab{22}=158$}
      \psfrag{rf23}{\figtextfont{}$\card{\subpro{23}}=3$, $\slab{23}=153$}
      \psfrag{rf24}{\figtextfont{}$\card{\subpro{24}}=2$, $\slab{24}=65$}
      \psfrag{rf25}{\figtextfont{}$\card{\subpro{25}}=1$, $\slab{25}=160$}
      \psfrag{rf26}{\figtextfont{}$\card{\subpro{26}}=11$, $\slab{26}=163$}
      \psfrag{rf27}{\figtextfont{}$\card{\subpro{27}}=4$, $\slab{27}=174$}
      \psfrag{rf28}{\figtextfont{}$\card{\subpro{28}}=3$, $\slab{28}=178$}
      \psfrag{rf29}{\figtextfont{}$\card{\subpro{29}}=4$, $\slab{29}=35$}
      \psfrag{rf30}{\figtextfont{}$\card{\subpro{30}}=3$, $\slab{30}=144$}
      \psfrag{rf31}{\figtextfont{}$\card{\subpro{31}}=2$, $\slab{31}=161$}
      \psfrag{rf32}{\figtextfont{}$\card{\subpro{32}}=3$, $\slab{32}=56$}
      \psfrag{rf33}{\figtextfont{}$\card{\subpro{33}}=3$, $\slab{33}=150$}
      \psfrag{col1}{\figtextfont{}$\ppair{53,60}$}
      \psfrag{col2}{\figtextfont{}$\ppair{30,33}$}
      \psfrag{col3}{\figtextfont{}$\ppair{70,73}$}
      \includegraphics*[scale=0.7]{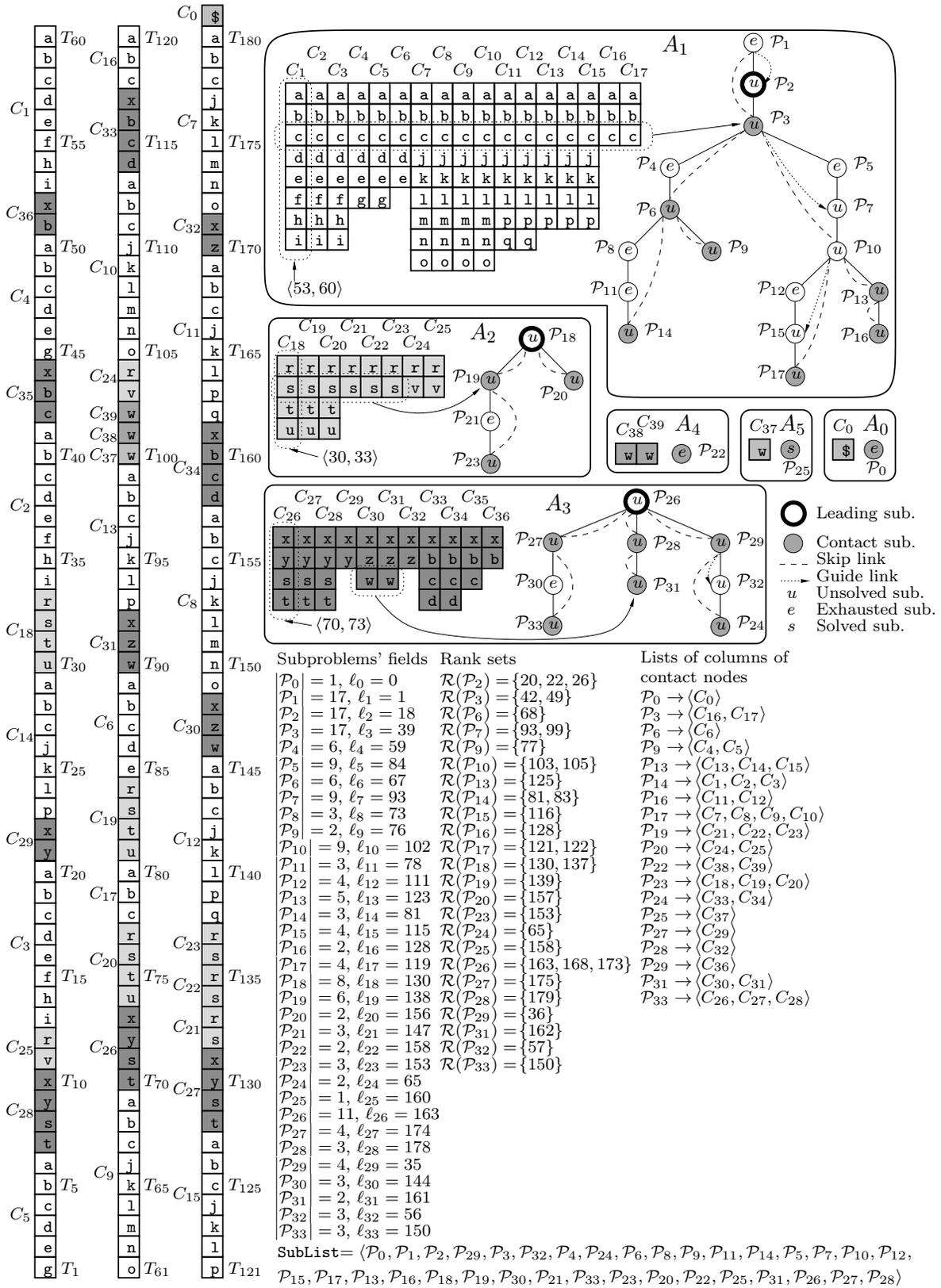}
  \end{psfrags}
  \end{center}
\caption{``Snapshot'' of the computation for a text $T$ with $N=181$ 
symbols and $K=34$ ranks in $\mathcal{R}$. Agglomerates $A_{1}$, $A_{2}$, and $A_{3}$ 
are unsolved, while $A_0$, $A_{4}$ and $A_{5}$ are exhausted. The columns of 
each agglomerate are pictured beside it (as contiguous substrings of 
$T$). $T$ is pictured as a partitioning of the columns of the 
agglomerates.}
\label{fig:big:full}
\end{figure}

\end{document}